\providecommand{\TPDF}[2]{\texorpdfstring{#1}{#2}}
\newcommand{\IncGraphPageExt}[4]{
   \ifpdf
      \includegraphics[page=#3,#4]{{#1/#2}}
   \else
      \includegraphics[#4]{{#1/#2/#2_#3}}
   \fi
}
\newcommand{\aftermathA}{\par\vspace{-\baselineskip}}
\newenvironment{fake_proof}{{}}{\hfill{\hfill\rule{2mm}{2mm}}}
\newcommand{\Frechet}{Fr\'{e}chet\xspace}%
\newcommand{\distFr}[2]{d_{\EuScript{F}}\pth{#1, #2}}
\newcommand{\distCmd}[1]{\left\| {#1} \right\|}
\newcommand{\distX}[2]{\distCmd{#1 - #2}}
\providecommand{\nfrac}[2]{#1/#2}
\newcommand{\curveA}{\pi}
\newcommand{\subcurveA}{\widehat{\pi}}
\newcommand{\curveAs}{\pi'}
\newcommand{\curveB}{\sigma}
\newcommand{\subcurveB}{\widehat{\sigma}}
\newcommand{\curveBs}{\sigma'}
\newcommand{\curveC}{\tau}
\newcommand{\curveD}{\eta}
\newcommand{\Holder}{H\"older\xspace}
\newcommand{\segA}{\mathsf{u}}
\newcommand{\segB}{\mathsf{v}}
\newcommand{\pntA}{p}
\newcommand{\pntB}{q}
\newcommand{\pntC}{s}
\newcommand{\pntD}{t}
\newcommand{\pntE}{r}
\newcommand{\BallC}{B}
\newcommand{\BallX}[1]{B\pth{#1}}
\newcommand{\SphereX}[1]{S\pth{#1}}
\newcommand{\lenX}[1]{\left\|{#1} \right\|}
\newcommand{\FDleqC}{\EuScript{R}}
\newcommand{\FDleq}[1]{\FDleqC_{\leq #1}}
\newcommand{\ShowComments}[1]{}
\newcommand{\FullFDleqC}{D}
\newcommand{\FullFDleq}[1]{\FullFDleqC_{\leq #1}}
\newcommand{\NleqC}{N}
\newcommand{\Nleq}[1]{\NleqC_{\leq #1}}
\newcommand{\Graph}{G}
\newcommand{\Cone}{\mathcal{V}}
\newcommand{\Family}{\mathcal{F}}
\providecommand{\pth}[2][\!]{#1\left({#2}\right)}
\providecommand{\brc}[1]{\left\{ {#1} \right\}}
\providecommand{\sep}[1]{\,\left|\, {#1} \MakeBig\right.}
\providecommand{\MakeBig}{\rule[-.2cm]{0cm}{0.4cm}}
\newcommand{\diameterX}[1]{\mathrm{d{}i{}am}\pth{#1}}
\newcommand{\constA}{c_1}
\newcommand{\constB}{c_2}
\newcommand{\constC}{c_3}
\newcommand{\cSimpB}{c}
\newcommand{\cSimpBVal}{1/3}
\newcommand{\PntSet}{{\mathsf{P}}}
\newcommand{\PntSetA}{{\mathsf{Q}}}
\newcommand{\sRadius}{\mu}
\newcommand{\Radius}{\mathsf{r}}
\newcommand{\Grid}{\EuScript{G}}
\newcommand{\CellXY}[2]{C_{#1,#2}}
\providecommand{\MakeBig}{\rule[-.2cm]{0cm}{0.4cm}}
\providecommand{\MakeSBig}{\rule[0.0cm]{0.0cm}{0.35cm}} 
\newcommand{\IVCellXY}[2]{I^v_{#1,#2}}
\newcommand{\IHCellXY}[2]{I^h_{#1,#2}}
\newcommand{\RVCellXY}[2]{R^v_{#1,#2}}
\newcommand{\RHCellXY}[2]{R^h_{#1,#2}}
\newcommand{\Cell}{\mathsf{R}}
\newcommand{\CubeSet}{\EuScript{K}}
\newcommand{\Cube}{C}
\providecommand{\cardin}[1]{\lvert#1\rvert}
\newcommand{\VertexSet}[1]{{V}\pth{#1}}
\newcommand{\KSet}{\EuScript{K}}
\newcommand{\veEvents}{Z}
\newcommand{\sEvents}{Z}
\providecommand{\eps}{{\varepsilon}}%
\providecommand{\ceil}[1]{\left\lceil {#1} \right\rceil}
\providecommand{\floor}[1]{\left\lfloor {#1} \right\rfloor}
\newcommand{\density}{\phi}
\renewcommand{\th}{t{}h\xspace}
\newcommand{\nVertices}[1]{\mathsf{n}_{#1}}
\newcommand{\nVCell}[2]{\cardin{{#1} \sqcap{#2}}\,}
\newcommand{\RX}[1]{\begin{minipage}{0.66cm}
       \vspace{-0.6cm}
       $\Longrightarrow${#1}
       \vspace{0.6cm}
       \end{minipage}}
\newcommand{\Hippodrome}{\mathcal{H}}
\newcommand{\SimpComplexityC}{\mathsf{N}}
\newcommand{\SimpComplexity}[3]{\SimpComplexityC\pth{#1, #2, #3}}
\newlength{\savedparindent}
\newcommand{\SaveIndent}{\setlength{\savedparindent}{\parindent}}
\newcommand{\RestoreIndent}{\setlength{\parindent}{\savedparindent}}
\newlength{\ppicX}
\newlength{\ppicY}
\newcommand{\ParaWPic}[2]{
   \settowidth{\ppicX}{#1}  
   \setlength{\ppicY}{\linewidth}
   \addtolength{\ppicY}{-\ppicX}
   \addtolength{\ppicY}{-0.30cm}

   \vspace{-0.2cm}
   \SaveIndent
   \noindent\begin{minipage}[t]{\ppicY}%
       \vspace{0pt}
       \RestoreIndent
       #2
   \end{minipage}%
   \hfill %
   \begin{minipage}[t]{\ppicX}
       \vspace{0pt}
       #1
       \vfill
   \end{minipage}
   \smallskip
}
\newcommand{\atgen}{\symbol{'100}}
\newcommand{\SarielThanks}[1]{\thanks{Department of Computer
      Science; 
      University of Illinois; 
      201 N. Goodwin Avenue;
      Urbana, IL, 61801, USA;
      {\tt sariel\atgen{}uiuc.edu}; {\tt
         \url{http://www.uiuc.edu/\string~sariel/}.} #1}}
\definecolor{blue25}{rgb}{0,0,0.45}%
\newcommand{\emphic}[2]{%
   \textcolor{blue25}{%
      \textbf{\emph{#1}}}%
   \index{#2}}
\newcommand{\emphi}[1]{\emphic{#1}{#1}}
\newtheorem{theorem}{Theorem}[section]
\newtheorem{lemma}[theorem]{Lemma}
\newtheorem{claim}[theorem]{Claim}
\newcommand{\pbrc}[1]{\left[ {#1} \MakeBig \right]}
\renewcommand{\Re}{{\rm I\!\hspace{-0.025em} R}}
\newtheorem{defn}[theorem]{Definition}}
\newtheorem{algorithm}[theorem]{Algorithm}}
\newtheorem{remark}[theorem]{Remark}}
\newtheorem{observation}[theorem]{Observation}}
\newcommand{\AlgorithmI}[1]{{\textcolor[named]{RedViolet}{\texttt{\bf{#1}}}}}
\newcommand{\Algorithm}[1]{{\AlgorithmI{#1}\index{#1@{\AlgorithmI{#1}}}}}
\newcommand{\seclab}[1]{\label{sec:#1}}
\newcommand{\secref}[1]{Section~\ref{sec:#1}}
\newcommand{\clmlab}[1]{\label{claim:#1}}
\newcommand{\clmref}[1]{Claim~\ref{claim:#1}}
\newcommand{\obslab}[1]{\label{observation:#1}}
\newcommand{\obsref}[1]{Observation~\ref{observation:#1}}
\newcommand{\defref}[1]{Definition~\ref{def:#1}}
\providecommand{\deflab}[1]{\label{def:#1}}
\newcommand{\algref}[1]{Algorithm~\ref{alg:#1}}
\providecommand{\alglab}[1]{\label{alg:#1}}
\newcommand{\lemlab}[1]{\label{lemma:#1}}
\newcommand{\lemref}[1]{Lemma~\ref{lemma:#1}}
\newcommand{\apndlab}[1]{\label{apnd:#1}}
\newcommand{\apndref}[1]{Appendix~\ref{apnd:#1}}
\newcommand{\thmref}[1]{Theorem~\ref{theo:#1}}
\newcommand{\thmlab}[1]{{\label{theo:#1}}}
\newcommand{\remlab}[1]{\label{rem:#1}}
\newcommand{\remref}[1]{Remark~\ref{rem:#1}}
\newenvironment{proof}{\trivlist\item[]\emph{Proof}:}%
                  {\unskip\nobreak\hskip 1em plus 1fil\nobreak%
                           \rule{2mm}{2mm}
                           \parfillskip=0pt%
                           \endtrivlist}
\providecommand{\ds}{\displaystyle}
\newcommand{\simpX}[1]{\mathrm{s{i}m{p}l}\pth{#1}}
\newcommand{\deciderFr}{\Algorithm{decider}\xspace}
\newcommand{\intervalFr}{\Algorithm{search{}Interval}\xspace}
\newcommand{\intervalExactFr}{\Algorithm{search{}Interval{}No{}Si{}mp}\xspace}
\newcommand{\approxFr}      {\Algorithm{a{}p{}r{}x{}\Frechet{}I}\xspace}
\newcommand{\approxFrDirect}{\Algorithm{a{}p{}r{}x{}\Frechet{}No{}Si{}mp}\xspace}
\newcommand{\approxBinarySearch}{\Algorithm%
   {search{}Events}\xspace}
\newcommand{\approxDistances}{\Algorithm{approx{}Distances}\xspace}
\newcommand{\etal}{\textit{e{}t~a{}l.}\xspace}
\newcommand{\widthX}[2]{{\mathrm{width}}_{#2}\pth{#1}}
\newcommand{\R}{\mathbb{R}}
\newcommand{\XSaysExt}[2]{{
      ~\\
      \fbox{\begin{minipage}{0.99\linewidth}
             {$\rule[-0.12cm]{0.2in}{0.5cm}$\fbox{\tt
                   #1:}} #2
         \end{minipage}}
      \marginpar{#1}}}
\newcommand{\SarielX}[1]{\XSaysExt{Sariel}{#1}}
\newcommand{\AnneX}[1]{{\XSaysExt{Anne}{#1}}}
\newcommand{\Links}{\mathsf{d}}
\newcommand{\Interval}{\mathcal{I}}
\newcommand{\wInterval}{\mathcal{J}}
\newcommand{\AlphaBetaCovered}{$(\alpha,\beta)$-covered\xspace}
\newcommand{\remove}[1]{}
\newcommand{\si}[1]{#1}
\newcommand{\BFS}{\textsc{bfs}\xspace}
\newcommand{\Pairwise}[1]{\EuScript{D}(#1)}
\newcommand{\PW}{\mathcal{W}}
\newcommand{\figlab}[1]{\label{fig:#1}}
\newcommand{\figref}[1]{Figure~\ref{fig:#1}}
\newcommand{\loLimit}{h^-}
\newcommand{\hiLimit}{h^+}
\newcommand{\ratioX}[2]{[ #1/#2 ]}
\newcommand{\approxDist}{\delta}
\newcommand{\tabref}[1]{Table~\ref{tab:#1}}
\newcommand{\tablab}[1]{\label{tab:#1}}
\newcommand{\VolumeX}[1]{\mathrm{vol}\pth{#1}}
\newcommand{\resemblance}{relative free space complexity\xspace}
\newcommand{\resemblanceII}{relative complexity of the free
   space\xspace} \newcommand{\Resemblance}{Relative Free Space
   Complexity\xspace}
\newcommand{\resemblanceX}[1]{\ensuremath{#1}-relative free space
   complexity\xspace}
\newcommand{\relevant}{reachable\xspace}
\newcommand{\Relevant}{Reachable\xspace}
\newcommand{\cDim}{c_{d}}
\begin{document}

\title{Approximating the \Frechet Distance for Realistic Curves in
   Near Linear Time%
   \thanks{The latest full version of this paper is available online
      \cite{dhw-afdrc-10}.}}

\author{%
   Anne Driemel%
   \thanks{Department of Information and Computing Sciences; Utrecht
      University; The Netherlands;
      \texttt{anne}\hspace{0cm}\texttt{\atgen{}cs.uu.nl}.  This work
      has been supported by the Netherlands Organisation for
      Scientific Research (NWO) under \si{RIMGA} (Realistic Input Models
      for Geographic Applications).} %
   \and%
   Sariel Har-Peled%
   \SarielThanks{Work on this paper was partially supported by a NSF
      AF award CCF-0915984.}%
   \and %
   Carola Wenk%
   \thanks{Department of Computer Science; University of Texas at San
      Antonio; One UTSA Circle; San Antonio, TX 78249-0667, USA; {\tt
         carola\atgen{}cs.utsa.edu}.  This work has been supported by
      NSF CAREER award CCF-0643597.}  }

\date{\today}

\maketitle

\begin{abstract}
    We present a simple and practical $(1+\eps)$-approximation
    algorithm for the \Frechet distance between two polygonal curves
    in $\Re^d$.  To analyze this algorithm we introduce a new
    realistic family of curves, $c$-packed curves, that is closed
    under simplification. We believe the notion of $c$-packed curves
    to be of independent interest.  We show that our algorithm has
    near linear running time for $c$-packed polygonal curves, and
    similar results for other input models, such as low density
    polygonal curves.
\end{abstract}

\section{Introduction}

Comparing geometric shapes is a task that arises in a wide arena of
applications. The \Frechet{} distance and its variants have been used,
to this end, to compare curves in applications such as dynamic
time-warping \cite{kp-sudtw-99}, speech recognition
\cite{khmtc-pgbhp-98}, signature and handwriting recognition
\cite{mp-cdtw-99, skb-fdba-07}, matching of time series in databases
\cite{kks-osmut-05}, as well as geographic applications, such as
map-matching of vehicle tracking data
\cite{bpsw-mmvtd-05,wsp-anmms-06}, and moving objects analysis
\cite{bbg-dsfm-08, bbgll-dcpcs-08}.

\parpic[r]{\includegraphics{figs/\si{hausdorff_sucks_II}}}

Informally, the \Frechet distance between two curves is the maximum
distance a point on the first curve has to travel as this curve is
being continuously deformed into the second curve, see
\secref{f:distance} for the formal definition.  Unlike the Hausdorff
distance, which is solely based on nearest neighbor distances between
points on the curves, the \Frechet distance requires continuous and
order-preserving assignments of points and hence is better suited for
comparing curves with respect to their intrinsic structure.

\parpic[l]{\includegraphics{figs/\si{hausdorff_sucks}}}
The \Frechet distance between two curves might be arbitrarily larger
than their Hausdorff distance, as demonstrated by the figure on the
left, and as this example shows, it seems to be a more natural measure
of similarity between curves.

\paragraph{Previous results.}
For two polygonal curves of total complexity $n$ in the plane, their
\Frechet distance can be computed in $O(n^2\log n)$ time
\cite{ag-cfdbt-95}, and their Hausdorff distance can be computed in
$O(n\log n)$ time \cite{a-cgcs-09}. It has been an open problem to
find a subquadratic algorithm for computing the \Frechet distance for
two curves. For the problem of deciding whether the \Frechet distance
between two curves is smaller or equal a given value a lower bound of
$\Omega(n \log n)$ was given by \cite{bbkrw-wtd-07}. Recently, Alt
\cite{a-cgcs-09} conjectured that the decision problem may be
3SUM-hard.  The only subquadratic algorithms known are for quite
restricted classes of curves such as for closed convex curves and for
$\kappa$-bounded curves \cite{akw-cdmpc-04}. For a curve to be
$\kappa$-bounded means, roughly, that for any two points on the curve
the portion of the curve in between them cannot be further away from
either point than $\kappa/2$ times the distance between the two
points. For closed convex curves the \Frechet distance equals the
Hausdorff distance and for $\kappa$-bounded curves the \Frechet
distance is at most $(1+\kappa)$ times the Hausdorff distance, and
hence the $O(n\log n)$ algorithm for the Hausdorff distance applies.

Aronov \etal \cite{ahkww-fdcr-06} provided a near linear time
$(1+\eps)$-approximation algorithm for the \emph{discrete} \Frechet
distance, which only considers distances between vertices of the
curves. Their algorithm works for \emph{backbone curves}, which are
used to model protein backbones in molecular biology. Backbone curves
are required to have, roughly, unit edge length and a minimal distance
between any pair of vertices.  They use curve simplification to speed
up their algorithm.  Agarwal \etal \cite{ahmw-nltaa-05} studied fast
simplification that preserves the \Frechet distance.

\paragraph{The input model.}
We introduce a new class of curves, called $c$-packed curves, for
which we can approximate the \Frechet{} distance quickly, given that
the constant $c$ is small.  Intuitively, the constant $c$ measures how
``\emph{unrealistic}'' the input is.  We compare this new input model
to previous models such as \emph{fatness} and \emph{low density}, as
well as \emph{$\kappa$-boundedness}.  These so-called \emph{realistic
   input models} are commonly used for the analysis of problems where
the worst case complexity is dominated by degenerate or contrived
configurations which are highly unlikely to occur in practice, see
\cite{bksv-rimga-02} for an overview.

A curve $\curveA$ is \emphi{$c$-packed} if the total length of
$\curveA$ inside any ball is bounded by $c$ times the radius of the
ball.  A $\kappa$-bounded curve might have arbitrary length while
maintaining a finite diameter, and as such may not be $c$-packed, see
\secref{kappa:bounded:curves}. But unlike $\kappa$-bounded curves, the
\Frechet distance between two $c$-packed curves might be arbitrarily
larger than their Hausdorff distance.  Indeed, $c$-packed curves are
considerably more general and a more natural family of curves.  For
example, a $c$-packed curve might self cross and revisit the same
location several times, and the class of $c$-packed curves is closed
under concatenation, none of which is true for $\kappa$-bounded
curves.  Intuitively, $c$-packed curves behave reasonably in any
resolution.

\parpic[r]{\includegraphics{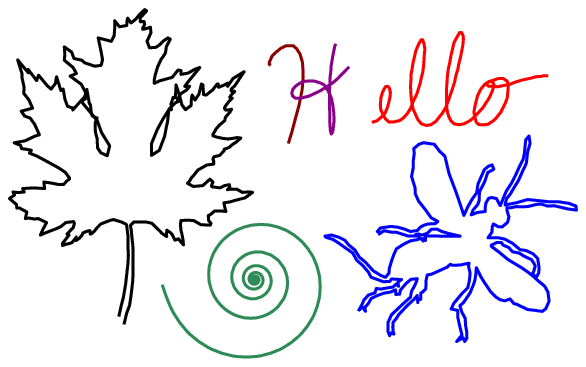}}
See the figure on the right for a few examples of $c$-packed curves.
The boundary of convex polygons, algebraic curves of bounded maximum
degree, the boundary of \AlphaBetaCovered shapes \cite{e-cuabc-05},
and the boundary of $\gamma$-fat shapes \cite{d-ibucfo-08} are all
$c$-packed. Indeed, the boundaries of \AlphaBetaCovered shapes and
$\gamma$-fat shapes are assumed to be formed by a constant number of
algebraic curves of bounded maximum degree.  If one removes the
requirement that a $\gamma$-fat curve be of bounded descriptive
complexity, then also fractal curves, like the Koch's snowflake, which
can have infinite length within a bounded area, can be fat
\cite{bcd-npfo-11}. Naturally, these curves cannot be $c$-packed.
Interestingly, one can show that \AlphaBetaCovered polygons are
$c$-packed even if they have unbounded complexity, see
\apndref{a:b:covered:c:pack} and also the result of Bose \etal
\cite{bcd-npfo-11}.

It is easy to verify that $c$-packed curves are also low density
\cite{bksv-rimga-02}, but a low density curve might not be $c$-packed,
for any bounded $c$, see \secref{low:density}. However, the class of
$c$-packed curves is closed under simplification, see
\lemref{6:c:packed}, and this is not true for low density curves.

\newcommand{\UBig}{\rule[-.2cm]{0cm}{0.7cm}}
\newcommand{\UBigger}{\rule[-.2cm]{0cm}{1cm}}
\begin{table}[t]
    \centerline{
       \begin{tabular}{|l|l|l|}
           \hline
           Curves type & Running time  & See\\
           \hline
           \hline     
           \hline
           $c$-packed & $\UBig \ds O( c n /\eps + cn \log n)$ &
           \thmref{main:c:packed}\\
           \hline
           $\kappa$-straight & 
           Same as $2\kappa$-packed&
           \lemref{kappa:straight}\\
           \hline
           $\kappa$-bounded & 
           $\UBig \ds O\pth{ (\kappa
              /\eps)^d n + \kappa^d n \log n}$ & 
           \thmref{main:kappa:bounded}\\
           \hline
           $O(1)$-low density &  
           \begin{minipage}{0.38\linewidth}
               \smallskip $\ds O \pth{ \frac{n^{2(d-1)/d}}{\eps^2} +
                  n^{2(d-1)/d} \log n }$ \smallskip
           \end{minipage}
           &
           \thmref{main:low:density}    \\
           \hline
           $c$-packed \& closed &
           $\UBig \ds O\pth{ c^2 n/\eps^2 + 
              c^2 n  \log n}$ &
           \thmref{main:closed}\\        
           \hline\hline
       \end{tabular}
    }
    
    \caption{Summary of new results for computing 
       a $(1+\eps)$-approximation to the \Frechet
       distance between two curves $\curveA$ and $\curveB$ with $n$
       vertices in $\Re^d$. 
    }%
       
    \tablab{o:g:i:b:o:g:i}
\end{table}

\paragraph{Our results.}
We present a new algorithm for computing a $(1+\eps)$-approximation of
the \Frechet distance for polygonal curves in $\Re^d$.  Underlying the
algorithm are several new insights.  First, we use the idea of curve
simplification to reduce the complexity of the free space diagram, as
this simplification results in a contraction of the corresponding rows
or columns in the free space diagram. We introduce the notion of
\emph{\resemblance} in \defref{simplification:complexity} to capture
the complexity of the free space diagram of two curves, which are
simplified to the appropriate resolution.  Surprisingly, without
simplification, almost any two curves from natural families of curves
can have a free space diagram for the value realizing the \Frechet
distance that has quadratic complexity (even in the plane).  Secondly,
we present an efficient construction algorithm for this reduced size
free space diagram that enables us to solve the decision problem in
linear time in the \resemblance of the curves.  Thirdly, we prove that
monotonicity events are sufficiently close to vertex-edge events or an
approximate distance between two vertices of the curves. Therefore,
the search for the \Frechet distance can be done efficiently without
using parametric search or random sampling, by using approximate
distance selection.  Carefully combining these insights yields the new
algorithm, which has running time near linear in the \resemblance of
the input curves.

In the second part of the paper, we analyze the \resemblance for
various families of curves.  We prove that $c$-packed curves have
linear \resemblance for fixed $c$ and $\eps$.  We next prove a
subquadratic bound on the \resemblanceII of low density curves.  This
relies on a new packing lemma showing that, if the simplification of a
low density curve is long inside a relatively small area, then the
original curve must contain many vertices in the vicinity of this
region. We also prove that the \resemblance of $\kappa$-bounded curves
is linear for a fixed $\kappa$, which leads to an improvement of the
result by Alt \etal \cite{akw-cdmpc-04}.

These bounds imply that the approximation algorithm provides fast
approximation for the \Frechet distance for all these types of
curves. We also show how to adapt our algorithm to handle closed
curves.  The new results are summarized in \tabref{o:g:i:b:o:g:i}.

\paragraph{Organization.}
In \secref{prelims}, we provide some background on the \Frechet
distance and the notion of the free space diagram.  In
\secref{algorithm}, we describe the approximation algorithm that uses
simplification.  To this end, we show in \secref{relevant} that it
suffices to only compute the \relevant parts of the free space diagram
and in \secref{decision:procedure} we present a fuzzy decider
procedure and show how it can be used to make exact decisions during a
binary search for the \Frechet distance.  In \secref{search}, we deal
with the different subroutines used in the search for the \Frechet
distance and in \secref{the:algorithm} we give the resulting general
algorithm and analyze its correctness and running time, which is near
linear in the \resemblance.  In \secref{resemblance}, we bound the
\resemblance of various families of curves.  In particular, in
\secref{c:packed:curves}, we introduce the notion of $c$-packed
curves, and study their behavior under simplification.  In
\secref{kappa:bounded:curves}, we bound the \resemblance of
$\kappa$-bounded curves, and in \secref{low:density} we handle low
density curves.  In \secref{closed:curves}, we extend the algorithm to
closed curves.  We conclude with discussion in \secref{conclusions}.

\section{Preliminaries}
\seclab{prelims}

\subsection{Notations and Definitions}

Let $\curveA$ be a curve in $\R^d$; that is, a continuous mapping from
$[0,1]$ to $\Re^d$. In the following, we will identify $\curveA$ with
its range $\curveA\pth{[0,1]} \subseteq \R^d$ if it is clear from the
context. The curve $\curveA$ is \emphi{closed} if
$\curveA(0)=\curveA(1)$.  We use $\lenX{\cdot}$ to denote the
Euclidean distance as well as the length of a curve.  For a polygonal
curve $\curveA$, let $\VertexSet{\curveA}$ denote the set of vertices
of $\curveA$. For two points $\pntA$ and $\pntB$ on a curve $\curveA$,
let $\curveA[\pntA,\pntB]$ denote the portion of the curve between the
two points.

We denote with $\BallX{\pntA, r}$ the ball of radius $r$ centered at
$\pntA$, and $\SphereX{\pntA, r}$ denotes the corresponding sphere.
Given a set of numbers $U \subseteq \Re$, an \emphi{atomic interval}
of $U$ is a (possibly infinite) maximal interval on the real line that
does not contain any point of $U$ in its interior.  Let
$\Pairwise{\PntSet}$ be the set of all pairwise distances of points in
$\PntSet$.

\subsection{\Frechet Distance and the Free Space Diagram}
\seclab{f:distance}

A \emphi{reparameterization} is a bijective and continuous function
$f:[0,1]\rightarrow [0,1]$. It is \emphi{orientation-preserving} if
$f(0)=0$ and $f(1)=1$. Given two reparameterizations $f$ and $g$ for
two curves $\curveA$ and $\curveB$, respectively, define their
\emphi{width} as
\begin{align*}
    \widthX{\curveA,\curveB}{f,g} = \max_{s \in [0,1]}
    \distX{\curveA(f(s))}{\curveB(g(s))}.
\end{align*}
This can be interpreted as the maximum length of a leash one needs to
walk a dog, where the dog walks monotonically along $\curveA$
according to $f$, while the handler walks monotonically along
$\curveB$ according to $g$. In this analogy, the \Frechet distance is
the shortest possible leash admitting such a walk.

Formally, given two curves $\curveA$ and $\curveB$ in $\Re^d$, the
\emphi{\Frechet distance} between them is
\begin{align*}
    \distFr{\curveA}{\curveB} = \inf_{%
       \substack{f:[0,1] \rightarrow [0,1]\\
          g:[0,1]\rightarrow [0,1]}} \widthX{\curveA,\curveB}{f,g},
\end{align*}
where $f$ and $g$ are orientation-preserving reparameterizations of
the curves $\curveA$ and $\curveB$, respectively.  The \Frechet
distance complies with the triangle inequality; that is, for any three
curves $\curveA, \curveB$ and $\curveC$ we have that
$\distFr{\curveA}{\curveC} \leq \distFr{\curveA}{\curveB}
+\distFr{\curveB}{\curveC}$.

Let $\curveA$, $\curveB$ be curves and $\delta>0$ a parameter, the
\emphi{free space} of $\curveA$ and $\curveB$ of radius $\delta$ is
defined as
\begin{align*}
    \FullFDleq{\delta}(\curveA,\curveB) = \brc{ (s,t) \in [0,1]^2
       \sep{ \distX{\curveA(s)}{\curveB(t)} \leq \delta}}.
\end{align*}
\ We are interested only in polygonal curves.  Then the square
$[0,1]^2$ can be broken into a (not necessarily uniform) grid called
the \emphi{free space diagram}, where a vertical line corresponds to a
vertex of $\curveA$ and a horizontal line corresponds to a vertex of
$\curveB$. Every two segments of $\curveA$ and $\curveB$ define a
\emphi{free space cell} in this grid.  In particular, let
$\CellXY{i}{j} = \CellXY{i}{j}\pth{ \curveA, \curveB}$ denote the free
space cell that corresponds to the $i$\th edge of $\curveA$ and the
$j$\th edge of $\curveB$. The cell $\CellXY{i}{j}$ is located in the
$i$\th column and $j$\th row of this grid.

\parpic[r]{\includegraphics[scale=0.8]{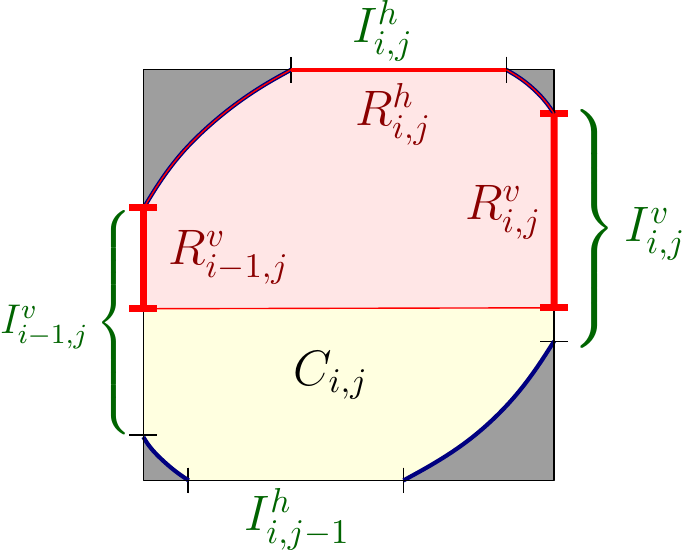}}

It is known that the free space, for a fixed $\delta$, inside such a
cell $\CellXY{i}{j}$ (i.e., $\FullFDleq{\delta}(\curveA, \curveB) \cap
\CellXY{i}{j}$) is the clipping of an affine transformation of a disk
to the cell \cite{ag-cfdbt-95}, see the figure to the right; as such,
it is convex and of constant complexity.  Let $\IHCellXY{i}{j}$ denote
the horizontal \emphi{free space interval} at the top boundary of
$\CellXY{i}{j}$, and $\IVCellXY{i}{j}$ denote the vertical free space
interval at the right boundary.

The \Frechet distance between $\curveA$ and $\curveB$ is at most
$\delta$ if and only if there is an $(x,y)$-monotone path in the free
space diagram between $(0,0)$ and $(1,1)$ that is fully contained in
$\FullFDleq{\delta}(\curveA,\curveB)$.  Let the \emphi{reachability
   intervals} $\RHCellXY{i}{j}\subseteq \IHCellXY{i}{j}$ and
$\RVCellXY{i}{j}\subseteq\IVCellXY{i}{j}$ consist of the points
$(x,y)$ on the boundary that are reachable by a monotone path from
$(0,0)$ to $(x,y)$.

Such a path to $(1,1)$ can be computed, if it exists, in $O(n^2)$ time
by dynamic programming, where $n$ is the total complexity of the two
polygonal curves $\curveA$ and $\curveB$, see \cite{ag-cfdbt-95}.

\subsubsection{Free Space Events}
\seclab{f:s:events}

To compute the \Frechet distance consider increasing $\delta$ from $0$
to $\infty$. As $\delta$ increases, structural changes to the free
space happen. We are interested in the radii (i.e., the value of
$\delta$) of these events.

\parpic[l]{\includegraphics{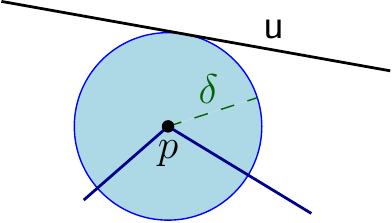}}
Consider a segment $\segA$ of $\curveA$ and a vertex $\pntA$ of
$\curveB$, a \emphi{vertex-edge event} corresponds to the minimum
value $\delta$ such that $\segA$ is tangent to $\BallX{\pntA,
   \delta}$.  In the free space diagram, this corresponds to the event
that a free space interval that consists of only one point was just
created.  The line supporting this boundary edge corresponds to the
vertex, and the other dimension corresponds to the edge.  Naturally,
the event could happen at a vertex of $\segA$.

%



The second type of event, a \emphi{monotonicity event}, corresponds to
a value $\delta$ for which a monotone subpath inside
$\FullFDleq{\delta}$ becomes feasible, see \figref{monotonicity}.
Geometrically, this corresponds to two vertices $\pntA$ and $\pntB$ on
one curve and a directed segment $\segA$ on the other curve such that:
(1) $\segA$ passes through the intersection $\SphereX{\pntA,\delta}
\cap \SphereX{\pntB, \delta}$, and (2) $\segA$ intersects
$\BallX{\pntB, \delta}$ first and $\BallX{\pntA,\delta}$ second, where
$\pntA$ comes before $\pntB$ in the order along the curve $\curveA$.

Other values of $\delta$ that would be relevant to our algorithm are
the distances between any pair of points of $\VertexSet{\curveA} \cup
\VertexSet{\curveB}$. Technically, apart from the two single events
that the endpoints of the curves are being matched to each other,
these \emphi{vertex-vertex} events are vertex-edge events when they
are relevant, but they will be handled naturally by our algorithm.

\begin{figure}
    \center
    \IncGraphPageExt{figs}{f_s_diagram_m_e_curves}{1}{scale=1.2}%
    ~ ~
    \includegraphics[scale=1]{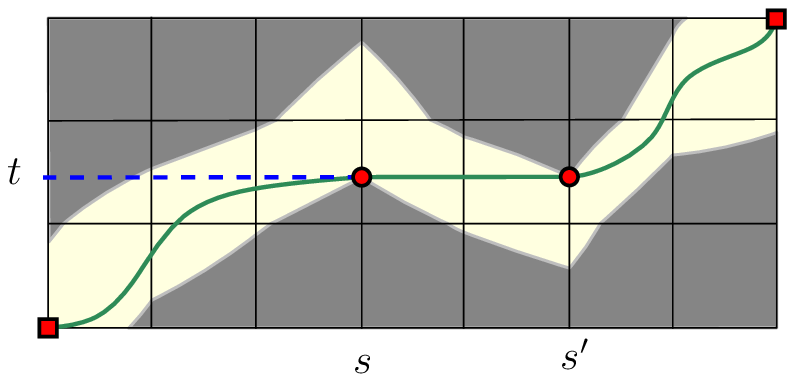}
    \caption{%
       Two curves $\curveA$ and $\curveB$ and their free space diagram
       $\FullFDleq{\delta}(\curveA,\curveB)$, where $\pntA=\curveA(s),
       \pntB=\curveA(s')$ and $\pntE=\curveB(t)$. Here, $\delta$ is
       the minimal free space parameter, such that a monotone path
       exists, i.e., in this example $\distFr{\curveA}{\curveB}$
       coincides with a monotonicity event. }
    \figlab{monotonicity}
\end{figure}

%
%
%

\subsection{Curve Simplification}
We suggest a straightforward greedy algorithm for curve
simplification, which is sufficient for our purposes.  We comment that
Agarwal \etal \cite{ahmw-nltaa-05} suggested a more aggressive (but
slightly slower and more complicated) simplification algorithm that
can be used instead.

\begin{algorithm}
    Given a polygonal curve $\curveA= \pntA_1 \pntA_2 \pntA_3 \ldots
    \pntA_k$ and a parameter $\sRadius>0$, consider the following
    simplification algorithm: First mark the initial vertex $\pntA_1$
    and set it as the current vertex. Now scan the polygonal curve
    from the current vertex until it reaches the first vertex
    $\pntA_i$ that is in distance at least $\sRadius$ from the current
    vertex. Mark $\pntA_i$ and set it as the current vertex. Repeat
    this until reaching the final vertex of the curve, and also mark
    this final vertex.  Consider the curve that connects only the
    marked vertices, in their order along $\curveA$. We refer to the
    resulting curve $\curveAs = \simpX{\curveA,\sRadius}$ as the
    \emphi{$\sRadius$-simplification} of $\curveA$. Note, that this
    simplification can be computed in linear time.
    
    \alglab{simplification:algorithm}
\end{algorithm}

\begin{remark}
    The simplified curve has the useful property that all its segments
    are of length at least $\sRadius$, except for the last edge that
    might be shorter. For the sake of simplicity of exposition, we
    assume that the last segment in the simplified curve also has
    length at least $\sRadius$.  Our arguments can be easily modified
    to handle this more general case.
\end{remark}

\begin{lemma}
    For any polygonal curve $\curveA$ in $\Re^d$, and $\sRadius \geq
    0$, it holds $\distFr{\MakeSBig\curveA}{\simpX{\curveA,\sRadius}}
    \leq \sRadius$.
    
    \lemlab{simplification:distance}
\end{lemma}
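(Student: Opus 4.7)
The plan is to construct orientation-preserving reparameterizations of $\curveA$ and $\curveAs = \simpX{\curveA, \sRadius}$ whose width is at most $\sRadius$, by matching each edge of $\curveAs$ to the corresponding subcurve of $\curveA$ one at a time. Let $\pntA_{i_0}, \pntA_{i_1}, \ldots, \pntA_{i_m}$ denote the marked vertices of $\curveA$ in order, so that $\curveAs$ is the concatenation of the edges $\pntA_{i_\ell}\pntA_{i_{\ell+1}}$. For each $\ell$ I would produce a matching between the edge $\pntA_{i_\ell}\pntA_{i_{\ell+1}}$ and the subcurve $\curveA[\pntA_{i_\ell}, \pntA_{i_{\ell+1}}]$ of width at most $\sRadius$; these per-edge matchings agree at the shared endpoints and so concatenate to give the desired global reparameterizations.

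The key structural observation is that by the greedy marking rule in \algref{simplification:algorithm}, every intermediate vertex $\pntA_{i_\ell + 1}, \ldots, \pntA_{i_{\ell+1} - 1}$ lies at distance strictly less than $\sRadius$ from $\pntA_{i_\ell}$. Since the ball $\BallX{\pntA_{i_\ell}, \sRadius}$ is convex, every edge of $\curveA$ whose two endpoints are among these intermediate vertices (or equal $\pntA_{i_\ell}$ itself) lies entirely inside this ball. Hence the subcurve can leave the ball only along its final edge $\pntA_{i_{\ell+1}-1}\pntA_{i_{\ell+1}}$, and it does so exactly once since $\pntA_{i_{\ell+1}}$ was chosen as the first vertex at distance at least $\sRadius$ from $\pntA_{i_\ell}$.

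This structural fact suggests a two-phase matching. In the first phase the handler stays at $\pntA_{i_\ell}$ while the dog traverses the subcurve from $\pntA_{i_\ell}$ until it first exits $\BallX{\pntA_{i_\ell}, \sRadius}$ at some point $\pntE$ on the final edge. Throughout this phase the dog lies inside the ball, so the distance to the handler is at most $\sRadius$. In the second phase the handler moves linearly along the segment $\pntA_{i_\ell}\pntA_{i_{\ell+1}}$ while the dog simultaneously moves linearly from $\pntE$ to $\pntA_{i_{\ell+1}}$ along the final edge. Synchronizing both motions with a common parameter $t \in [0,1]$, the displacement vector between the two positions equals $(1-t)(\pntA_{i_\ell} - \pntE)$, whose norm is $(1-t)\,\|\pntA_{i_\ell} - \pntE\| \leq \sRadius$ since $\pntE$ lies on the sphere of radius $\sRadius$ centered at $\pntA_{i_\ell}$.

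The main obstacle is precisely the final edge $\pntA_{i_{\ell+1}-1}\pntA_{i_{\ell+1}}$, which must cross out of $\BallX{\pntA_{i_\ell}, \sRadius}$; a naive constant-handler matching would yield a distance of up to $2\sRadius$ there, and a purely linear synchronization between the full subcurve and the full segment is not bounded by $\sRadius$ either. The two-phase split is designed specifically so that the linear interpolation in the second phase cancels the offset built up during the first phase, keeping the dog--handler distance bounded by $\sRadius$ uniformly.
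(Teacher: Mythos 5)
Your proposal is correct and follows essentially the same approach as the paper: a per-edge two-phase matching where the handler waits at $\pntA_{i_\ell}$ during the first phase and both points move linearly in sync during the second, with correctness coming from the fact that all intermediate vertices lie strictly inside $\BallX{\pntA_{i_\ell},\sRadius}$. The only cosmetic difference is that you begin the synchronized phase at the exit point $\pntE$ on the sphere (yielding the explicit displacement $(1-t)(\pntA_{i_\ell}-\pntE)$ of norm $(1-t)\sRadius$), whereas the paper begins it at the second-to-last vertex $\pntA_{i_{\ell+1}-1}$, which lies strictly inside the ball.
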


\parpic[r]{\includegraphics[scale=0.75]{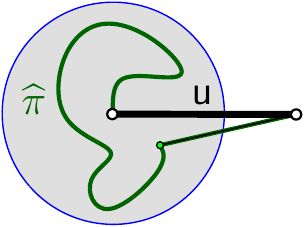}}
\vspace{-15\lineskip}
\begin{proof}
    Consider a segment $\segA$ of $\simpX{\curveA,\sRadius}$ and the
    portion $\subcurveA$ of $\curveA$ that corresponds to it. Clearly,
    all the vertices of $\subcurveA$ are contained inside a ball of
    radius $\sRadius$ centered at the first endpoint of $\segA$
    visited by $\curveA$, except the last vertex of $\subcurveA$. As
    such, one can parameterize $\segA$ and $\subcurveA$, such that
    initially the point stays on the vertex of $\segA$ while visiting
    all vertices of $\subcurveA$ (except the last one), and then
    simultaneously move in sync on $\segA$ and the last segment of
    $\subcurveA$, in such a way that the distance is always at most
    $\sRadius$.
\end{proof}

\section{The Approximation Algorithm}
\seclab{algorithm}

\subsection{Computing the \Relevant Free Space}
\seclab{relevant}

For two curves $\curveA$ and $\curveB$, their \emphi{\relevant free
   space}, denoted by $\FDleq{\delta}(\curveA,\curveB)$, is the set of
all the points of $\FullFDleq{\delta}(\curveA,\curveB)$ that are
reachable from $(0,0)$ by an $(x,y)$-\emph{monotone} path.

The set $\FDleq{\delta}$ has finite descriptive complexity inside each
grid cell, and we need to describe it only for the grid cells that
have non-empty intersection with $\FDleq{\delta}$.  Clearly,
generating only those grid cells is sufficient to decide if there is a
monotone path between $(0,0)$ and $(1,1)$, which is equivalent to
deciding if the \Frechet distance between $\curveA$ and $\curveB$ is
smaller or equal to $\delta$.  In particular, to fully describe
$\FDleq{\delta}$, we will specify the reachability intervals
$\RHCellXY{i}{j}\subseteq \IHCellXY{i}{j}$ and
$\RVCellXY{i}{j}\subseteq\IVCellXY{i}{j}$ for each cell
$\CellXY{i}{j}$, which describe the intersection of $\FDleq{\delta}$
with the top and right boundary of $\CellXY{i}{j}$.  These intervals
contain all the needed information, since $\FDleq{\delta} \cap
\CellXY{i}{j}$ is convex.

The \emphi{complexity} of the \relevant free space, for distance
$\delta$, denoted by $\Nleq{\delta}(\curveA,\curveB)$, is the total
number of grid cells which have non-empty intersection with
$\FDleq{\delta}$.  One can compute this set of cells and extract an
existing monotone path in $O\pth{\Nleq{\delta}(\curveA,\curveB)}$
time, by performing a \BFS of the grid cells that visits only the
\relevant cells.  This yields the following relatively easy result. We
include the details both for the sake of completeness and because the
algorithm we suggest is engagingly simple.

\begin{lemma}
    Given two polygonal curves $\curveA$ and $\curveB$ in $\Re^d$, and
    a parameter $\delta \geq 0$, one can compute a representation of
    $\FDleq{\delta}(\curveA,\curveB)$ in $O\pth{
       \Nleq{\delta}(\curveA,\curveB)}$ time.  Furthermore, one can
    decide if $\distFr{\curveA}{\curveB} \leq \delta$, and if this is
    the case also extract reparametrizations in
    $O(\Nleq{\delta}(\curveA,\curveB))$ time.
    
    \lemlab{graph:fast}
\end{lemma}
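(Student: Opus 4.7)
The plan is to run a breadth-first search over the grid cells of the free space diagram, starting at the cell $\CellXY{1}{1}$ that contains $(0,0)$ and expanding only to neighbors that carry a non-empty incoming reachability interval. First, initialize the reachability on the bottom and left boundaries of $\CellXY{1}{1}$: the reachable part consists of the single point $(0,0)$ (which must lie in $\FullFDleq{\delta}$ for the question to be non-trivial). For each cell popped from the queue, I use the stored incoming intervals on its bottom and left edges, $\RHCellXY{i}{j-1}$ and $\RVCellXY{i-1}{j}$, to compute the outgoing intervals $\RHCellXY{i}{j}$ and $\RVCellXY{i}{j}$ on its top and right edges in $O(1)$ time. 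This is the standard Alt--Godau propagation step: because $\FullFDleq{\delta}\cap \CellXY{i}{j}$ is convex, the outgoing reachable interval on (say) the right edge is obtained by intersecting $\IVCellXY{i}{j}$ with the set of $y$-coordinates no smaller than the minimum $y$ reachable on the left or bottom edge, and symmetrically for the top edge.

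The BFS enqueues the right and upper neighbor of the processed cell exactly when the corresponding shared edge carries a non-empty reachability interval, and uses a hash table keyed by $(i,j)$ to avoid re-processing cells. The key correctness/accounting invariant is that every enqueued cell inherits at least one reachable point from its shared boundary with the cell that enqueued it, hence has non-empty intersection with $\FDleq{\delta}(\curveA,\curveB)$ and is counted in $\Nleq{\delta}(\curveA,\curveB)$. Combined with $O(1)$ work per visited cell (and expected $O(1)$ hash operations), this gives total running time $O(\Nleq{\delta}(\curveA,\curveB))$ for computing the representation consisting of the collection of non-empty reachability intervals $\RHCellXY{i}{j}$, $\RVCellXY{i}{j}$.

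For the decision question, $\distFr{\curveA}{\curveB}\leq \delta$ if and only if the cell containing $(1,1)$ is visited and $(1,1)$ belongs to its reachable free space; this is a constant-time check at the end of the BFS. To extract the reparameterizations when the answer is yes, each cell also stores a back-pointer to the neighbor from which it was first reached together with a witness point on the shared edge; retracing from $(1,1)$ back to $(0,0)$ along these pointers yields a monotone polygonal path through the free space, which directly encodes two orientation-preserving reparameterizations $f$ and $g$. The retrace visits at most $O(\Nleq{\delta}(\curveA,\curveB))$ cells.

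The main obstacle is the accounting argument: one must be careful that the BFS never ``wastes'' time touching cells outside $\FDleq{\delta}$, since the grid itself has $\Theta(n^2)$ cells. The observation above, that a cell is enqueued only after it has been certified to contain a reachable point on one of its incoming edges, is what makes the bound tight. A minor side issue is handling the hash table deterministically if desired (e.g., by radix-sorting the touched $(i,j)$ indices after the fact), but this does not affect the stated $O(\Nleq{\delta}(\curveA,\curveB))$ bound.
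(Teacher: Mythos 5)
Your proof is essentially the same as the paper's: a BFS over the free-space grid that only touches cells certified reachable via an incoming boundary interval, with $O(1)$ propagation per cell and back-pointers for extracting reparameterizations. The one place you deviate is the duplicate-detection mechanism. You use a hash table keyed by $(i,j)$, which gives only \emph{expected} $O(1)$ per cell, and your proposed deterministic fix — ``radix-sorting the touched $(i,j)$ indices after the fact'' — does not work: duplicate detection is needed \emph{during} the BFS to prevent a cell from being enqueued and expanded along multiple paths, so a post-hoc sort cannot substitute for an online check. The paper sidesteps this cleanly without any dictionary: because each processed cell $v(i,j)$ enqueues $v(i,j+1)$ and then $v(i+1,j)$ in that fixed order, the queue at any moment holds the cells of a single anti-diagonal sorted left to right, and the (at most two) copies of a cell are always \emph{adjacent} in the queue. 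Dequeuing and merging adjacent duplicates is then an $O(1)$ deterministic operation. I'd recommend adopting this ordering invariant rather than relying on hashing, since the lemma is stated as a worst-case bound.
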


\begin{proof}
    We create a directed graph $\Graph$ that has a node $v(i,j)$ for
    every \relevant free space cell $\CellXY{i}{j}$.  With each node
    $v(i,j)$ we store the free space intervals $\IHCellXY{i}{j}$ and
    $\IVCellXY{i}{j}$ as well as the reachability intervals
    $\RHCellXY{i}{j}\subseteq \IHCellXY{i}{j}$ and
    $\RVCellXY{i}{j}\subseteq\IVCellXY{i}{j}$.
    
    Each node $v(i,j)$ can have an outgoing edge to its right and top
    neighbor; an edge between these vertices exists if and only if the
    corresponding reachability interval between them is nonempty.  In
    particular, a monotone path from $(0,0)$ to a point $(x,y) \in
    \CellXY{i}{j}$ in $\FDleq{\delta}$ corresponds to a monotone path
    in the graph $\Graph$ from $v(1,1)$ to $v(i,j)$. Furthermore, any
    such monotone path has exactly $k= i+j-2$ edges on it.
    
    We compute the graph $\Graph$ on the fly by performing a \BFS on
    it, starting from $v(1,1)$, and keeping the invariant that when
    the \BFS visits a node $v(i,j)$ it enqueues the vertices
    $v(i,j+1)$ and $v(i+1,j)$, in this order, to the \BFS queue (if
    they are connected to $v(i,j)$, naturally).
    
    \parpic[l]{\includegraphics[scale=0.5]{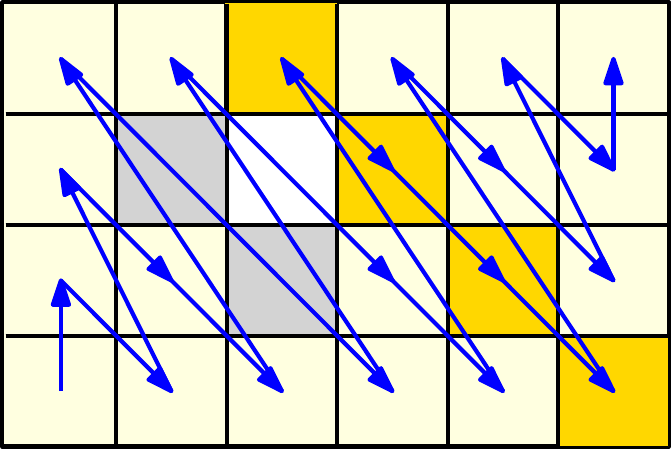}}
    This implies that at any point in time, and for any $k$, the \BFS
    queue contains the nodes on the $k$\th diagonal (i.e., all nodes
    $v(i,j)$ such that $i+j=k-1$) of the diagram sorted from left to
    right. However, the same node might appear twice (consecutively)
    in this queue.
    
    In every iteration, the \BFS dequeues the one or two copies of the
    same node $v(i,j)$ and merges the two copies of the same vertex
    into one if necessary.  Now, the one or two vertices (i.e.,
    $v(i-1,j)$ and $v(i,j-1)$) that have incoming edges to $v(i,j)$
    are known, as are their reachability intervals. Therefore one can
    compute the reachability intervals for $v(i,j)$ in constant time.
    Now, $v(i,j+1)$ is enqueued if and only if the top side of the
    cell $\CellXY{i}{j}$ is reachable by a monotone path (i.e.,
    $\RHCellXY{i}{j}\neq\emptyset$), and $v(i+1,j)$ is enqueued if and
    only if the right side of the cell $\CellXY{i}{j}$ is reachable by
    a monotone path (i.e., $\RVCellXY{i}{j}\neq\emptyset$). Since
    $\FDleq{\delta}(\curveA,\curveB) \cap \CellXY{i}{j}$ is convex and
    of constant complexity, this can be done in constant time.
    
    Clearly, the \BFS takes time linear in the size of $\Graph$ and it
    computes the reachability information for all \relevant free space
    cells of $\FDleq{\delta}(\curveA,\curveB)$.  Now, one can check if
    $(1,1)$ is reachable by inspecting the reachability intervals for
    $\CellXY{\nVertices{\curveA}-1 }{\nVertices{\curveB}-1}$, and
    checking if the top right corner of this cell is monotonically
    reachable from the origin, where $\nVertices{\curveA}$ is the
    number of vertices of the curve $\curveA$. The monotone path
    realizing this can be extracted in linear time, by introducing
    backward edges in the graph and tracing a path back to the origin.
\end{proof}

\begin{observation}
    One can compute all relevant vertex-edge events with radius $\leq
    \delta$ in $O\pth{ \Nleq{\delta}(\curveA,\curveB)}$ time as
    follows.  We compute the graph representation of
    $\FDleq{\delta}(\curveA, \curveB)$ using
    \lemref{graph:fast}. Next, for each \relevant cell consider the
    vertex-edge events at its top and right boundaries and compute
    their event radii. Recall that a cell boundary corresponds to an
    edge from the one curve and a vertex from the other
    curve. Clearly, any cell boundary can be used by the
    reparameterization of width $\leq \delta$, if and only if the
    corresponding event radius is smaller or equal $\delta$.
    
    \obslab{extract:v:e}
\end{observation}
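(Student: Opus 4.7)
The plan is to turn the informal remark into an explicit three step argument: (1) build a data structure holding the relevant cells; (2) extract candidate events from each; (3) verify that we have enumerated precisely the ``relevant vertex-edge events'' of radius at most~$\delta$.

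First, I would invoke \lemref{graph:fast} to construct, in $O(\Nleq{\delta}(\curveA,\curveB))$ time, the graph $\Graph$ whose nodes are exactly the \relevant free space cells of $\FDleq{\delta}(\curveA,\curveB)$, each annotated with its reachability intervals. This immediately bounds the number of candidate cell boundaries we will inspect by $2\,\Nleq{\delta}(\curveA,\curveB)$.

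Second, for each node $v(i,j)\in\Graph$ I would examine its top boundary (corresponding to the edge $\segA$ of $\curveA$ on column $i$ paired with the vertex $\pntA$ of $\curveB$ on row $j{+}1$) and its right boundary (the symmetric pairing). Each such boundary carries a single vertex-edge event whose radius is $\distX{\pntA}{\segA}$ (the distance from the vertex to its nearest point on the supporting segment, clipped to $\segA$ if the foot of the perpendicular lies outside $\segA$). This is a closed-form constant-time computation per boundary, so the total work after \lemref{graph:fast} is $O(\Nleq{\delta}(\curveA,\curveB))$. I would collect only those events whose radius is $\leq \delta$.

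Third, I would prove the required equivalence: a vertex-edge event is \emph{relevant} to a width-$\delta$ reparameterization iff it sits on the boundary of a cell in $\Graph$ and its radius does not exceed $\delta$. The ``only if'' direction is the content of \secref{f:s:events}: a vertex-edge event with radius $r$ is the minimum $\delta'$ at which the corresponding free space interval $\IHCellXY{i}{j}$ or $\IVCellXY{i}{j}$ becomes non-empty, so if $r>\delta$ the interval is empty in $\FDleq{\delta}$ and no width-$\leq\delta$ reparameterization can cross this boundary. Conversely, if $r\leq\delta$ then the interval is non-empty, and because the cell is in $\Graph$ it is reachable by an $(x,y)$-monotone path from $(0,0)$ in $\FullFDleq{\delta}$, so this event can indeed be witnessed by a (prefix of a) width-$\leq\delta$ reparameterization. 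The main point to be careful about is that cells whose boundary intervals are non-empty but that are \emph{not} reached by any monotone path from $(0,0)$ are irrelevant and, by construction, are not enumerated by~$\Graph$; this is exactly what makes the bound $O(\Nleq{\delta}(\curveA,\curveB))$ rather than $O(n^2)$, and is the one step that genuinely relies on \lemref{graph:fast} rather than on a trivial enumeration.
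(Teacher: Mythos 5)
Your proposal is correct and follows the same route the paper intends: build the graph of \relevant cells via \lemref{graph:fast}, extract the constant-time vertex-edge event radius at each top and right boundary, and filter by $\leq \delta$; the paper states this as a self-contained observation rather than a separate proof, and your three steps simply make the algorithm and the ``iff'' claim explicit. The only minor over-statement is in your ``conversely'' direction --- a non-empty free space interval on a boundary of a cell in $\Graph$ does not by itself guarantee that boundary's \emph{reachability} interval is non-empty --- but this does not matter for the observation, which only needs an enumeration of candidate events bounded by $O(\Nleq{\delta}(\curveA,\curveB))$, and the paper's own ``Clearly\ldots'' glosses over the same point.
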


\subsection{The Approximate Decision Procedure}
\seclab{decision:procedure}

In the following, we are interested in the maximum complexity of the
\relevant free space when considering any radius $\delta$ and
simplifying the curves with radius $\eps \delta$. The reasons will
become apparent only shortly after, in \lemref{decider} and
\lemref{decider:2}, where we show that the simplification radius
chosen this way enables us to either
\begin{inparaenum}[(i)]
    \item compute a $(1+\eps)$-approximation of the \Frechet distance,
    or
    \item solve the decision problem exactly using the simplified
    curves (see \secref{fixed:simplification}).
\end{inparaenum}

\begin{figure}
    \centerline{\includegraphics{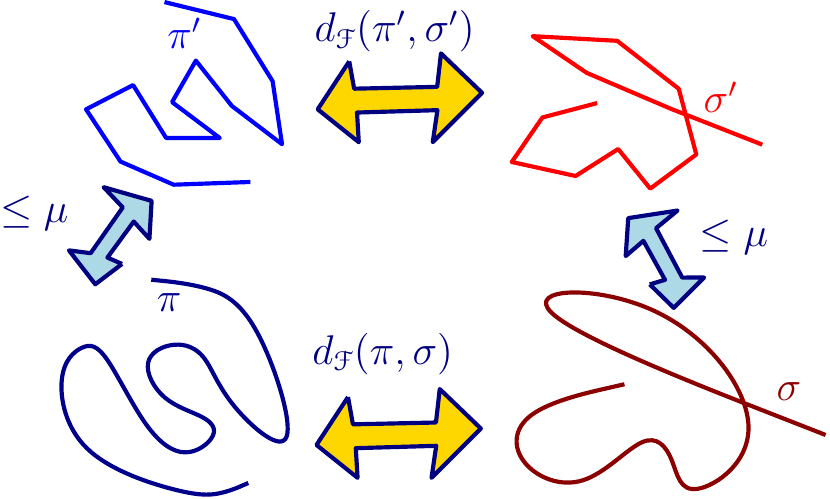}}
    \caption{The idea of the fuzzy decision procedure using
       simplification.  }
    \figlab{fuzzy:decider}
\end{figure}

The idea underlying this approximate decision procedure is depicted in
\figref{fuzzy:decider}. We simplify the two input curves to a
resolution that is (roughly) an $\eps$-fraction of the radius we care
about (i.e., $\delta$), and we then use the exact decision procedure
on these two simplified curves.  Since the \Frechet distance complies
with the triangle inequality and by \lemref{simplification:distance},
we can infer the original distance from this information.  In order
for this approach to work, the complexity of the \relevant free space
for the two simplified curves has to be small.  This notion of
complexity is captured by the following definition.

\begin{defn}
    For two curves $\curveA$ and $\curveB$, let
    \begin{align*}
        \SimpComplexity{\eps}{\curveA}{\curveB} = \max_{\delta \geq 0}
        \;\Nleq{\delta} \pth{\MakeSBig \simpX{\curveA, \eps \delta},
           \simpX{\curveB, \eps \delta }}
    \end{align*}
    be the maximum complexity of the \relevant free space for the
    simplified curves.  We refer to
    $\SimpComplexity{\eps}{\curveA}{\curveB}$ as the
    \emphi{\resemblanceX{\eps}{}} of $\curveA$ and $\curveB$.  In
    order to give a more informative analysis, we will express the
    asymptotic time complexity of our algorithms not in terms of the
    size of the input, but instead use the size of the input and the
    free space complexity of the input as parameters.
    
    We assume that for any $0 < \eps <1$ the following properties hold
    for $\SimpComplexity{\cdot}{\cdot}{\cdot}$.
    
    \medskip
    
    $~$~~~~
    \begin{minipage}{0.8\linewidth}
        \begin{compactenum}[(P1)]
            \item \label{resemblance:a} %
            For any constant $c'\geq 1$, it holds
            $\SimpComplexity{\eps/c'}{\curveA}{\curveB} = O\pth{
               \SimpComplexity{\eps}{\curveA}{\curveB} }$.
            
            \item \label{resemblance:b} %
            $\SimpComplexity{\eps}{\curveA}{\curveB} \leq
            \SimpComplexity{\eps/2}{\curveA}{\curveB} /2$.
        \end{compactenum}
    \end{minipage}
    
    \deflab{simplification:complexity}
\end{defn}

The above properties will hold for all the families of curves we
consider. In \secref{c:packed:curves} we show that
$\SimpComplexity{\eps}{\curveA}{\curveB}$ is a linear function in the
number of vertices of the two curves for a fixed $\eps>0$ if the
curves are sufficiently well-behaved (see for example
\lemref{complexity:l:e:q}). Combining this analysis with the time
complexity analysis of the algorithms will yield near-linear upper
bounds on the running times of these algorithms for the classes of
curves considered.

\begin{remark}
    In the following, when we state the time complexity of our
    algorithms, we always assume that
    $\SimpComplexity{\eps}{\curveA}{\curveB} = \Omega(n)$, where $n$
    is the total number of vertices of $\curveA$ and $\curveB$.

    \remlab{linear:f:s:compl}
\end{remark}

\begin{lemma}
    Let $\curveA$ and $\curveB$ be polygonal curves in $\Re^d$, and
    let $\eps > 0$ and $\delta >0$ be two parameters. Then, the
    algorithm described below output, in $O\pth{
       \SimpComplexity{\eps}{\curveA}{\curveB}}$ time, one of the
    following:
    \begin{compactenum}[\qquad(A)]
        \item ``$\distFr{\curveA}{\curveB} \leq (1+\eps) \delta$'',
        and reparameterizations of $\curveA$ and $\curveB$ of width
        $\leq (1+\eps)\delta$, and this happens if
        $\distFr{\curveA}{\curveB} \leq \delta$.
        
        \item ``$\distFr{\curveA}{\curveB} > \delta$'' if
        $\distFr{\curveA}{\curveB} > (1+\eps)\delta$.
        
        \item If $\distFr{\curveA}{\curveB} \in \left(\delta,
            (1+\eps)\delta\right]$ then the algorithm outputs either
        of the above outcomes.
    \end{compactenum}
    In either case, the statement returned is correct.
    
    \lemlab{decider}
\end{lemma}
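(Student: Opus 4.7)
The plan is to simplify both curves at an appropriate resolution and then run the exact decision procedure of \lemref{graph:fast} on the simplified curves with a slightly inflated threshold. Concretely, I would set $\sRadius = \eps\delta/4$, compute $\curveAs = \simpX{\curveA, \sRadius}$ and $\curveBs = \simpX{\curveB, \sRadius}$ by \algref{simplification:algorithm}, and invoke \lemref{graph:fast} on $(\curveAs, \curveBs)$ with parameter $\delta' = (1+\eps/2)\delta$. Report outcome (B) if the exact decider reports $\distFr{\curveAs}{\curveBs} > \delta'$; otherwise report outcome (A), extracting reparameterizations as described below.

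For correctness, \lemref{simplification:distance} together with the triangle inequality gives
\begin{align*}
    \big|\distFr{\curveA}{\curveB} - \distFr{\curveAs}{\curveBs}\big|
    \;\leq\; \distFr{\curveA}{\curveAs} + \distFr{\curveB}{\curveBs}
    \;\leq\; 2\sRadius \;=\; \eps\delta/2.
\end{align*}
Thus if $\distFr{\curveA}{\curveB} \leq \delta$, then $\distFr{\curveAs}{\curveBs} \leq \delta + \eps\delta/2 = \delta'$, so the exact decider answers YES and we enter case (A). Conversely, if $\distFr{\curveA}{\curveB} > (1+\eps)\delta$, then $\distFr{\curveAs}{\curveBs} > (1+\eps)\delta - \eps\delta/2 = \delta'$, and the exact decider answers NO, giving case (B). In the intermediate range either decision is allowed, matching case (C). When the decider answers YES, \lemref{graph:fast} yields reparameterizations of $\curveAs, \curveBs$ of width $\leq \delta'$; composing these with the explicit reparameterizations between each original curve and its $\sRadius$-simplification (from the proof of \lemref{simplification:distance}) produces reparameterizations of $\curveA, \curveB$ of width at most $\delta' + 2\sRadius = (1+\eps)\delta$, as required.

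For the running time, \lemref{graph:fast} runs in $O(\Nleq{\delta'}(\curveAs, \curveBs))$. Writing $\sRadius = \eps'\delta'$ with $\eps' = \eps/(4(1+\eps/2)) = \Theta(\eps)$, we get
\begin{align*}
    \Nleq{\delta'}(\curveAs, \curveBs)
    \;=\; \Nleq{\delta'}\bigl(\simpX{\curveA, \eps'\delta'}, \simpX{\curveB, \eps'\delta'}\bigr)
    \;\leq\; \SimpComplexity{\eps'}{\curveA}{\curveB}
    \;=\; O\!\pth{\SimpComplexity{\eps}{\curveA}{\curveB}},
\end{align*}
where the last step uses property (P\ref{resemblance:a}). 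The simplification itself takes linear time in $n$, which is subsumed by the complexity bound via \remref{linear:f:s:compl}.

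The only slightly delicate point is the construction of the output reparameterizations in case (A): the exact decider produces them for the simplified curves, and one must stitch them with the natural reparameterizations coming out of \algref{simplification:algorithm} (which move the handler along the skipped vertices while the other point rests). This stitching is routine, but it is what is needed to turn the triangle-inequality bound on Fréchet distance into an explicit matching of width $\leq (1+\eps)\delta$.
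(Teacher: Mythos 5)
Your proof is correct and follows essentially the same approach as the paper: the same simplification radius $\sRadius = \eps\delta/4$, the same inflated threshold $\delta' = (1+\eps/2)\delta = \delta + 2\sRadius$, the same triangle-inequality argument in both directions, and the same running-time bound via property (P1). The only presentational difference is that you state the two-sided bound $\lvert\distFr{\curveA}{\curveB} - \distFr{\curveAs}{\curveBs}\rvert \leq 2\sRadius$ up front, whereas the paper applies the triangle inequality separately in each case; the content is identical.
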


\begin{proof}
    Set $\sRadius = (\eps/4) \delta$.  Compute in linear time the
    curves $\curveAs= \simpX{\curveA,\sRadius}$ and $\curveBs=
    \simpX{\curveB,\sRadius}$ using \algref{simplification:algorithm}.
    Let $\delta' = \delta + 2\sRadius$ and observe that
    $\nfrac{\sRadius}{\delta'} = {\eps}/ \pth{4 + 2\eps}$. Using
    \lemref{graph:fast} we can decide whether
    $\distFr{\curveAs}{\curveBs} \leq \delta'$ in %
    \begin{align*}
        O\pth{ \Nleq{\delta'}(\curveAs,\curveBs)} %
        =%
        O \pth{ \SimpComplexity{\sRadius /
              \delta'}{\curveA}{\curveB}}%
        =%
        O \pth{ \SimpComplexity{\eps/(4+2\eps)}{\curveA}{\curveB}}%
        =%
        O \pth{ \SimpComplexity{\eps}{\curveA}{\curveB}}%
    \end{align*}
    time, by assumption (P\ref{resemblance:a}).  If so, we output the
    reparameterizations as a proof that
    \begin{align*}
        \distFr{\curveA}{\curveB} \leq \distFr{\curveA}{\curveAs} +
        \distFr{\curveAs}{\curveBs} +\distFr{\curveBs}{\curveB} \leq
        \delta' +2 \sRadius = \delta +4 (\eps/4) \delta =
        (1+\eps)\delta.
    \end{align*}
    
    On the other hand, if $\distFr{\curveAs}{\curveBs} > \delta'$,
    then this implies, by the triangle inequality, that
    \begin{align*}
        \distFr{\curveA}{\curveB} \geq \distFr{\curveAs}{\curveBs} -
        \distFr{\curveA}{\curveAs} - \distFr{\curveBs}{\curveB} >
        \delta' -2 \sRadius = \delta.
    \end{align*}
    Therefore, the algorithm outputs ``$\distFr{\curveA}{\curveB} >
    \delta$'' in this case.
\end{proof}

\subsubsection{How to use the Approximate Decider in a Binary Search}

In order to use \lemref{decider} to perform a binary search for the
\Frechet distance, we can turn the ``fuzzy'' decision procedure into a
precise one as follows.

\begin{lemma}
    Let $\curveA$ and $\curveB$ be two polygonal curves in $\Re^d$,
    and let $1 \geq \eps > 0$ and $\delta >0$ be two parameters. Then,
    there is an algorithm \deciderFr{}$\pth[]{ \curveA, \curveB,
       \delta, \eps }$ that, in $O\pth{
       \SimpComplexity{\eps}{\curveA}{\curveB} }$ time, returns one of
    the following outputs:
    \begin{inparaenum}[(i)]
        \item a $(1+\eps)$-approximation to
        $\distFr{\curveA}{\curveB}$,
        \item $\distFr{\curveA}{\curveB} < \delta$, or
        \item $\distFr{\curveA}{\curveB} > \delta$.
    \end{inparaenum}
    The answer returned is correct.
    
    \lemlab{decider:2}
\end{lemma}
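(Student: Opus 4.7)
The plan is to invoke the fuzzy decider of \lemref{decider} twice, with precision parameter $\eps' = \eps/3$, at the two radii $\delta$ and $\delta/(1+\eps')$ that straddle $\delta$; the precise answer can then be read off from the combination of the two outputs. By property (P\ref{resemblance:a}), each call takes $O\pth{\SimpComplexity{\eps'}{\curveA}{\curveB}} = O\pth{\SimpComplexity{\eps}{\curveA}{\curveB}}$ time, giving the claimed running time.

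First I would call \lemref{decider} on $(\curveA, \curveB, \delta, \eps')$. If it outputs ``$\distFr{\curveA}{\curveB} > \delta$'', the algorithm halts and returns outcome (iii). Otherwise, the call produces reparameterizations that certify $\distFr{\curveA}{\curveB} \leq (1+\eps')\delta$.

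Next I would call \lemref{decider} on $(\curveA, \curveB, \delta/(1+\eps'), \eps')$. If it produces reparameterizations, their width is at most $(1+\eps')\cdot\delta/(1+\eps') = \delta$, which certifies $\distFr{\curveA}{\curveB} \leq \delta$, and the algorithm returns outcome (ii). Otherwise, the call asserts $\distFr{\curveA}{\curveB} > \delta/(1+\eps')$, and combining with the bound from the first call gives
\begin{align*}
    \frac{\delta}{1+\eps'} \;<\; \distFr{\curveA}{\curveB} \;\leq\; (1+\eps')\delta.
\end{align*}
In that case the upper bound $(1+\eps')\delta$ satisfies $(1+\eps')\delta \leq (1+\eps')^2 \distFr{\curveA}{\curveB}$; since $\eps' = \eps/3$ and $\eps \leq 1$ imply $(1+\eps')^2 \leq 1+\eps$, the value $(1+\eps')\delta$ together with the reparameterizations from the first call constitute a $(1+\eps)$-approximation to $\distFr{\curveA}{\curveB}$, and the algorithm returns outcome (i).

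The only delicate point is calibrating $\eps'$ so that the product of the two one-sided multiplicative errors from the fuzzy decider fits within $1+\eps$; beyond that arithmetic, correctness is immediate from \lemref{decider}, and the running time bound follows directly from property (P\ref{resemblance:a}).
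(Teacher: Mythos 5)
Your proposal is correct and follows essentially the same approach as the paper: two calls to the fuzzy decider of \lemref{decider} with precision parameter $\eps' = \eps/3$ at radii $\delta$ and $\delta/(1+\eps')$, with the same case analysis and the same $(1+\eps')^2 \leq 1+\eps$ calibration. The paper even uses the identical constant $\cSimpB = 1/3$.
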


\begin{proof}
    Let $\delta' = \delta/(1+\eps')$, for $\eps' = \cSimpB \eps$,
    $\cSimpB=\cSimpBVal$. We run the algorithm of \lemref{decider}
    with parameters $\delta$ and $\eps'$.  If the call returns
    ``$\distFr{\curveA}{\curveB} > \delta$'', then we return this
    result.
    
    Otherwise, we call \lemref{decider} with parameters $\delta'$ and
    $\eps'$. If it returns that ``$\distFr{\curveA}{\curveB} \leq
    (1+\eps')\delta'$'' then $\distFr{\curveA}{\curveB} \leq
    (1+\eps')\delta' = \delta$, and we return this result.
    
    The only remaining possibility is that the two calls returned
    ``$\distFr{\curveA}{\curveB} \leq (1+\eps')\delta$'' and
    ``$\distFr{\curveA}{\curveB} > \delta'$''.  But then we have found
    the required approximation.  Therefore, the resulting
    approximation factor of the reparameterizations returned by the
    call with $\delta$ is $\ds \leq \frac{(1+\eps') \delta}{ \delta'}
    = (1+ \cSimpB \eps)^2 < (1+\eps)$ as can be easily verified, since
    $0 < \eps \leq 1$.
\end{proof}

\subsection{Searching for the \Frechet Distance}
\seclab{search}

\subsubsection{Searching in a Fixed Interval}
\seclab{search:fixed:interval}

It is now straightforward to perform a binary search on an interval
$[\alpha,\beta]$ to approximate the value of the \Frechet distance, if
it falls inside this interval. Indeed, partition this interval into
subintervals of length $\eps \alpha$ and perform a binary search to
find the interval that contains the \Frechet distance. There are
$O(\beta/\eps\alpha)$ intervals, and this would require $O( \log
(\beta / \eps \alpha) )$ calls to \deciderFr{}. By using exponential
subintervals, one can do slightly better, as testified by the
following lemma.

\begin{lemma}
    Given two curves $\curveA$ and $\curveB$ in $\Re^d$, a parameter
    $1 \geq \eps>0$, and an interval $[\alpha,\beta]$, one can perform
    a binary search in $[\alpha,\beta]$ and obtain a
    $(1+\eps)$-approximation to $\distFr{\curveA}{\curveB}$ if
    $\distFr{\curveA}{\curveB} \in [\alpha,\beta]$, or report that
    $\distFr{\curveA}{\curveB} \notin [\alpha,\beta]$.  The algorithm,
    denoted by \intervalFr{}$\pth[]{ \curveA, \curveB, [\alpha,
       \beta], \eps }$, takes $\ds O\pth{ \log \frac{\log
          (\beta/\alpha)}{\eps} }$ calls to \deciderFr{}.

    \lemlab{Frechet:naive}
\end{lemma}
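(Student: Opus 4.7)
The plan is to convert the continuous search for $\distFr{\curveA}{\curveB}$ into a binary search over a geometric discretization of $[\alpha,\beta]$. Concretely, I would define the grid values $\delta_i = (1+\eps)^i \alpha$ for $i = 0, 1, \ldots, k$, where $k = \ceil{\log_{1+\eps}(\beta/\alpha)} = O\pth{\log(\beta/\alpha)/\eps}$. These $k+1$ values partition $[\alpha,\beta]$ into sub-intervals whose endpoints have multiplicative ratio $1+\eps$, so that locating the correct sub-interval automatically yields the desired $(1+\eps)$-approximation.

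Next, I would run a standard binary search over the index set $\{0,1,\ldots,k\}$. At each step, for the median index $i$ of the current active range, invoke $\deciderFr(\curveA,\curveB,\delta_i,\eps)$ from \lemref{decider:2}. If that call returns outcome~(i), namely a $(1+\eps)$-approximation to $\distFr{\curveA}{\curveB}$, return it immediately. Otherwise the decider certifies either $\distFr{\curveA}{\curveB} < \delta_i$ or $\distFr{\curveA}{\curveB} > \delta_i$, and I would recurse on the corresponding lower or upper half.

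After at most $\ceil{\log_2(k+1)} = O\pth{\log\pth{\log(\beta/\alpha)/\eps}}$ decider calls, the active range collapses to a single index. If the accumulated answers certify $\distFr{\curveA}{\curveB} < \delta_0 = \alpha$ or $\distFr{\curveA}{\curveB} > \delta_k \geq \beta$, we report that $\distFr{\curveA}{\curveB} \notin [\alpha,\beta]$. Otherwise the transcript pinpoints an index $i$ with $\distFr{\curveA}{\curveB} \in (\delta_{i-1}, \delta_i]$, and returning $\delta_i$ gives a $(1+\eps)$-approximation since $\delta_i / \delta_{i-1} = 1+\eps$.

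There is no real technical obstacle here: the geometric grid guarantees the approximation ratio at every sub-interval, and \lemref{decider:2} provides a clean comparator with an early-termination escape hatch in outcome~(i). The only item that needs care is the boundary accounting --- correctly distinguishing between ``inside the interval'' and the two ``outside'' cases from the transcript of decider calls --- which is routine bookkeeping. The call count matches the claimed $O\pth{\log \pth{\log(\beta/\alpha)/\eps}}$ bound by construction.
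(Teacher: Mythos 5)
Your proposal is correct and mirrors the paper's own proof: both set up the geometric grid $\alpha(1+\eps)^i$ covering $[\alpha,\beta]$, run a binary search over the $O(\log(\beta/\alpha)/\eps)$ grid values using \deciderFr{} from \lemref{decider:2} as the comparator (with early termination when it returns an approximation directly), and exploit the ratio $1+\eps$ between consecutive grid points to get the approximation guarantee. The only cosmetic difference is that you take a ceiling for $k$ whereas the paper floors and appends $\beta$ as the last grid point, and you handle the out-of-interval bookkeeping a bit more explicitly; neither affects correctness or the $O\pth{\log\frac{\log(\beta/\alpha)}{\eps}}$ call count.
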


\begin{proof}
    Let $\alpha_i = \alpha(1+\eps)^i$ for $i=0, \ldots, M =
    \floor{\log_{1+\eps} (\beta/\alpha)}$ and
    $\alpha_{M+1}=\beta$. Perform a binary search, using
    \deciderFr{}$(\curveA, \curveB, \delta,\eps)$ to find the two
    values $\alpha_i$ and $\alpha_{i+1}$ such that
    $\alpha_i\leq\delta=\distFr{\curveA}{\curveB}\leq\alpha_{i+1}$. Since
    $\alpha_{i+1} = (1+\eps)\alpha_i$, we conclude that we found the
    required approximation.
    
    It might be that during this procedure one of the calls to
    \deciderFr{}$(\curveA, \curveB, \delta,\eps)$ found the required
    approximation, and in this case we abort the binary search and
    just return this approximation.
    
    This process requires $O( \log M) = O\pth{\log
       \log_{1+\eps}(\beta/\alpha)}$ calls to \deciderFr. Observe that
    \begin{align*}
        M = \log_{1+\eps} \frac{\beta}{\alpha} = \frac{\ln
           (\beta/\alpha)}{ \ln( 1+\eps ) } =O \pth{ \frac{1}{\eps}
           \log \frac{\beta}{\alpha}}.
    \end{align*}
    Indeed, $e^{x/2} \leq 1+x \leq e^x$ for $x \in [0,1]$, and this
    implies that $x/2 \leq \ln (1+x) \leq x$, which is the inequality
    used above.
\end{proof}

\subsubsection{Searching over Events}

Clearly, the procedure \intervalFr{}$\pth[]{ \curveA, \curveB,[\alpha,
   \beta],\eps}$ alone does not suffice to solve our main problem,
since the interval of distances we are searching over might have
arbitrarily large ``spread'' (i.e., $\log \beta/\alpha$ might be
arbitrarily large).  However, the \Frechet distance must be
sufficiently close to a free space event in one of the ``approximate''
diagrams, i.e., a free space diagram of the two simplified
curves. Thus, we can identify two kinds of critical values to search
over, which are candidate values for the approximate \Frechet
distance.  These are the events where
\begin{inparaenum}[(i)]
    \item the simplification of an input curve changes, or
    \item the reachability within the approximate free space diagram
    changes (i.e., a free space event; see \secref{f:s:events}).
\end{inparaenum}

The traditional solution to overcome this problem is to use parametric
search. However, in our case, since we are only interested in
approximation, we can use a simpler, ``approximate'', search.  It is
sufficient to search over a set of values which approximate the event
values by a constant factor, since we will use \lemref{Frechet:naive}
to refine the resulting search interval in the main algorithm.  Note,
for instance, that we can easily use this lemma to turn a constant
factor approximation of the \Frechet distance into a
$(1+\eps)$-approximation.

\begin{algorithm}
    Let \approxBinarySearch{}$(\curveA,\curveB,Z)$ denote the
    algorithm that performs a binary search over the values of $Z$, to
    compute the atomic interval of $Z$ that contains the \Frechet
    distance between $\curveA$ and $\curveB$. This procedure uses
    $\deciderFr$ (\lemref{decider:2}) to perform the decisions during
    the search.
    
    \alglab{a:binary:search}
\end{algorithm}


\subsubsection{Searching over Simplifications }
\seclab{search:simplifications}

Consider the events when the simplified curves change, see
\algref{simplification:algorithm}. Consider the set of all pairwise
distances between vertices of $\curveA$ and $\curveB$. Observe that it
breaks the real line into $\binom{n}{2} + 1$ atomic intervals, such
that in each such interval the simplification does not change. Thus
$\simpX{\curveA,\sRadius}$ (resp.  $\simpX{\curveB,\sRadius}$) might
result in $O(n^2)$ different curves depending on the value of
$\sRadius$, where $n$ is the total number of vertices of $\curveA$ and
$\curveB$.  As a first step we would therefore like to use
\algref{a:binary:search} to perform a binary search over those
distances to find the atomic interval that contains the required
\Frechet distance.  Naively, this would require us to perform distance
selection. However, it is believed that exact distance selection
requires $\Omega\pth{n^{4/3}}$ time in the worst case
\cite{e-rcsgp-95}. To overcome this we will perform an approximate
distance selection, as suggested by Aronov \etal \cite{ahkww-fdcr-06}.

\begin{lemma}
    Given a set $\PntSet$ of $n$ points in $\Re^d$.  Then, one can
    compute in $O(n \log n)$ time a set $Z$ of $O(n)$ numbers, such
    that for any $y \in \Pairwise{\PntSet}$, there exist numbers $x,x'
    \in Z$ such that $x \leq y \leq x' \leq 2x$.  Let
    $\approxDistances(\PntSet)$ denote this algorithm.
    
    \lemlab{all:distances}
\end{lemma}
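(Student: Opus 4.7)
The plan is to construct $Z$ from a Well-Separated Pair Decomposition (\WSPD) of $\PntSet$. Recall that for a separation parameter $s \geq 1$, a \WSPD produces a collection of $m = O(s^d n)$ pairs $(A_1, B_1), \ldots, (A_m, B_m)$ of subsets of $\PntSet$ such that (i) every unordered pair $\{p,q\} \subseteq \PntSet$ is covered by exactly one index $i$ (with $p \in A_i$ and $q \in B_i$, or vice versa), and (ii) each pair $(A_i, B_i)$ is $s$-separated, in the sense that both $A_i$ and $B_i$ lie in balls of some common radius $r_i$ whose centers are at distance at least $s\cdot r_i$. For constant $s$ and constant $d$, such a decomposition has $m = O(n)$ pairs and can be computed in $O(n \log n)$ time via a compressed quadtree.

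I would fix $s$ to be a sufficiently large constant (say $s = 14$). For each pair pick arbitrary representatives $p_i \in A_i$ and $q_i \in B_i$ and set $d_i = \lenX{p_i - q_i}$. A short triangle-inequality argument using the separation condition (in particular $d_i \geq (s-2)\, r_i$) yields, for every $p \in A_i$ and $q \in B_i$,
\[
   \alpha\, d_i \;\leq\; \lenX{p - q} \;\leq\; \beta\, d_i, \qquad \alpha = 1 - \tfrac{4}{s-2}, \quad \beta = 1 + \tfrac{4}{s-2}.
\]
Define $Z = \bigcup_{i=1}^{m} \{\alpha\, d_i,\ \beta\, d_i\}$, so that $|Z| \leq 2m = O(n)$ and $Z$ can be populated in $O(n)$ additional time after the \WSPD is built.

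For correctness, let $y \in \Pairwise{\PntSet}$ be realized by points $p, q \in \PntSet$, and let $i$ be the unique \WSPD index covering $\{p, q\}$. Set $x = \alpha\, d_i$ and $x' = \beta\, d_i$; both lie in $Z$, the displayed inequality gives $x \leq y \leq x'$, and a direct computation shows $x'/x = \beta/\alpha = (s+2)/(s-6) \leq 2$ whenever $s \geq 14$. Hence $x' \leq 2x$, which is exactly the sandwich condition required.

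The total running time is $O(n \log n)$ for the \WSPD plus $O(n)$ for populating $Z$, matching the claimed bound. The only delicate step is calibrating $s$ so that the two-sided slack ratio $\beta/\alpha$ becomes at most $2$ while keeping $m = O(n)$; since $s$ and $d$ are treated as constants this calibration is trivial, so I do not foresee any genuine obstacle beyond invoking the standard \WSPD machinery.
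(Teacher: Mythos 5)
Your proposal is correct and is essentially the same argument the paper uses: build a well-separated pair decomposition in $O(n\log n)$ time, and for each of the $O(n)$ pairs insert two scaled copies of a representative inter-pair distance into $Z$, so that the true distance for any pair covered by that WSPD pair is sandwiched between the two values with ratio at most $2$. The paper uses an $8$-WSPD (stated via the diameter condition) and the values $\ell_i = (3/4)\distX{\pntA_i}{\pntB_i}$ and $2\ell_i$, while you use $s=14$ with $\alpha d_i$ and $\beta d_i$; these are only cosmetic differences in the choice of separation constant and scaling factors.
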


\begin{proof}
    Compute an $8$-well-separated pairs decomposition of
    $\PntSet$. Using the algorithm of Callahan and Kosaraju
    \cite{ck-dmpsa-95} this can be done in $O(n \log n)$ time, and
    results in a set of pairs of subsets $\brc{ (X_1, Y_1), \ldots,
       (X_m, Y_m)}$, where $m=O(n)$, such that for any two points
    $\pntA, \pntB \in \PntSet$ there exists a pair $(X_i, Y_i)$ in the
    above decomposition, such that:
    \begin{inparaenum}[(i)]
        \item $\pntA \in X_i$ and $\pntB \in Y_i$ (or vice versa), and
        \item $\max(\diameterX{X_i}, \diameterX{Y_i}) \leq
        \min_{\pntA_i \in X_i,\pntB_i \in Y_i}
        \distX{\pntA_i}{\pntB_i}/8.$
    \end{inparaenum}
    
    This implies that the distance of any pair of points in $X_i$ and
    $Y_i$, respectively, are the same up to a small constant.  As
    such, for every pair $(X_i, Y_i)$, for $i=1,\ldots, m$, we pick
    representative points $\pntA_i \in X_i$ and $\pntB_i \in Y_i$, and
    set $\ell_i = (3/4)\distX{\pntA_i}{\pntB_i}$. Let $Z = \brc{
       \ell_1, \ldots, \ell_m, 2\ell_1, \ldots, 2\ell_m}$ be the
    computed set of values.
    
    Consider any pair of points $\pntA, \pntB \in \PntSet$.  For the
    specific pair $(X_i, Y_i)$ that contains the pair of points
    $\pntA$ and $\pntB$ that we are interested in, we have that
    $\ell_i = (3/4)\distX{\pntA_i}{\pntB_i} \leq
    \distX{\pntA_i}{\pntB_i} - \diameterX{X_i} - \diameterX{Y_i} \leq
    \distX{\pntA}{\pntB} \leq \distX{\pntA_i}{\pntB_i} +
    \diameterX{X_i} + \diameterX{Y_i} \leq
    (5/4)\distX{\pntA_i}{\pntB_i}\leq 2 \ell_i$, thus establishing the
    claim.
\end{proof}

\subsubsection{Monotonicity Events}

The following lemma testifies that the radius of a monotonicity event
must be ``close'' to either a vertex-edge event or to the distance
between two vertices. Since we will approximate the vertex-vertex
distances and perform a binary search over them, this implies that we
further only need to consider vertex-edge events.  Furthermore, by
\obsref{extract:v:e}, the number of those vertex-edge events which
remain in the resulting search range can be bounded by the complexity
of the reachable free space.

\begin{lemma}
    Let $x$ be the radius of a monotonicity event involving vertices
    $\pntA, \pntB$ and a segment $\segA$. Then there exists a number
    $y$ such that $y/2 \leq x \leq 3y$, and $y$ is either in $\PW =
    \Pairwise{\VertexSet{\curveA} \cup \VertexSet{\curveB}}$ or $y$ is
    the radius of a vertex-edge event.
    
    \lemlab{monotone:not:boring}
\end{lemma}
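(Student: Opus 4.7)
The plan is to take $y$ among the three quantities $y_A$, $y_B$, and $d := \distX{\pntA}{\pntB}$, where $y_A, y_B$ denote the distances from $\pntA, \pntB$ to $\segA$ respectively. Writing the endpoints of $\segA$ as $u, v$ (with $u$ preceding $v$ along the direction of $\segA$), each of $y_A, y_B$ is either a vertex-edge event radius (when the closest-point foot on $\segA$ is interior) or a vertex-vertex distance in $\PW$ of the form $\distX{u}{\pntA}$, $\distX{v}{\pntA}$, $\distX{u}{\pntB}$, or $\distX{v}{\pntB}$; and $d \in \PW$. Let $\pntE \in \segA$ be the equidistant witness produced by the monotonicity event, $\distX{\pntE}{\pntA} = \distX{\pntE}{\pntB} = x$. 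The free upper bounds are $y_A, y_B \leq x$ (since $\pntE \in \segA$) and $d \leq 2x$ (triangle inequality at $\pntE$). If $\max(y_A, y_B) \geq x/3$, take $y$ to be that maximum; then $y \in [x/3, x]$ satisfies both $y/2 \leq x$ and $x \leq 3y$. So I assume henceforth $y_A, y_B < x/3$ and aim for $y = d$.

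The decisive input is the monotonicity-event geometry: the directed segment $\segA$ first enters $\BallX{\pntB, x}$ and only later enters $\BallX{\pntA, x}$, with the ``just-touching'' happening at $\pntE$, which is simultaneously the exit of $\segA$ from $\BallX{\pntB, x}$ and its entry into $\BallX{\pntA, x}$. If $\pntE \in \{u, v\}$ the conclusion is immediate (for instance, $\pntE = v$ forces $\distX{v}{\pntA} = x$), so I assume $\pntE$ is interior to $\segA$. This forces $\distX{u}{\pntA} > x$ and $\distX{v}{\pntB} > x$, since $u$ lies strictly before the first entry of $\segA$ into $\BallX{\pntA, x}$, and $v$ strictly after the last exit from $\BallX{\pntB, x}$. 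Let $\pi_A, \pi_B$ be the closest points on $\segA$ to $\pntA, \pntB$. The inequalities $\distX{u}{\pntA} > x > y_A$ and $\distX{v}{\pntB} > x > y_B$ rule out $\pi_A = u$ and $\pi_B = v$. If $\pi_A = v$ then $y_A = \distX{v}{\pntA}$, and the triangle inequality at $v$ yields $d \geq \distX{v}{\pntB} - \distX{v}{\pntA} > x - y_A > 2x/3$; the symmetric case $\pi_B = u$ is identical.

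In the remaining case both $\pi_A, \pi_B$ lie in the interior of $\segA$. The parabolic function $t \mapsto \distX{\segA(t)}{\pntA}^2$ equals $x^2$ at $\pntE$ and strictly decreases as $\segA$ enters $\BallX{\pntA, x}$, so its minimizer $\pi_A$ lies on the $v$-side of $\pntE$ along $\segA$; symmetrically $\pi_B$ lies on the $u$-side, so $\pi_A$ and $\pi_B$ sit on opposite sides of $\pntE$. Pythagoras at each foot gives $\distX{\pntE}{\pi_A} = \sqrt{x^2 - y_A^2}$ and $\distX{\pntE}{\pi_B} = \sqrt{x^2 - y_B^2}$, and since orthogonal projection onto the supporting line of $\segA$ is $1$-Lipschitz,
\[
  d \;\geq\; \distX{\pi_A}{\pi_B} \;=\; \sqrt{x^2 - y_A^2} + \sqrt{x^2 - y_B^2} \;>\; 2\sqrt{x^2 - x^2/9} \;=\; \frac{4\sqrt{2}}{3}\, x \;>\; \frac{2}{3}\, x.
\]
In every sub-case $y = d \in (2x/3, 2x]$ satisfies $y/2 \leq x \leq 3y$. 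The main obstacle I expect is the clean derivation of the monotonicity-event consequences $\distX{u}{\pntA} > x$ and $\distX{v}{\pntB} > x$, together with the opposite-sides placement of $\pi_A, \pi_B$; the threshold $x/3$ on $y_A, y_B$ is calibrated precisely so that $2\sqrt{1 - 1/9}\, x > 2x/3$ closes the argument.
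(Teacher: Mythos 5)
Your proof is correct, and it selects $y$ from the same menu of candidates the paper uses (the distance from a vertex to the segment, or the vertex--vertex distance $\distX{\pntA}{\pntB}$), but the route through the geometry is genuinely different. The paper lets $\pntA'$, $\pntB'$ be the feet of $\pntA$, $\pntB$ on $\segA$, observes that $\pntC$ (your $\pntE$) lies on the segment $\pntA'\pntB'$ and that $\distX{\pntA'}{\pntB'} \leq \distX{\pntA}{\pntB}$ by contraction, derives $\left|\,x - \distX{\pntA}{\pntA'}\,\right| \leq \distX{\pntA}{\pntB}$ from the triangle inequality alone, and then splits cleanly on $\distX{\pntA}{\pntA'}$ versus $2\distX{\pntA}{\pntB}$ --- it never touches $y_B$, never unpacks the directed entry/exit structure of the monotonicity event, and never invokes Pythagoras. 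Your version instead thresholds $\max(y_A, y_B)$ against $x/3$ and, in the hard case, derives explicit consequences of the monotonicity geometry ($\distX{u}{\pntA} > x$, $\distX{v}{\pntB} > x$, the opposite-sides placement of the two feet around $\pntE$) and uses Pythagoras plus $1$-Lipschitzness of orthogonal projection to lower-bound $\distX{\pntA}{\pntB}$. This buys you a sharper lower bound ($\distX{\pntA}{\pntB} > (4\sqrt{2}/3)\,x$) than the statement needs, but at the cost of a longer proof with three sub-cases; the paper's argument is shorter because $x \leq \distX{\pntA}{\pntA'} + \distX{\pntA}{\pntB}$ already gives $x \leq 3\distX{\pntA}{\pntB}$ once $\distX{\pntA}{\pntA'} \leq 2\distX{\pntA}{\pntB}$, with no need to locate the feet relative to $\pntE$. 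Both are sound, and it is worth noting that the key geometric fact you and the paper both rely on --- that $\pntE$ separates the two feet, equivalently that $\pntE$ lies on $\pntA'\pntB'$ --- really does come from the directed monotonicity-event condition and is not free, so your explicit derivation of it is a reasonable thing to have spelled out.
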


\begin{proof}
    Let $\pntC$ be the intersection point of $\SphereX{\pntA, x} \cap
    \SphereX{\pntB, x}$ which lies on $\segA$.  Let $\pntA'$ (resp.
    $\pntB'$) be the closest point on $\segA$ to $\pntA$ (resp.
    $\pntB$).
    
    \ParaWPic{\includegraphics[scale=1]{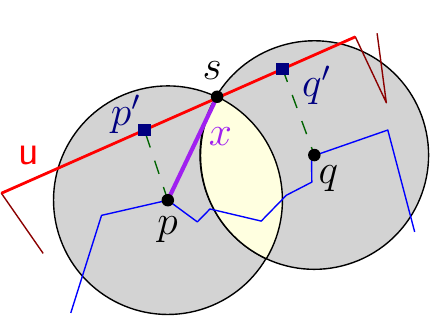}}
    { 
       
       Clearly $\distX{\pntA'}{\pntB'} \leq \distX{\pntA}{\pntB}$
       (since the projection onto the nearest neighbor of a convex set
       is a contraction), and since $\pntA'\in\BallX{\pntA, x}$ and
       $\pntB'\in\BallX{\pntB,x}$,
       %
       %
       the point $\pntC$ lies on the segment $\pntA' \pntB'$.
       
       This implies that $x = \distX{\pntA}{\pntC} \leq
       \distX{\pntA}{\pntA'} + \distX{\pntA'}{\pntC} \leq
       \distX{\pntA}{\pntA'} + \distX{\pntA'}{\pntB'} \leq
       \distX{\pntA}{\pntA'} + \distX{\pntA}{\pntB}$, by the triangle
       inequality. %
    }%
    
    A similar argument implies that
    \begin{align*}
        x = \distX{\pntA}{\pntC} \geq \distX{\pntA}{\pntA'} -
        \distX{\pntA'}{\pntC} \geq \distX{\pntA}{\pntA'} -
        \distX{\pntA'}{\pntB'} \geq \distX{\pntA}{\pntA'} -
        \distX{\pntA}{\pntB}.
    \end{align*}
    
    If $\distX{\pntA}{\pntA'} \geq 2\distX{\pntA}{\pntB}$ then the
    above implies that $x \in [1/2,3/2]\distX{\pntA}{\pntA'}$.  If
    $\pntA'$ is an endpoint of $\segA$ then $\distX{\pntA}{\pntA'}$ is
    in $\PW$. Otherwise, $\distX{\pntA}{\pntA'}$ is the radius of the
    vertex-edge event between $\pntA$ and $\segA$.  In either case,
    this implies the claim.
    
    If $\distX{\pntA}{\pntA'} \leq 2\distX{\pntA}{\pntB}$ then $ x =
    \distX{\pntA}{\pntC} %
    \leq %
    \distX{\pntA}{\pntA'} + \distX{\pntA}{\pntB} \leq
    2\distX{\pntA}{\pntB} + \distX{\pntA}{\pntB} = 3
    \distX{\pntA}{\pntB}$, and of course $\distX{\pntA}{\pntB} \in
    \PW$.  Now, the two balls of radius $x$ centered at $\pntA$ and
    $\pntB$, respectively, cover the segment $\pntA \pntB$, and we
    have that $\distX{\pntA}{\pntB}/2 \leq x$, which implies the
    claim.
\end{proof}

\subsubsection{Searching with a Fixed Simplification}
\seclab{fixed:simplification}

Assume that we have found simplifications $\curveC$ and $\curveD$,
such that the \Frechet distance of those curves yields the desired
$(1+\eps)$-approximation. Clearly, an approximation of
$\distFr{\curveC}{\curveD}$ suffices for our result.  To this end, let
$\intervalExactFr{}( \curveA, \curveB, [\alpha, \beta], \eps)$ be the
variant of \intervalFr{} from \lemref{Frechet:naive} that uses
\lemref{graph:fast} directly instead of calling \deciderFr.  This
version searches for the \Frechet distance in the given interval, but
does not perform simplification before calling the decision
procedure. It returns a $(1+\eps)$-approximation of the \Frechet
distance, given that it is contained in this interval. Note that
correctness and running time of \lemref{Frechet:naive} are not
affected by this modification.

\begin{lemma}
    Let $\curveC$ and $\curveD$ be two given curves in $\Re^d$, with
    total complexity $n$, and let $[\loLimit, \hiLimit]$ be an
    interval, such that
    \begin{inparaenum}[(i)]
        \item $\distFr{\curveC}{\curveD} \in [\loLimit, \hiLimit]$,
        and
        \item there is no value of $\PW =
        \Pairwise{\VertexSet{\curveC} \cup \VertexSet{\curveD}}$ in
        the interval $[\loLimit, \hiLimit]$.
    \end{inparaenum}
    Then, for $\eps > 0$, one can $(1+\eps)$-approximate
    $\distFr{\curveC}{\curveD}$ and compute reparametrizations in
    $O\pth{ (n + N) \log (\nfrac{N}{\eps}) }$ time, where $N =
    \Nleq{\hiLimit}(\curveC,\curveD)$.
    
    Let $\approxFrDirect( \curveC, \curveD, [\loLimit, \hiLimit],
    \eps)$ denote this algorithm.
    
    \lemlab{direct}
\end{lemma}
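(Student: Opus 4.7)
The plan is to combine an exact binary search over critical event radii with a short refinement phase, both driven by the exact reachability computation of \lemref{graph:fast}. First, I use \obsref{extract:v:e} to compute in $O(N)$ time all vertex-edge event radii that are at most $\hiLimit$. Restricting to those lying in $[\loLimit, \hiLimit]$ yields a sorted set $Z$ of size $O(N)$. I then run \approxBinarySearch{} over $Z$, replacing \deciderFr{} by the exact decision procedure of \lemref{graph:fast}. Each exact decision takes $O(N)$ time since $\Nleq{\delta} \leq N$ for all $\delta \leq \hiLimit$, and the binary search uses $O(\log |Z|)=O(\log N)$ calls, giving $O(N\log N)$ for this phase. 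This localizes $\distFr{\curveC}{\curveD}$ to an atomic interval $(\alpha,\beta) \subseteq [\loLimit, \hiLimit]$ of $Z$.

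The key observation is that this interval contains no vertex-edge event radius (by construction of $Z$) and no vertex-vertex distance (by hypothesis~(ii)). Consequently, if $\delta^* := \distFr{\curveC}{\curveD}$ does not coincide with $\alpha$ or $\beta$ (which I verify with two additional exact-decider calls), it must be realized by a monotonicity event. Applying \lemref{monotone:not:boring} to this event produces an associated critical radius $y$ satisfying $y/2 \leq \delta^* \leq 3y$, with $y \notin (\alpha,\beta)$. The cases $y \leq \alpha$ and $y \geq \beta$ respectively give $\delta^* \leq 3\alpha$ and $\delta^* \geq \beta/2$, so $\delta^* \in [\alpha, 3\alpha] \cup [\beta/2, \beta]$, i.e., two intervals of ratio at most $3$.

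Finally, I invoke $\intervalExactFr$ on each of these two constant-ratio intervals to obtain a $(1+\eps)$-approximation and the associated reparametrizations. By \lemref{Frechet:naive}, each such call makes $O(\log(\log(3)/\eps)) = O(\log(1/\eps))$ exact-decider invocations of cost $O(n+N)$, so the refinement runs in $O((n+N)\log(1/\eps))$ time. Summing the three phases yields $O(N) + O(N\log N) + O((n+N)\log(1/\eps)) = O((n+N)\log(N/\eps))$, matching the claimed bound.

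The main obstacle is the reduction from the potentially wide atomic interval $(\alpha,\beta)$ to two intervals of bounded ratio; this is exactly where \lemref{monotone:not:boring} is essential. Without it, applying $\intervalExactFr$ directly to $[\loLimit,\hiLimit]$ (or even to $(\alpha,\beta)$) would introduce a $\log\log(\hiLimit/\loLimit)$ factor that cannot in general be absorbed into $\log(N/\eps)$. Everything else is standard plumbing of decider-based search over sorted event radii.
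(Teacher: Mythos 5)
Your proof is correct and follows essentially the same route as the paper's: compute the vertex-edge event radii reachable at radius $\hiLimit$, binary-search with the exact decider to isolate an atomic interval, use \lemref{monotone:not:boring} to conclude the Fréchet distance must lie within a constant factor of one of the two interval endpoints, and then refine with $\intervalExactFr$ on two constant-ratio intervals. The only cosmetic differences are the choice of constants ($[\alpha,3\alpha]\cup[\beta/2,\beta]$ versus the paper's $[\alpha,4\alpha]\cup[\beta/4,\beta]$) and the explicit pre-check of whether $\distFr{\curveC}{\curveD}$ equals $\alpha$ or $\beta$, which the paper handles implicitly by including those endpoints in the refinement intervals.
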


\begin{proof}
    For two real numbers $x,y >0$, we define $\ratioX{x}{y} =
    \max(x,y)/ \min(x,y)$.
    
    Compute $\FDleq{\hiLimit}(\curveC, \curveD)$, using
    \lemref{graph:fast}.  Next, using \obsref{extract:v:e}, compute
    from $\FDleq{\hiLimit}(\curveC, \curveD)$ the set $\veEvents$ of
    all the radii of the vertex-edge events of $\curveC$ and $\curveD$
    with radius at most $\hiLimit$.  Next, we sort $\veEvents$, and
    perform a binary search over $\veEvents$, using
    \lemref{graph:fast}, for the atomic interval
    $\Interval=[\alpha,\beta]$ of $\veEvents$ that contains the
    \Frechet distance $\distFr{\curveC}{\curveD}$.  Next, call
    \intervalExactFr{}$( \curveC, \curveD, [\alpha, 4\alpha], \eps)$
    and \intervalExactFr{}$( \curveC, \curveD, [\beta/4, \beta],
    \eps)$.  We claim that one of these two searches performed on the
    respective intervals will discover two consecutive values $x$ and
    $(1+\eps)x$, such that the two corresponding calls to the
    algorithm of \lemref{Frechet:naive} imply that
    $\distFr{\curveC}{\curveD} \in [x, (1+\eps)x]$.
    
    Indeed, the interior of $[\alpha,\beta]$ does not contain any
    value in $\PW$ or a radius of a vertex-edge event of $\curveC$ and
    $\curveD$. Therefore, the interval $[\alpha,\beta]$ might contain
    only monotonicity events of $\curveC$ and $\curveD$.  By
    \lemref{monotone:not:boring}, for a monotonicity event with radius
    $\Radius$ there exists a $y \in \veEvents \cup \PW$, such that
    $\ratioX{\Radius}{y} \leq 3$.  But since there is no value of
    $\veEvents \cup \PW$ in the interior of $[\alpha,\beta]$, and
    therefore, for any $\Radius'' \in [4\alpha, \beta/4]$ and $y'' \in
    \veEvents \cup \PW$, we have that $\ratioX{\Radius''}{y''} \geq
    4$.
    
    We conclude that no monotonicity event, vertex-edge event, or
    value of $\PW$ lies in the interval $[4\alpha, \beta/4]$. Since
    the \Frechet distance must be equal to one such value, it follows
    that $\distFr{\curveC}{\curveD} \notin (4\alpha, \beta/4)$, but
    this implies that either $\distFr{\curveC}{\curveD} \in [\alpha,
    4\alpha]$ or $\distFr{\curveC}{\curveD} \in [\beta/4, \beta]$. In
    either case, the above algorithm would have found the approximate
    distance.
    
    Computing and sorting the set of vertex-edge events takes $O(N
    \log N)$ time by \obsref{extract:v:e}.  The binary search requires
    $O( \log \cardin{\veEvents} )$ calls to the algorithm of
    \lemref{graph:fast}. The two calls to \intervalExactFr{} require
    $O(\log(1/\eps))$ calls to \lemref{graph:fast}.  Now, observe that
    all these calls to the algorithm of \lemref{graph:fast} are done
    with values of $\delta \leq \hiLimit$. Thus the complexity of the
    \relevant free space is bounded (up to a constant factor) by the
    number of vertex-edge events of values $\leq \hiLimit$, and this
    number is bounded by $\cardin{\veEvents}$. Therefore, a call to
    \lemref{graph:fast} takes $O\pth{\cardin{\veEvents}}$ time.  Thus,
    the overall running time is $O\pth{ \pth{ n +\cardin{\veEvents}}
       \log (\cardin{\veEvents}/\eps) } $, and by definition
    $\cardin{\veEvents}= O\pth{ \Nleq{\hiLimit}(\curveC,\curveD)}$.
\end{proof}

\subsection{The Approximation Algorithm}
\seclab{the:algorithm}

The resulting approximation algorithm is depicted in
\figref{algorithm}.  It will be used by the final approximation
algorithm as a subroutine. We first analyze this basic algorithm. We
will then show how to use it, in \lemref{improved:running:time} below,
to get a faster approximation algorithm.  The algorithm depicted in
\figref{algorithm} performs numerous calls to \deciderFr, with
approximation parameter $\eps>0$.  If any of these calls discover the
approximate distance, then the algorithm immediately stops and returns
the approximation. Therefore, at any point in the execution of the
algorithm, the assumption is that all previous calls to \deciderFr
returned a direction where the optimal distance must lie.  In
particular, a call to \intervalFr{}$\pth[]{ \curveA, \curveB,
   \Interval, \eps}$, would either find the approximate distance in
the interval $\Interval$ and return immediately, or the desired value
is outside this interval.

\begin{figure}[t]
    \begin{center}
        \fbox{~~~~~~\begin{minipage}{0.90\linewidth}
               \hspace{-0.7cm}\approxFr{}($\curveA$, $\curveB$, $\eps$
               )
               \begin{compactenum}[(A)]
                   \item $\PntSet = \VertexSet{\curveA} \cup
                   \VertexSet{\curveB}$
                   
                   \item $\sEvents \leftarrow
                   \approxDistances(\PntSet)$
                   (\lemref{all:distances}).
                   
                   \item
                   \label{simplifications:search:n}
                   $[\alpha,\beta] \leftarrow \approxBinarySearch(
                   \curveA, \curveB, \sEvents, \eps)$
                   (\algref{a:binary:search}).
                   
                   \item
                   \label{search:fringe1:n}
                   Call \intervalFr{}$\pth[]{ \curveA, \curveB,
                      [\alpha, 4\alpha'], \eps }$, where $\alpha'
                   =(30/\eps)\alpha$ (\lemref{Frechet:naive}).
                   
                   \item
                   \label{search:fringe2:n}
                   Call \intervalFr{}$\pth[]{ \curveA, \curveB,
                      [\beta'/4, \beta], \eps }$, where $\beta' =
                   \beta/3$.
                   
                   \item
                   \label{fixed:curves:n}
                   Let $\curveAs = \simpX{\curveA, \sRadius}$ and
                   $\curveBs = \simpX{\curveB, \sRadius}$, for
                   $\sRadius = 3\alpha$
                   (\algref{simplification:algorithm})
                   \item
		   \label{search:direct}
		   $\approxDist \leftarrow \approxFrDirect( \curveAs,
                   \curveBs, [\alpha',\beta'], \eps/4)$
                   (\lemref{direct}).

                   \item
		   \label{final:step}
		   Compute and return the resulting
                   reparameterizations of $\curveA$ and $\curveB$ and
                   their width as the approximation.
               \end{compactenum}
           \end{minipage}~~}
    \end{center}
    \vspace{-0.6cm}
    \caption{The basic approximation algorithm.}
    \figlab{algorithm}
\end{figure}

\subsubsection{Correctness}

\begin{lemma}
    Given two polygonal curves $\curveA$ and $\curveB$, and a
    parameter $1 > \eps > 0 $, the algorithm
    $\approxFr(\curveA,\curveB,\eps)$ computes a
    $(1+\eps)$-approximation to $\distFr{\curveA}{\curveB}$.
\end{lemma}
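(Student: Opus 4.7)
The plan is to case-split according to whether \approxFr{} returns an approximation early during one of its subroutine calls or whether it actually runs through to the call \approxFrDirect{} in step~(\ref{search:direct}). In the early-return case there is nothing to do: every such return is routed through \deciderFr{} and \intervalFr{}, each of which has already been shown (\lemref{decider:2} and \lemref{Frechet:naive}) to output a valid $(1+\eps)$-approximation along with certifying reparameterizations. So the real content is the case where the algorithm reaches step~(\ref{search:direct}).

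In that case I would first observe that $\distFr{\curveA}{\curveB}\in(4\alpha',\beta'/4)$. The \approxBinarySearch{} in (\ref{simplifications:search:n}) guarantees $\distFr{\curveA}{\curveB}\in[\alpha,\beta]$, and the fringe calls in (\ref{search:fringe1:n}) and (\ref{search:fringe2:n}) would have returned an approximation had the true value landed in $[\alpha,4\alpha']$ or $[\beta'/4,\beta]$. Next I would verify the two hypotheses of \lemref{direct} for the simplified curves $\curveAs,\curveBs$ on $[\alpha',\beta']$. For the containment $\distFr{\curveAs}{\curveBs}\in[\alpha',\beta']$, \lemref{simplification:distance} and the triangle inequality give $|\distFr{\curveAs}{\curveBs}-\distFr{\curveA}{\curveB}|\leq 2\sRadius = 6\alpha$; combined with $\distFr{\curveA}{\curveB}\in(4\alpha',\beta'/4)$ and the choices $\alpha'=(30/\eps)\alpha$, $\beta'=\beta/3$, $\beta>4\alpha'$, a short computation using $\eps\leq 1$ places $\distFr{\curveAs}{\curveBs}$ well inside $[\alpha',\beta']$.

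For the ``no vertex-vertex distance lies in $[\alpha',\beta']$'' condition, vertices of $\curveAs,\curveBs$ are also vertices of $\curveA,\curveB$, so \lemref{all:distances} produces for any such distance $y$ some $x,x'\in\sEvents$ with $x\leq y\leq x'\leq 2x$. Since $(\alpha,\beta)$ contains no element of $\sEvents$, either $x'\leq\alpha$ (forcing $y\leq\alpha<\alpha'$), $x\geq\beta$ (forcing $y\geq\beta>\beta'$), or $x\leq\alpha$ and $x'\geq\beta$ (forcing $y\leq 2\alpha<\alpha'$ because $\alpha'\geq 30\alpha$). In every case $y\notin[\alpha',\beta']$, verifying the hypothesis.

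With both hypotheses met, \lemref{direct} returns a $(1+\eps/4)$-approximation $\approxDist$ to $\distFr{\curveAs}{\curveBs}$ together with reparameterizations of width $\approxDist$; I would compose these with the reparameterizations from \lemref{simplification:distance} to obtain reparameterizations of $\curveA,\curveB$ of width at most $\approxDist+2\sRadius\leq (1+\eps/4)(\distFr{\curveA}{\curveB}+2\sRadius)+2\sRadius$. The main obstacle is the constant-chasing at this final step: using $\sRadius=3\alpha$ and $\distFr{\curveA}{\curveB}>4\alpha'=120\alpha/\eps$, one must check that the lower-order additive terms in $\sRadius$ are dominated by $(3\eps/4)\distFr{\curveA}{\curveB}$, so that the chained approximation and simplification errors telescope to within a $(1+\eps)$ factor.
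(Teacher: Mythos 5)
Your proposal is correct and follows essentially the same route as the paper's proof: case-split on early return, deduce $\distFr{\curveA}{\curveB}\in[4\alpha',\beta'/4]$ from steps~(\ref{simplifications:search:n})--(\ref{search:fringe2:n}), apply the triangle inequality to push the containment to $\distFr{\curveAs}{\curveBs}\in[\alpha',\beta']$, invoke \lemref{direct}, and chase constants to absorb the simplification error into a $(1+\eps)$ factor. Your constant bookkeeping at the end (the $4.5\sRadius$ slack against $(3\eps/4)\distFr{\curveA}{\curveB}$) is slightly tighter than the paper's $5\sRadius\leq(\eps/2)\distFr{\curveA}{\curveB}$ but both close the argument.

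One place where you actually improve on the paper's write-up: the paper's correctness proof only checks hypothesis (i) of \lemref{direct} (that $\distFr{\curveAs}{\curveBs}\in[\alpha',\beta']$) and silently passes over hypothesis (ii) (that no value of $\Pairwise{\VertexSet{\curveAs}\cup\VertexSet{\curveBs}}$ lies in the interval). That fact is established only later, inside \lemref{simplification:stable}, as part of the running-time analysis. Your explicit three-way case analysis via the WSPD-based $\sEvents$ set fills that gap cleanly in the correctness argument itself. The only detail you leave implicit, which the paper makes a point of stating, is that the returned width $\Delta$ also satisfies $\Delta\geq\distFr{\curveA}{\curveB}$ because it is the width of a concrete pair of reparameterizations; without that one-liner the output is only an upper bound and the word ``approximation'' would be slightly off.
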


\begin{proof}
    If the algorithm found the approximation before step
    (\ref{fixed:curves:n}), then clearly it is the desired
    approximation, and we are done. (In particular, this must be the
    case if $4\alpha' > \beta'/4$.)
    
    Otherwise, because of (\ref{simplifications:search:n}), we know
    that $\distFr{\curveA}{\curveB} \in [\alpha,\beta]$.  By steps
    (\ref{search:fringe1:n}) and (\ref{search:fringe2:n}) it must be
    that $\distFr{\curveA}{\curveB} \in [4\alpha', \beta'/4]$.  Since
    $\sRadius = 3\alpha = (\eps/10) \alpha' \leq \beta'/4$, it
    follows, by the triangle inequality, that
    \begin{align*}
        \distFr{\curveAs}{\curveBs} \leq \distFr{\curveAs}{\curveA} +
        \distFr{\curveA}{\curveB} + \distFr{\curveB}{\curveBs} \leq
        2\sRadius + \beta'/4 < \beta'.
    \end{align*}
    A similar argument shows that $\distFr{\curveAs}{\curveBs} >
    \alpha'$.  Hence, the algorithm of \lemref{direct} can be applied
    to $\curveAs$ and $\curveBs$ for the range $[\alpha',\beta']$, as
    $\distFr{\curveAs}{\curveBs} \in [\alpha',\beta']$.
    
    Now, by \lemref{direct}, we have that the value $\delta$ resulting
    from step (\ref{search:direct}), is contained in the interval
    $[\distFr{\curveAs}{\curveBs}, (1+\eps/4)
    \distFr{\curveAs}{\curveBs}]$.  By the triangle inequality we
    conclude that the returned \Frechet distance is
    \begin{align*}
        \Delta%
        &\leq %
        \distFr{\curveA}{\curveAs} + \approxDist +
        \distFr{\curveB}{\curveBs}%
        \leq%
        \distFr{\curveA}{\curveAs} +
        (1+\eps/4)\distFr{\curveAs}{\curveBs} +
        \distFr{\curveBs}{\curveB}%
        \\
        &\leq%
        (1+\eps/4) \pth{ 2\sRadius + \distFr{\curveA}{\curveB} +
           2\sRadius}
        \leq %
        5 \sRadius + (1+\eps/4) \distFr{\curveA}{\curveB} \leq
        (1+\eps)\distFr{\curveA}{\curveB},
    \end{align*}
    since $5\sRadius = 15\alpha = (\eps/2) (30/\eps) \alpha = (\eps/2)
    \alpha' \leq (\eps/2)\distFr{\curveA}{\curveB}$.
    
    Note that $\Delta \geq \distFr{\curveA}{\curveB}$ since it is the
    width of a specific reparameterization between the two curves.
\end{proof}

\subsubsection{Running Time}

\begin{lemma}
    For any $x,y \in (2\alpha, \beta/2)$, we have $\simpX{\curveA,
       x}=\simpX{\curveA, y} $ and $\simpX{\curveB, x} =
    \simpX{\curveB, y}$.
    
    \lemlab{simplification:stable}
\end{lemma}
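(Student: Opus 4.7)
My plan is to exploit the fact that $[\alpha,\beta]$ was produced by \approxBinarySearch{} in step (\ref{simplifications:search:n}) as an atomic interval of the set $\sEvents$, so that no value of $\sEvents$ lies in the open interval $(\alpha,\beta)$. The first step will be to transfer this ``$\sEvents$-free'' property into a genuine statement about pairwise vertex distances: I will show that no pairwise distance in $\Pairwise{\PntSet}$ falls inside the strictly smaller interval $(2\alpha,\beta/2)$.

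For that first step I argue by contradiction. Suppose some $y \in \Pairwise{\PntSet}$ satisfies $2\alpha < y < \beta/2$. By \lemref{all:distances} there exist $\ell,\ell' \in \sEvents$ with $\ell \leq y \leq \ell' \leq 2\ell$. Because $\ell \leq y < \beta/2 < \beta$ and $\ell$ cannot lie in the open interval $(\alpha,\beta)$, we conclude $\ell \leq \alpha$. But then $\ell' \leq 2\ell \leq 2\alpha < y$, contradicting $y \leq \ell'$. Hence no pairwise distance from $\PntSet$ lies in $(2\alpha,\beta/2)$.

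For the main conclusion I will appeal directly to \algref{simplification:algorithm}. That greedy scheme decides whether or not to mark a vertex $\pntA_i$ solely by comparing $\sRadius$ against the Euclidean distance $\distX{\pntA_j}{\pntA_i}$ from the currently marked vertex $\pntA_j$ to the candidate $\pntA_i$. All such distances lie in $\Pairwise{\VertexSet{\curveA}} \subseteq \Pairwise{\PntSet}$, so as $\sRadius$ varies over any open interval disjoint from $\Pairwise{\PntSet}$, every one of these comparisons keeps the same outcome. Therefore $\simpX{\curveA,\sRadius}$, viewed as a function of $\sRadius$, is constant on any such interval. Since $(2\alpha,\beta/2)$ is such an interval, any two parameters $x,y$ in it yield the identical simplified curve; the same argument applies to $\curveB$.

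I do not expect a real obstacle here: the lemma is essentially a bookkeeping combination of the sandwich guarantee of \lemref{all:distances} with the piecewise-constant behavior of the greedy simplifier. The only subtlety is getting the factors of two right, which is exactly why the statement drops from $(\alpha,\beta)$ to $(2\alpha,\beta/2)$, compensating for the at-most-factor-$2$ gap between a true distance and its nearest \approxDistances{} surrogate.
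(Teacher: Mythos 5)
Your proposal is correct and takes essentially the same route as the paper: both start from the fact that $(\alpha,\beta)$ is free of $\sEvents$-values, invoke the two-sided $2$-approximation guarantee of \lemref{all:distances} to conclude that $(2\alpha,\beta/2)$ is free of pairwise vertex distances, and then observe that the greedy simplifier's output depends only on comparisons of $\sRadius$ against such distances. You have simply spelled out the contradiction argument and the piecewise-constancy of \algref{simplification:algorithm} that the paper leaves implicit.
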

\begin{proof}
    Indeed, the interval $(\alpha,\beta)$ does not contain any value
    of $\sEvents$. As such, by \lemref{all:distances}, $(2\alpha,
    \beta/2)$ does not contain any value of the pairwise distances
    between vertices of the vertex set of $\curveA$ and $\curveB$
    which implies that the simplification is the same for any value
    inside this interval.
\end{proof}

\begin{lemma}
    Given two polygonal curves $\curveA$ and $\curveB$ with a total of
    $n$ vertices in $\Re^d$, and a parameter $1 > \eps > 0$, the
    running time of \approxFr{}$\pth[]{\curveA, \curveB, \eps}$ is
    $\ds O\pth{ \SimpComplexity{\eps}{\curveA}{\curveB} \log n}$.
    
    \lemlab{main:1:n}
\end{lemma}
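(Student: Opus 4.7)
The plan is to walk through the steps (A)--(H) of the algorithm in \figref{algorithm} and bound each of their costs by $O(\SimpComplexity{\eps}{\curveA}{\curveB} \log n)$, using \remref{linear:f:s:compl} to absorb $n$ into $\SimpComplexity{\eps}{\curveA}{\curveB}$ and treating $1/\eps$ as at most polynomial in $n$ so that $\log(1/\eps) = O(\log n)$. Steps (A), (F), (H) are linear scans. Step (B) calls \approxDistances{}, which by \lemref{all:distances} takes $O(n \log n)$ time and yields a set $\sEvents$ of size $O(n)$. Step (C) is \algref{a:binary:search}, a binary search over $O(n)$ values requiring $O(\log n)$ invocations of \deciderFr, each costing $O(\SimpComplexity{\eps}{\curveA}{\curveB})$ by \lemref{decider:2}, totaling $O(\SimpComplexity{\eps}{\curveA}{\curveB} \log n)$.

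For steps (D) and (E), the intervals $[\alpha, 4\alpha']$ and $[\beta'/4,\beta]$ have endpoint ratios $120/\eps$ and $12$ respectively. Hence \lemref{Frechet:naive} performs only $O(\log \log(1/\eps)/\eps) = O(\log(1/\eps))$ calls to \deciderFr, each taking $O(\SimpComplexity{\eps}{\curveA}{\curveB})$ time, well within the target bound.

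The main work, and the only step whose analysis is not purely mechanical, is (G). By \lemref{direct}, the cost is $O\pth{(n + N) \log(N/\eps)}$ with $N = \Nleq{\beta'}(\curveAs, \curveBs)$, so it suffices to establish $N = O(\SimpComplexity{\eps}{\curveA}{\curveB})$. The bridge is \lemref{simplification:stable}: since reaching (G) means (D) and (E) failed, we must have $4\alpha' < \beta'/4$, i.e.\ $\beta/\alpha > 1440/\eps$; in particular $\eps\beta/3 > 2\alpha$ and $\eps \beta/3 < \beta/2$, so $\eps \beta' \in (2\alpha,\beta/2)$. The same is plainly true of $\sRadius = 3\alpha$. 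By \lemref{simplification:stable}, $\simpX{\curveA, \sRadius} = \simpX{\curveA, \eps\beta'}$ and $\simpX{\curveB, \sRadius} = \simpX{\curveB, \eps\beta'}$, so, choosing $\delta = \beta'$ in the definition of $\SimpComplexityC$,
\begin{equation*}
N \;=\; \Nleq{\beta'}\!\pth{\simpX{\curveA, \eps\beta'}, \simpX{\curveB, \eps\beta'}} \;\leq\; \SimpComplexity{\eps}{\curveA}{\curveB}.
\end{equation*}
Thus step (G) costs $O(\SimpComplexity{\eps}{\curveA}{\curveB} \log n)$ as well.

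The hard part is precisely this last observation: the concrete simplification radius $\sRadius = 3\alpha$ chosen in step (F) is a priori unrelated to the ``canonical'' radius $\eps \beta'$ used in the definition of $\SimpComplexityC$, and identifying them requires knowing that after the binary search and the fringe searches the interval $[\alpha,\beta]$ is wide enough ($\beta/\alpha = \Omega(1/\eps)$) that \lemref{simplification:stable} forces the two simplifications to coincide. Summing the contributions from (A)--(H) then yields the claimed $O(\SimpComplexity{\eps}{\curveA}{\curveB} \log n)$ bound.
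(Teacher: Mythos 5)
Your proposal is correct and follows the paper's structure closely: the same step-by-step accounting, the same use of \lemref{all:distances}, \lemref{Frechet:naive}, \lemref{decider:2}, \lemref{direct}, and crucially the same invocation of \lemref{simplification:stable} to bridge the gap between the concrete radius $\sRadius = 3\alpha$ and a radius that plugs into the definition of $\SimpComplexityC$. The one place you diverge from the paper is in step (G): you verify $\eps\beta' \in (2\alpha, \beta/2)$ and identify $\simpX{\curveA,3\alpha}$ with $\simpX{\curveA,\eps\beta'}$, which gives $N \leq \SimpComplexity{\eps}{\curveA}{\curveB}$ directly. The paper instead identifies $\simpX{\curveA,3\alpha}$ with $\simpX{\curveA,\beta'}$ (ratio $1$), obtains $N \leq \SimpComplexity{1}{\curveA}{\curveB}$, and then appeals separately to the monotonicity $\SimpComplexity{1}{\curveA}{\curveB} \leq \SimpComplexity{\eps}{\curveA}{\curveB}$. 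Your route is marginally more direct and avoids that last monotonicity step, but it is the same underlying idea applied to a slightly different pair of points in the interval $(2\alpha,\beta/2)$; both are fine.
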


\begin{proof}
    Computing $\sEvents$ (and sorting it) takes $O(n \log n)$ time by
    \lemref{all:distances}. Steps (\ref{simplifications:search:n}),
    (\ref{search:fringe1:n}) and (\ref{search:fringe2:n}) perform $O(
    \log n + \log (1/\eps) ) = O( \log n)$ calls to \deciderFr, by
    \lemref{Frechet:naive}.  (Here, we assume that $\eps = \Omega(
    1/n)$. If $\eps < 1/n$ then we can just use the algorithm of Alt
    and Godau \cite{ag-cfdbt-95} since its running time is faster than
    our approximation algorithm in this case.)  Each call to
    \deciderFr{} takes $O\pth{ \SimpComplexity{\eps}{\curveA}{\curveB}
    } $ time, so overall this takes $O(
    \SimpComplexity{\eps}{\curveA}{\curveB} \log n )$ time.  Computing
    the simplifications in step (\ref{fixed:curves:n}) with
    \algref{simplification:algorithm} takes $O(n)$ time.
    
    By \lemref{direct}, a call to $\approxFrDirect( \curveAs,
    \curveBs, [\alpha',\beta'], \eps/4)$ takes $T = O( (n+ N) \log
    (N/\eps))$ time, with $N=\Nleq{\beta'}(\curveAs, \curveBs)$.  Now,
    $3\alpha$ and $\beta'$ are both inside the interval $(2\alpha,
    \beta/2)$, and as such, by \lemref{simplification:stable}, we have
    that $\curveAs = \simpX{\curveA, 3\alpha} = \simpX{\curveA,
       \beta'}$ and $\curveBs = \simpX{\curveB,3\alpha} =
    \simpX{\curveB,\beta'}$.  Therefore, we have that
    \begin{align*}
        N = \Nleq{\beta'}(\curveAs, \curveBs) =
        \Nleq{\beta'}\pth{\simpX{\curveA,\beta'},
           \simpX{\curveB,\beta'}}
        \leq \SimpComplexity{1}{\curveA}{\curveB}.
    \end{align*}
    Thus, step (\ref{search:direct}) takes $T = O(
    \SimpComplexity{1}{\curveA}{\curveB} \log (
    \SimpComplexity{1}{\curveA}{\curveB} n/\eps) ) =O(
    \SimpComplexity{1}{\curveA}{\curveB} \log n )$, time since
    $\SimpComplexity{1}{\curveA}{\curveB} \leq n^2$ and $\eps =
    \Omega( 1/n )$.  Observe that $
    \SimpComplexity{1}{\curveA}{\curveB} \leq
    \SimpComplexity{\eps}{\curveA}{\curveB}$ for $\eps \leq 1$.
    
    Finally, in order to compute the resulting reparameterizations in
    step (\ref{final:step}), we compute the reparametrizations of
    $\curveA$ and $\curveAs$ (resp. $\curveB$ and $\curveB'$) as
    described in the proof of \lemref{simplification:distance} and
    chain them with the reparameterizations of the simplified curves,
    which we obtained from step (\ref{search:direct}).  Clearly, this
    and computing the resulting width takes $O(n)$ time.  Note that by
    the assumption in \remref{linear:f:s:compl} the term
    $\SimpComplexity{\eps}{\curveA}{\curveB}$ dominates over $O(n)$.    
\end{proof}

The running time of \lemref{main:1:n} can be slightly improved.

\begin{lemma}
    The algorithm \approxFr{} depicted in \figref{algorithm}
    can be modified to run in time $\ds O(
    \SimpComplexity{\eps}{\curveA}{\curveB} +
    \SimpComplexity{1}{\curveA}{\curveB} \log n)$ (see
    \defref{simplification:complexity}).

    \lemlab{improved:running:time}
\end{lemma}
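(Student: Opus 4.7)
The $\log n$ factor multiplying the fine quantity $\SimpComplexity{\eps}{\curveA}{\curveB}$ in \lemref{main:1:n} comes entirely from the $O(\log n)$ outer decider invocations in steps (\ref{simplifications:search:n}), (\ref{search:fringe1:n}) and (\ref{search:fringe2:n}), each called at the target precision $\eps$ and therefore costing $O(\SimpComplexity{\eps}{\curveA}{\curveB})$ by \lemref{decider:2}. The plan is to perform all of these outer searches at a fixed coarse precision $\eps_0 = 1/2$ and pay the fine $\eps$ cost only once, so that the $\log n$ factor multiplies only the coarse quantity $\SimpComplexity{1}{\curveA}{\curveB}$.

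Concretely, I would replace every invocation of $\deciderFr(\cdot,\cdot,\cdot,\eps)$ inside $\approxBinarySearch$ and inside the two $\intervalFr{}$ calls of steps (\ref{search:fringe1:n}) and (\ref{search:fringe2:n}) by $\deciderFr(\cdot,\cdot,\cdot,\eps_0)$. Since the ``$<\delta$'' and ``$>\delta$'' outcomes of $\deciderFr$ are correct regardless of the approximation parameter used, the coarsened binary searches still isolate the atomic interval $[\alpha,\beta]$ of $\sEvents$ that contains $\distFr{\curveA}{\curveB}$, and the coarsened fringe searches still certify that $\distFr{\curveA}{\curveB}\in [4\alpha',\beta'/4]$. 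Hence the correctness argument from \lemref{main:1:n} goes through verbatim, and by property (P\ref{resemblance:a}) each coarse call costs $O(\SimpComplexity{\eps_0}{\curveA}{\curveB})=O(\SimpComplexity{1}{\curveA}{\curveB})$. Summed over $O(\log n)$ calls this is $O(\SimpComplexity{1}{\curveA}{\curveB}\log n)$. Step (\ref{search:direct}), which invokes $\approxFrDirect(\curveAs,\curveBs,[\alpha',\beta'],\eps/4)$, is untouched and, as already shown in the proof of \lemref{main:1:n}, runs in $O(\SimpComplexity{1}{\curveA}{\curveB}\log n)$ time via \lemref{direct}.

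The one remaining case is that some coarse call to $\deciderFr$ terminates early, returning a value $\tilde d$ which is a $(1+\eps_0)$-approximation to $\distFr{\curveA}{\curveB}$ (outcome (i) of \lemref{decider:2}) rather than a direction. In that event we abandon the remainder of the outer search and invoke $\approxFrDirect$ once on the simplifications $\simpX{\curveA,\Theta(\eps\tilde d)},\simpX{\curveB,\Theta(\eps\tilde d)}$ over the constant-spread interval $[\tilde d,(1+\eps_0)\tilde d]$ at precision $\eps$. Because the interval has constant spread, the $\log(N/\eps)$ factor in \lemref{direct} collapses to $\log N$, which by property (P\ref{resemblance:b}) is absorbed into $\SimpComplexity{\eps}{\curveA}{\curveB}$; the cost of this single fine-precision refinement is $O(\SimpComplexity{\eps}{\curveA}{\curveB})$. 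Summing, the overall running time becomes $O(\SimpComplexity{\eps}{\curveA}{\curveB}+\SimpComplexity{1}{\curveA}{\curveB}\log n)$, as desired.

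The delicate point, and the main obstacle, is exactly this last cost estimate: one has to argue that when the fine refinement is restricted to a constant-spread interval, the internal binary search over vertex-edge events performed by $\approxFrDirect$ and the subsequent exponential search of $\intervalExactFr$ contribute only a $\log$ factor on the free-space complexity $N$, not on $n$ or $1/\eps$, and then invoke property (P\ref{resemblance:b}) to bury that $\log N$ factor inside $\SimpComplexity{\eps}{\curveA}{\curveB}$. Once this accounting is made, the two parts combine to give the claimed bound.
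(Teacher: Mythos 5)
Your core observation — that the offending $\log n$ factor on $\SimpComplexity{\eps}{\curveA}{\curveB}$ in \lemref{main:1:n} comes entirely from running the $O(\log n)$ outer decider calls of steps (\ref{simplifications:search:n})--(\ref{search:fringe2:n}) at precision $\eps$, and that these calls can be run at a constant precision $\eps_0=1/2$ while still making correct coarse decisions — is valid and differs in flavor from the paper's approach. The paper does not coarsify steps (\ref{simplifications:search:n})--(\ref{search:fringe2:n}) in place; instead it runs the \emph{entire} algorithm of \lemref{main:1:n} once at precision $1/2$ to get a $2$-approximation $\zeta$, and then performs a progressive refinement over the constant-spread interval $[\zeta,2\zeta]$ with \emph{geometrically decreasing} precisions $\eps_i = 2^{-i}$: three \deciderFr{} calls at precision $\Theta(\eps_i)$ shrink the interval by a factor two, and property (P\ref{resemblance:b}) makes $\sum_{i} \SimpComplexity{\eps_i}{\curveA}{\curveB}$ telescope to $O(\SimpComplexity{\eps}{\curveA}{\curveB})$.

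The genuine gap in your argument is the handling of the early-termination case, and it is not a fixable detail but the crux of the lemma. When a coarse \deciderFr{} call returns a $(3/2)$-approximation $\tilde d$, you propose invoking $\approxFrDirect$ once at precision $\eps$ on a constant-spread interval and claim its cost is $O(\SimpComplexity{\eps}{\curveA}{\curveB})$. This fails on both of your sub-claims. First, the $\log(N/\eps)$ factor in \lemref{direct} does \emph{not} collapse to $\log N$ when the input interval has constant spread: the $\log(1/\eps)$ contribution comes from the two internal calls to \intervalExactFr{} on the hard-coded spread-$4$ intervals $[\alpha,4\alpha]$ and $[\beta/4,\beta]$, each requiring $O(\log(1/\eps))$ rounds by \lemref{Frechet:naive}, regardless of the spread of $[\loLimit,\hiLimit]$. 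Second, property (P\ref{resemblance:b}) does not ``bury'' a multiplicative $\log N$ (or $\log(1/\eps)$) factor inside $\SimpComplexity{\eps}{\curveA}{\curveB}$: it only relates $\SimpComplexity{\eps}$ to $\SimpComplexity{\eps/2}$, and $N\log N$ is not $O(N)$ for any such reason. Concretely, with $N=\Theta(\SimpComplexity{\eps}{\curveA}{\curveB})$ this refinement costs $\Theta\pth{\SimpComplexity{\eps}{\curveA}{\curveB}\log\pth{\SimpComplexity{\eps}{\curveA}{\curveB}/\eps}}$, which exceeds the target bound by a $\log(n/\eps)$ factor. (There is also an unverified precondition: \lemref{direct} requires the search interval to contain no value of $\PW$ of the simplified curves, which is guaranteed for $[\alpha',\beta']$ by the binary search over $\sEvents$ and \lemref{simplification:stable}, but not for your ad hoc interval $[\tilde d,(1+\eps_0)\tilde d]$.) Removing this extra $\log$ factor in the refinement over a constant-spread interval is precisely what the progressive $\eps_i$-doubling scheme and the geometric-sum application of (P\ref{resemblance:b}) accomplish in the paper, and that step cannot be replaced by a single call to $\approxFrDirect$.
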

\begin{proof}
    Use \lemref{main:1:n}, with $\eps_0 = 1/2$, to get a
    $2$-approximation $\zeta$ for the \Frechet distance between
    $\curveA$ and $\curveB$. This takes $O\pth{
       \SimpComplexity{1}{\curveA}{\curveB} \log n }$ time. Let
    $\Interval_0 = [\zeta,2\zeta]$ be the corresponding interval that
    contains the distance. We could call \intervalFr{}$\pth[]{\curveA,
       \curveB, \Interval_0, \eps }$ and get a
    $(1+\eps)$-approximation in $O\pth{
       \SimpComplexity{\eps}{\curveA}{\curveB} \log \frac{1}{\eps} +
       \SimpComplexity{1}{\curveA}{\curveB} \log n }$ time.
    
    One can do better by starting with a ``large'' $\eps$ and
    decreasing it during the binary search for the right value
    performed by \intervalFr. This is a standard idea and it was also
    used by Aronov and Har-Peled \cite{ah-adrp-08}.
    
    Indeed, assume that in the beginning of the $i$\th step, we know
    that the required \Frechet distance lies in an interval
    $\Interval_{i-1} = [\alpha_{i-1},\beta_{i-1}]$ and $\beta_{i-1} -
    \alpha_{i-1}= \lenX{\Interval_0} \eps_{i-1}$, where $\eps_{i-1} =
    1/2^{i-1}$.
    
    Let $\Delta_{i-1} = \lenX{\Interval_{i-1}} = \beta_{i-1}-
    \alpha_{i-1}$, and let $x_{i,j} = \alpha_{i-1} + j\Delta_{i-1}/4$,
    for $j=0,1,2,3,4$.  Call the procedure $\deciderFr$ on three
    values $x_{i,1}$, $x_{i,2}$, and $x_{i,3}$, with the approximation
    parameter being $\constA\eps_i$, for $\constA>0$ being a
    sufficiently small constant. Based on the outcome of these three
    calls, we can determine in constant time which of the three
    intervals $\wInterval_{i,1} = [x_{i,0}, x_{i,2}]$,
    $\wInterval_{i,2} = [x_{i,1}, x_{i,3}]$, or $\wInterval_{i,3} =
    [x_{i,2}, x_{i,4}]$ must contain the \Frechet distance. We set
    this interval to be $\Interval_i$.
    
    We repeat this process for $M$ steps, where
    $M=\ceil{\lg{1/\eps}}$. It is easy to verify that the final
    interval now provides the required approximation. The running time
    of this algorithm is %
    $%
    O\pth{ \SimpComplexity{1}{\curveA}{\curveB} \log n +
       \sum_{i=1}^{M} \SimpComplexity{\eps_i}{\curveA}{\curveB} } %
    $.  Now, by assumption (P\ref{resemblance:b}) (see
    \defref{simplification:complexity}), we have
    \begin{align*}
        O\pth{ \sum_{i=1}^{M}
           \SimpComplexity{\eps_i}{\curveA}{\curveB} } %
        &=%
        O\pth{ \sum_{i=1}^{M} \frac{1}{2^{M-i}}
           \SimpComplexity{\frac{\eps_i}{2^{M-i}}}{\curveA}{\curveB}
        } %
        =%
        O\pth{ \SimpComplexity{\eps}{\curveA}{\curveB} {\sum_{i=1}^{M}
              \frac{1}{2^{M-i}}} } %
        \\&=%
        O\pth{ \SimpComplexity{\eps}{\curveA}{\curveB} },
    \end{align*}
    and this implies the claim.
\end{proof}

\paragraph{The Result.}
Putting the above together, we get the following result.

\begin{theorem}
    Given two polygonal curves $\curveA$ and $\curveB$ with a total of
    $n$ vertices in $\Re^d$, and a parameter $1 > \eps > 0$, one can
    $(1+\eps)$-approximate the \Frechet distance between $\curveA$ and
    $\curveB$ in $\ds O( \SimpComplexity{\eps}{\curveA}{\curveB} +
    \SimpComplexity{1}{\curveA}{\curveB} \log n)$ time (see
    \defref{simplification:complexity}).
    
    \thmlab{main}
\end{theorem}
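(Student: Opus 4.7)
The plan is to simply package the two preceding results: the correctness lemma for \approxFr (proven immediately before \lemref{main:1:n}), which establishes that the algorithm of \figref{algorithm} returns a $(1+\eps)$-approximation of $\distFr{\curveA}{\curveB}$, together with \lemref{improved:running:time}, which gives the improved running time bound of $O(\SimpComplexity{\eps}{\curveA}{\curveB} + \SimpComplexity{1}{\curveA}{\curveB} \log n)$ for the modified version of \approxFr.

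First I would recall that the correctness argument for \approxFr does not depend on the particular schedule used by the binary search over approximation parameters, only on the guarantees provided by \deciderFr{} (\lemref{decider:2}), \intervalFr{} (\lemref{Frechet:naive}), and \approxFrDirect{} (\lemref{direct}). The modification of \lemref{improved:running:time} merely adjusts how \intervalFr{} is invoked (starting with a coarse $\eps_0 = 1/2$ and progressively refining), so the correctness proof transfers verbatim, yielding the approximation guarantee.

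For the running time, I would invoke \lemref{improved:running:time} directly, which proves the bound $O(\SimpComplexity{\eps}{\curveA}{\curveB} + \SimpComplexity{1}{\curveA}{\curveB} \log n)$ using property (P\ref{resemblance:b}) of \defref{simplification:complexity} to collapse the geometric sum $\sum_{i=1}^{M} \SimpComplexity{\eps_i}{\curveA}{\curveB}$ into $O(\SimpComplexity{\eps}{\curveA}{\curveB})$, while the initial $2$-approximation and the various precomputations contribute the $\SimpComplexity{1}{\curveA}{\curveB} \log n$ term.

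There is no real obstacle here: the theorem is essentially a restatement collecting the algorithmic guarantees proven in \secref{the:algorithm}. The only thing worth being careful about is making the dependency on \remref{linear:f:s:compl} explicit, namely that the additive $O(n)$ costs (for computing simplifications, $\sEvents$, and the final reparameterizations) are absorbed into $\SimpComplexity{\eps}{\curveA}{\curveB}$ under the standing assumption $\SimpComplexity{\eps}{\curveA}{\curveB} = \Omega(n)$. With this noted, the theorem follows immediately.
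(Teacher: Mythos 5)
Your proposal is correct and matches the paper exactly: \thmref{main} is stated immediately after \lemref{improved:running:time} with the one-line justification ``Putting the above together,'' i.e., it is obtained by combining the correctness lemma for \approxFr{} with the running time bound of \lemref{improved:running:time}, under the standing assumption of \remref{linear:f:s:compl}. Your remark that the modified search schedule preserves correctness is the right point to flag, and it is handled in \lemref{improved:running:time} itself.
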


\parpic[r]{\includegraphics[scale=0.8]{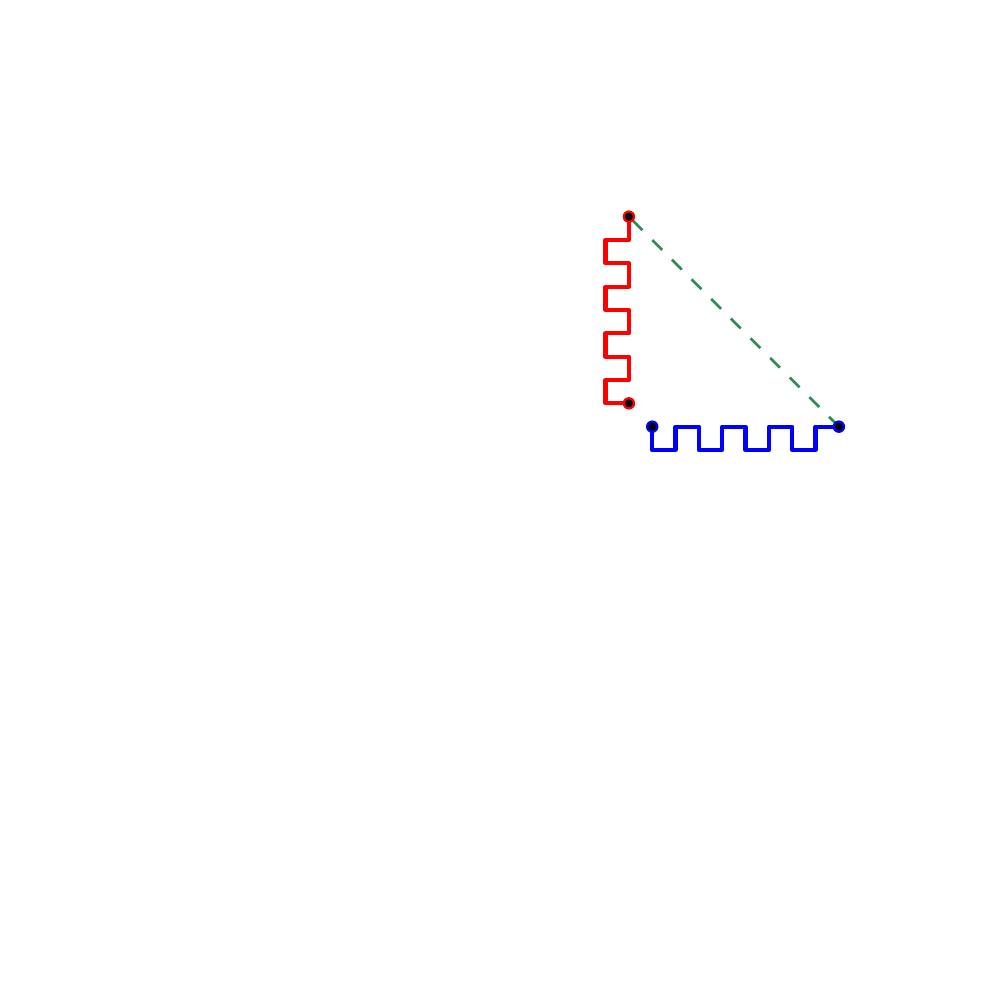}}

Interestingly, simplification is critical for the efficiency of the
above algorithm.  Indeed, consider the two nicely behaved curves
depicted on the right. The \relevant portion of the free space diagram
of these two curves, for the distance realizing the \Frechet distance,
covers a quadratic number of cells.

\parpic[l]{\includegraphics{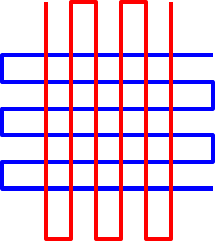}}

The use of simplification by itself is not sufficient to guarantee
that the presented algorithm is efficient.  Indeed, in might not be
possible to simplify the input curves at all without losing too much
information.  In such contrived worst case examples, the free space
diagram still has quadratic complexity due to the inherent structure
of the curves.  See the figure to the left for one such example.  In
the next section we will analyze the \resemblance using realistic
input models and prove the efficiency of the above algorithm, given
that the input is ``realistic''.

\ShowComments{ \AnneX{I reworded the paragraph above, please check.}
   \SarielX{Rewrote it again. Have a loot.}  }

\section{The \Resemblance of Families of Curves}
\seclab{resemblance}

In this section we are going to bound the \resemblance for different
realistic input models of curves. We will introduce the new class of
$c$-\emph{packed} curves, and we compare this new input model to the
previous models of \emph{$\kappa$-boundedness} and \emph{low density}.

\subsection{On \TPDF{c}{c}-packed Curves}
\seclab{c:packed:curves}

We introduce a new family of curves, $c$-packed curves, and prove that
their \resemblance $\SimpComplexity{\eps}{\curveA}{\curveB}$ is
linear, for any two curves $\curveA$ and $\curveB$ in this family.
This implies that \thmref{main} works in near linear time for
$c$-packed curves, which is one of our main results.

\subsubsection{Definition and basic properties}

\begin{defn}
    A curve $\curveA$ in $\Re^d$ is \emphi{$c$-packed} if for any
    point $\pntA$ in $\Re^d$ and any radius $r > 0$, the total length
    of $\curveA$ inside the ball $\BallX{\pntA,r}$ is at most $c r$.
\end{defn}

\begin{lemma}
    Let $\curveA$ be a curve in $\Re^d$, $\sRadius > 0$ be a
    parameter, and let $\curveAs = \simpX{\curveA, \sRadius}$ be the
    simplified curve. Then $\lenX{\curveA \cap \BallX{\pntA, \Radius +
          \sRadius}} \geq \lenX{\curveAs \cap \BallX{\pntA,\Radius}}$
    for any ball $\BallX{\pntA, \Radius}$.
    
    \lemlab{hippo}
\end{lemma}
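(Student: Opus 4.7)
The plan is to reduce the claim to a per-segment inequality. Write $\curveAs$ as the concatenation of its segments $\segA_1, \segA_2, \ldots$, and let $\subcurveA_i$ be the corresponding subcurve of $\curveA$ between the two marked endpoints of $\segA_i$. Since the subcurves partition $\curveA$ and the segments partition $\curveAs$ (up to endpoints), it suffices to show
\[
\lenX{\subcurveA_i \cap \BallX{\pntA, \Radius + \sRadius}} \;\geq\; \lenX{\segA_i \cap \BallX{\pntA, \Radius}}
\]
for each $i$ and then sum.

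Fix such a segment $\segA = ab$ with subcurve $\subcurveA$. The first step is to observe that $\subcurveA$ lies in the tube $T = \segA \oplus \BallX{0, \sRadius}$. Indeed, by \algref{simplification:algorithm}, $\subcurveA$ decomposes into an initial portion $\subcurveA_1$ whose vertices all lie in $\BallX{a, \sRadius}$, followed by a final edge $\subcurveA_2 = vb$ with $v \in \BallX{a, \sRadius}$. The first piece lies in $\BallX{a, \sRadius} \subseteq T$ by convexity of the ball, while the second lies in $T$ via the natural linear matching: the point $(1-t)v + tb$ on $\subcurveA_2$ is at distance $(1-t)\lenX{a-v} \leq \sRadius$ from $(1-t)a + tb \in \segA$.

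The crux is to introduce the nearest-point projection $\pi \colon \R^d \to \segA$, which is $1$-Lipschitz since $\segA$ is convex, and parameterize $\subcurveA$ by arc length as $\gamma \colon [0, \lenX{\subcurveA}] \to \R^d$. Then $\pi \circ \gamma$ is $1$-Lipschitz with $\pi(\gamma(0)) = a$ and $\pi(\gamma(\lenX{\subcurveA})) = b$, so the intermediate value theorem gives $\sharp (\pi \circ \gamma)^{-1}(p) \geq 1$ for every $p \in \segA$. Moreover, for every $u$ with $\pi(\gamma(u)) \in S := \segA \cap \BallX{\pntA, \Radius}$, the tube containment forces $\lenX{\gamma(u) - \pi(\gamma(u))} \leq \sRadius$ and hence $\gamma(u) \in \BallX{\pntA, \Radius + \sRadius}$. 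Writing $A = \{u : \pi(\gamma(u)) \in S\}$, this yields $\lenX{\subcurveA \cap \BallX{\pntA, \Radius + \sRadius}} \geq |A|$.

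The final step applies the one-dimensional coarea formula together with $|(\pi \circ \gamma)'| \leq 1$:
\[
|A| \;\geq\; \int_A \bigl|(\pi \circ \gamma)'(u)\bigr|\,du \;=\; \int_S \sharp(\pi \circ \gamma)^{-1}(p)\,dp \;\geq\; \lenX{S},
\]
which completes the per-segment bound. The main conceptual obstacle I anticipate is seeing that the projection framing is the right one: a direct case analysis based on whether $a \in \BallX{\pntA, \Radius}$ and on the sign of $\lenX{ab} - \lenX{vb}$ also works, but it becomes awkward in the subcase where $a \notin \BallX{\pntA, \Radius}$ and the last edge has shortened under simplification; the projection-plus-coarea approach sidesteps this by accounting uniformly for how $\subcurveA$ can sweep over $S$ with multiplicity at least one.
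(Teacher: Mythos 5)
Your proof is correct and follows essentially the same route as the paper: the same per-segment decomposition, and the same key observation that the subcurve of $\curveA$ corresponding to a segment $\segA$ of $\curveAs$ lies in the hippodrome $\segA \oplus \BallX{0,\sRadius}$, so that its portion near $\segA \cap \BallX{\pntA,\Radius}$ stays inside $\BallX{\pntA,\Radius+\sRadius}$. The only difference is in how the per-segment length bound is finished: the paper erects hyperplanes orthogonal to $\segA$ through the endpoints of $\segA \cap \BallX{\pntA,\Radius}$ and argues the subcurve must cross the resulting slab, whereas you use the nearest-point projection onto $\segA$ together with the one-dimensional coarea formula --- a more careful, multiplicity-aware formalization of the same slab-crossing idea.
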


\begin{proof}
    Let $\segA$ be a segment of $\curveAs$ that intersects
    $\BallX{\pntA, \Radius}$ and let $\segB = \segA \cap \BallX{\pntA,
       \Radius}$ be this intersection.  Let $\curveA_\segA$ be the
    portion of $\curveA$ that got simplified into $\segA$. Observe
    that $\curveA_\segA$ is a polygonal curve that lies inside a
    hippodrome of radius $\sRadius$ around $\segA$; that is,
    $\curveA_\segA \subseteq \Hippodrome_\segA = \segA \oplus
    \BallX{0, \sRadius}$, where $\oplus$ denotes the Minkowski sum of
    the two sets, see the figure on the right.
    
    \parpic[r]{\includegraphics[scale=0.9]{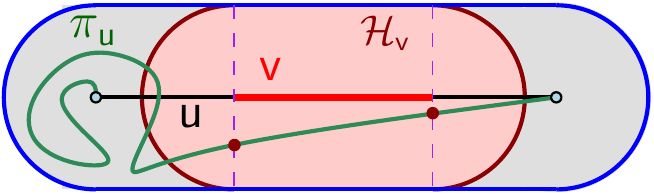}}

    In particular, erect two hyperplanes passing through the endpoints
    of $\segB$ that are orthogonal to $\segB$, and observe that
    $\curveA_\segA$ must intersect both hyperplanes.  Hence, we
    conclude that the portions of $\curveA_\segA$ in the hippodrome
    $\Hippodrome_\segB = \segB \oplus \BallX{0,\sRadius}$ are of
    length at least $\lenX{\segB}$.  Clearly, $\segB \subseteq
    \BallX{\pntA, \Radius}$ implies that $\Hippodrome_\segB \subseteq
    \BallX{\pntA, \Radius + \sRadius}$, which in turn implies that
    $\curveA_\segA \cap \Hippodrome_\segB \subseteq \BallX{\pntA,
       \Radius + \sRadius}$ and thus $\lenX{\curveA_\segA \cap
       \BallX{\pntA, \Radius + \sRadius} } \geq \lenX{\segB}$.
    
    Summing over all segments $\segB$ in $\curveAs \cap
    \BallX{\pntA,\Radius}$ implies the claim.
\end{proof}

\begin{lemma}
    Let $\curveA$ be a $c$-packed curve in $\Re^d$, $\sRadius > 0$ be
    a parameter, and let $\curveAs = \simpX{\curveA, \sRadius}$ be the
    simplified curve. Then, $\curveAs$ is a $6c$-packed curve.
    
    \lemlab{6:c:packed}
\end{lemma}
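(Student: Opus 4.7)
The plan is to show that for every ball $\BallX{\pntA, \Radius}$ in $\Re^d$, we have $\lenX{\curveAs \cap \BallX{\pntA, \Radius}} \leq 6c\Radius$. I use a case distinction based on the relationship between the query radius $\Radius$ and the simplification radius $\sRadius$. The previous lemma (\lemref{hippo}) gives an upper bound on $\lenX{\curveAs \cap \BallX{\pntA, \Radius}}$ in terms of $\lenX{\curveA \cap \BallX{\pntA, \Radius + \sRadius}}$, and by $c$-packedness of $\curveA$ the latter is at most $c(\Radius + \sRadius)$. This is $\leq 6c\Radius$ whenever $\sRadius$ is not much larger than $\Radius$ (say, $\sRadius \leq 5\Radius$), so in that regime the claim is immediate.

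The substantive case is $\sRadius > 5\Radius$, where the naive hippo bound is too weak because $\sRadius$ can dominate $\Radius$. Here the plan is to exploit the structural property of the simplification: every segment of $\curveAs$ has length at least $\sRadius$. Thus any simplified segment $\segA$ that intersects $\BallX{\pntA, \Radius}$ contributes a chord of length at most $2\Radius$ to $\lenX{\curveAs \cap \BallX{\pntA, \Radius}}$, and it suffices to bound the number $N$ of such segments. For each such segment, pick a point $x \in \segA \cap \BallX{\pntA, \Radius}$; since $\lenX{\segA} \geq \sRadius$, one of the two sub-pieces of $\segA$ to either side of $x$ has length at least $\sRadius/2$. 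That sub-piece is contained in $\BallX{\pntA, \Radius + \sRadius/2}$. Summing over the $N$ distinct simplified segments (which are pairwise interior-disjoint) and applying \lemref{hippo} with the larger ball yields
\begin{align*}
    N \cdot \frac{\sRadius}{2}
    \;\leq\; \lenX{\curveAs \cap \BallX{\pntA, \Radius + \sRadius/2}}
    \;\leq\; c\pth{\Radius + \tfrac{3}{2}\sRadius},
\end{align*}
so $N = O(c)$. Multiplying by the per-segment bound $2\Radius$ gives $\lenX{\curveAs \cap \BallX{\pntA, \Radius}} = O(c\Radius)$, and in this regime ($\Radius < \sRadius/5$) the leading terms combine to at most $6c\Radius$.

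The main obstacle is a constant-factor accounting: the two cases produce natural bounds of the form $c(\Radius+\sRadius)$ and $(\text{const})\cdot c\Radius$, and these have to be balanced. In particular, the case-split threshold and the length ($\sRadius/2$ above) of the sub-segment used in the counting argument must be tuned so that both branches yield a bound of exactly $6c\Radius$; any reasonable choice already delivers $O(c)$-packedness, but matching the claimed constant requires verifying that the threshold $\sRadius = 5\Radius$ (or a comparable value) closes both estimates simultaneously.
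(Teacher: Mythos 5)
Your approach is essentially the paper's: split on whether the query radius $\Radius$ is at least or less than $\sRadius$, handle the first case by applying \lemref{hippo} together with $c$-packedness of $\curveA$, and handle the second case by counting the intersecting simplified segments via the guaranteed minimum length $\sRadius$ and then applying \lemref{hippo} again. The paper phrases it as a proof by contradiction rather than a direct bound, but this is cosmetic.

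However, the constant accounting, which you rightly flag as the delicate part, does not in fact close with your parameters, and moreover cannot be closed by any choice of threshold $\lambda$ in the split $\sRadius \gtrless \lambda\Radius$ so long as you only extract $\sRadius/2$ per segment. Concretely, the large-$\Radius$ branch needs $\lambda \leq 5$ (it gives $c(1+\lambda)\Radius$). In the small-$\Radius$ branch you obtain $N\cdot\sRadius/2 \leq \lenX{\curveAs \cap \BallX{\pntA,\Radius+\sRadius/2}} \leq c(\Radius + \tfrac32\sRadius)$, hence $\lenX{\curveAs\cap\BallX{\pntA,\Radius}} \leq 2\Radius N \leq 4c\Radius(1/\lambda + 3/2)$, which exceeds $6c\Radius$ for every finite $\lambda$, and equals $6.8c\Radius$ at your stated $\lambda=5$. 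The loss comes from discarding half the segment: since $\lenX{\segA}\geq\sRadius$ and $x\in\segA$, the set $\segA\cap\BallX{x,\sRadius}$ (not just one side of $x$) has length at least $\sRadius$, and it lies in $\BallX{\pntA,\Radius+\sRadius}$. Using this full $\sRadius$, one gets $N\sRadius \leq \lenX{\curveA\cap\BallX{\pntA,\Radius+2\sRadius}} \leq c(\Radius+2\sRadius)$, so with the threshold $\Radius<\sRadius$ (i.e.\ $\lambda=1$) one obtains $N\leq 3c$ and hence $\lenX{\curveAs\cap\BallX{\pntA,\Radius}} \leq 2\Radius\cdot 3c = 6c\Radius$ exactly, matching both the paper's threshold and its constant.
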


\newcommand{\segIn}{U}
\begin{proof}
    Assume, for the sake of contradiction, that $\lenX{ \curveAs \cap
       \BallX{\pntA, \Radius}} > 6c \Radius$ for some
    $\BallX{\pntA,\Radius}$ in $\Re^d$.  If $\Radius \geq \sRadius$,
    then set $\Radius'=2\Radius$ and \lemref{hippo} implies that
    $\lenX{ \curveA \cap \BallX{\pntA, \Radius'} } \geq \lenX{ \curveA
       \cap \BallX{\pntA, \Radius + \sRadius } } \geq \lenX{ \curveAs
       \cap \BallX{\pntA, \Radius}} > 6c \Radius = 3c\Radius'$, which
    contradicts that $\curveA$ is $c$-packed.
    
    If $\Radius < \sRadius$ then let $\segIn$ denote the segments of
    $\curveAs$ intersecting $\BallX{\pntA,\Radius}$ and let
    $k=|\segIn|$.  Observe that $k > 6c\Radius/2\Radius = 3c$, as any
    segment can contribute at most $2\Radius$ to the length of
    $\curveAs$ inside $\BallX{\pntA, \Radius}$. Therefore we have that
    $%
    \lenX{\curveAs \cap \BallX{\pntA, 2\sRadius}} \geq \lenX{\curveAs
       \cap \BallX{\pntA, \Radius + \sRadius}} \geq \lenX{\segIn \cap
       \BallX{\pntA, \Radius + \sRadius}} \geq k\sRadius%
    $, %
    since every segment of the simplified curve $\curveAs$ has a
    minimal length of $\sRadius$.  By \lemref{hippo}, this implies
    that $\lenX{\curveA \cap \BallX{\pntA, 3\sRadius}} \geq
    \lenX{\curveAs \cap \BallX{\pntA, 2\sRadius}} \geq k \sRadius > 3
    c \sRadius$, which is a contradiction to the $c$-packedness of
    $\curveA$.
\end{proof}

\subsubsection{Bounding the \resemblance}

\begin{lemma}
    For any two $c$-packed curves $\curveA$ and $\curveB$ in $\Re^d$,
    and $0 < \eps < 1$, we have that
    $\SimpComplexity{\eps}{\curveA}{\curveB} = O( cn /\eps)$.
    
    \lemlab{complexity:l:e:q}
\end{lemma}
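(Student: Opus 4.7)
The plan is to bound $\Nleq{\delta}(\curveAs, \curveBs)$ for every $\delta \geq 0$, where $\curveAs = \simpX{\curveA, \eps\delta}$ and $\curveBs = \simpX{\curveB, \eps\delta}$. By \lemref{6:c:packed}, both simplified curves are $6c$-packed, and every edge of each has length at least $\eps\delta$. I will upper bound $\Nleq{\delta}$ by the number of edge pairs $(e, f)$, with $e$ an edge of $\curveAs$ and $f$ an edge of $\curveBs$, whose Euclidean distance is at most $\delta$, since each non-empty cell of the free space diagram corresponds to such a pair.

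The geometric heart of the argument is the following estimate: for each edge $e$ of $\curveAs$, the number of edges $f$ of $\curveBs$ within distance $\delta$ of $e$ is $O\pth{c\lenX{e}/(\eps\delta)+c/\eps}$. Let $H_e$ be the convex $2\delta$-hippodrome around $e$, i.e., the Minkowski sum of $e$ with a ball of radius $2\delta$. On one hand, $H_e$ is covered by $O(\lenX{e}/\delta + 1)$ balls of radius $3\delta$, and by $6c$-packedness each such ball meets $\curveBs$ in length at most $O(c\delta)$; hence the total length of $\curveBs \cap H_e$ is $O\pth{c(\lenX{e}+\delta)}$. On the other hand, any edge $f$ of $\curveBs$ with $d(e,f) \leq \delta$ contributes at least $\min(\lenX{f}, \delta) \geq \eps\delta$ length to $H_e$: starting from any point on $f$ witnessing $d(e,f)\leq \delta$, one may travel along $f$ in either direction for at least $\min(\lenX{f},\delta)$ while staying inside $H_e$, thanks to the convexity of $H_e$ and the edge-length lower bound $\lenX{f}\geq \eps\delta$. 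Dividing the length bound by the per-edge contribution yields the claim.

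Summing this per-edge estimate over all edges of $\curveAs$ gives a total of $O\pth{c\lenX{\curveAs}/(\eps\delta) + cn/\eps}$ close pairs, and by symmetry also $O\pth{c\lenX{\curveBs}/(\eps\delta) + cn/\eps}$. The main obstacle is to absorb the length term $c\lenX{\curveAs}/(\eps\delta)$ into $O(cn/\eps)$, since a naive use of this estimate fails when $\curveAs$ has a few very long edges. The plan to overcome this is a charging argument that pairs long edges of $\curveAs$ with long edges of $\curveBs$ and uses the joint packing of both curves, together with the fact that every edge in the simplified curves has length at least $\eps\delta$, to bound the long-edge contribution by $O(cn/\eps)$ as well. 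Combining this with the short-edge estimate yields the desired $O(cn/\eps)$ bound, which is near-linear in $n$ for fixed $c$ and $\eps$.
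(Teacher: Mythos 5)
Your estimate for a single fixed edge $e$ of $\curveAs$ is fine, but the plan as a whole has a genuine gap: the term $O\pth{c\lenX{\curveAs}/(\eps\delta)}$ that you obtain after summing over $e$ simply cannot be absorbed into $O(cn/\eps)$, and you never actually supply the "charging argument that pairs long edges of $\curveAs$ with long edges of $\curveBs$" that you invoke to fix it. Worse, that hint points in the wrong direction: the problematic configuration is one long edge $e$ of $\curveAs$ lying within distance $\delta$ of $\Theta(\lenX{e}/(\eps\delta))$ very \emph{short} edges of $\curveBs$ (each of length $\approx\eps\delta$). Pairing long edges with long edges does nothing for these pairs, and any estimate that attributes all of them to $e$ with a per-edge cost of only $\eps\delta$ inside $H_e$ will inevitably produce the $\lenX{\curveAs}/(\eps\delta)$ term. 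Since a $6c$-packed curve with $n$ vertices can have total length arbitrarily large relative to $n\eps\delta$, the obstacle is real, not a matter of tightening constants.

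The fix the paper uses is to charge each close pair $(\segA,\segB)$ to the \emph{shorter} of the two segments, not to a fixed side. If $\segA$ is the shorter one, consider the ball $B$ of radius $r=(3/2)\lenX{\segA}+\delta$ around the midpoint of $\segA$: any $\segB$ that charges $\segA$ has $\lenX{\segB}\geq\lenX{\segA}$ and therefore contributes length at least $\lenX{\segA}$ to $\curveBs\cap B$ (not merely $\eps\delta$). Since $\curveBs$ is $6c$-packed, $\lenX{\curveBs\cap B}\leq 6cr$, so the number of charges to $\segA$ is at most $6cr/\lenX{\segA}=9c+6c\delta/\lenX{\segA}\leq 9c+6c\delta/\sRadius=O(c/\eps)$, using $\lenX{\segA}\geq\sRadius=\eps\delta$. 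The long-edge problem disappears because a long edge of $\curveAs$ is never charged by its many short neighbors in $\curveBs$ — each such pair is charged to the short edge, where the per-charge length contribution is proportional to the length of that short edge rather than a fixed $\eps\delta$. Summing $O(c/\eps)$ over the $O(n)$ segments of both curves gives the claimed $O(cn/\eps)$. Your covering-by-balls estimate of $\lenX{\curveBs\cap H_e}$ is a slightly roundabout version of the direct $6c$-packedness bound, but the real missing ingredient is the asymmetric charging.
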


\begin{proof}
    Let $\delta \geq 0$ be an arbitrary number, $\sRadius =\eps
    \delta$, $\curveAs = \simpX{\curveA,\sRadius}$ and $\curveBs =
    \simpX{\curveB,\sRadius}$
    
    We need to show that the complexity of
    $\FullFDleq{\delta}(\curveAs,\curveBs)$ is $O(c n/\eps)$.  A free
    space cell of $\FullFDleq{\delta}(\curveAs, \curveBs)$ corresponds
    to two segments $\segA \in \curveAs$ and $\segB \in \curveBs$.
    The free space in this cell is non-empty if and only if there are
    two points $\pntA \in \segA$ and $\pntB \in \segB$ such that
    $\distX{\pntA}{\pntB} \leq \delta$.  We charge this pair of points
    to the shorter of the two segments. We claim that a segment cannot
    be charged too many times.
    
    \parpic[r]{\includegraphics[scale=1]{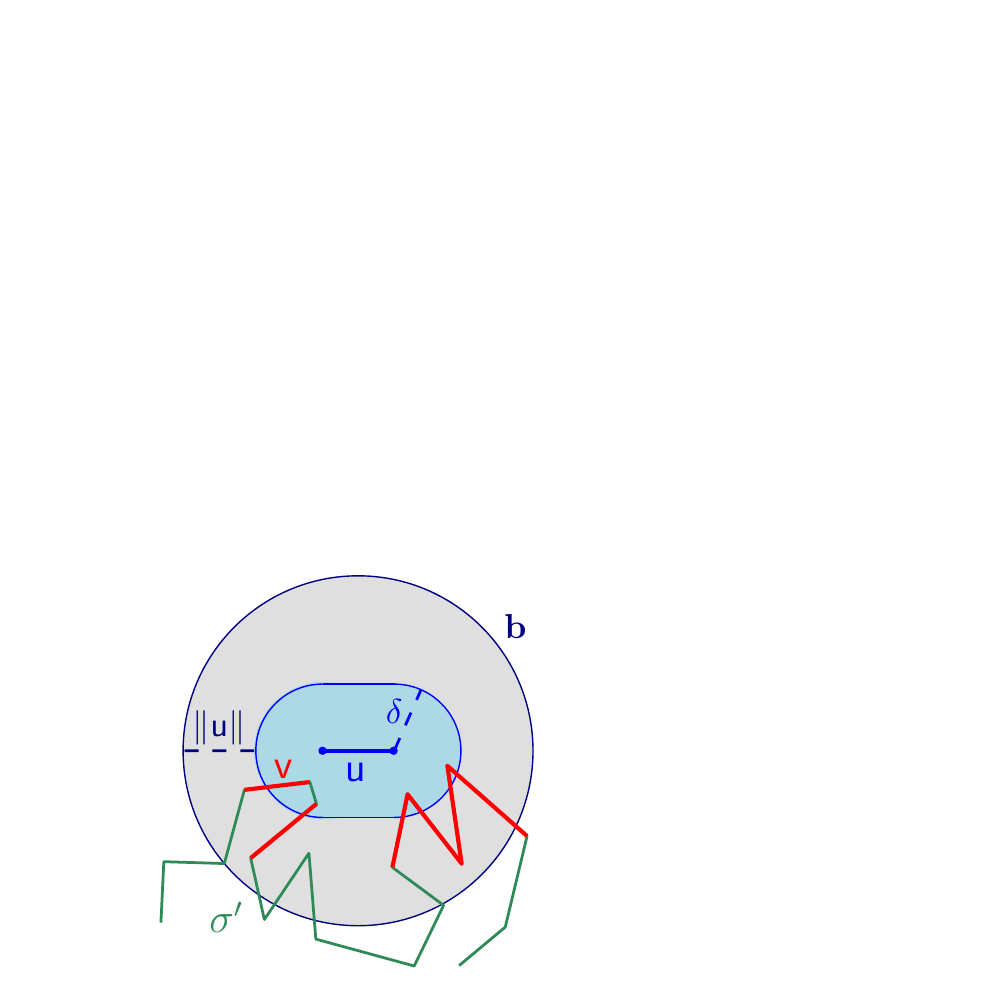}}
    Indeed, consider a segment $\segA \in \curveAs$, and consider the
    ball $\BallC$ of radius $r= (3/2)\lenX{\segA} +\delta$ centered at
    the midpoint of $\segA$, see the figure on the right.  Every
    segment $\segB \in \curveBs$ that participates in a close pair as
    above and charges $\segA$ for it, is of length at least
    $\lenX{\segA}$, and the length of $\segB \cap \BallC$ is at least
    $\lenX{\segA}$. Since $\curveBs$ is $6c$-packed, by
    \lemref{6:c:packed}, we have that the number of such charges is at
    most
    \begin{align*}
        c' = \frac{\lenX{\curveBs \cap \BallC}}{\lenX{\segA}} \leq
        \frac{6 c r}{\lenX{\segA}}%
        =%
        \frac{6c((3/2)\!\lenX{\segA} +\delta)}{\lenX{\segA}}%
        \leq 9 c + \frac{6c\delta}{\sRadius} = O\pth{\frac{c}{\eps}},
    \end{align*}
    since $\lenX{\segA} \geq \sRadius$.
    
    We conclude that there are at most $c'n$ free space cells that
    contain a point of $\FullFDleq{\delta}$. The complexity of the
    free space inside a cell is a constant, thus implying the claim.
\end{proof}

By plugging the above into \thmref{main}, we get the following result.

\begin{theorem}
    Given two polygonal $c$-packed curves $\curveA$ and $\curveB$ with
    a total of $n$ vertices in $\Re^d$, and a parameter $1 > \eps >
    0$, one can $(1+\eps)$-approximate the \Frechet distance between
    $\curveA$ and $\curveB$ in $\ds O( c n /\eps + cn \log n)$ time.
    
    \thmlab{main:c:packed}
\end{theorem}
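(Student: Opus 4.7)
The plan is simply to plug the bound on the \resemblance for $c$-packed curves from \lemref{complexity:l:e:q} into the running time bound of \thmref{main}. First I would invoke \lemref{complexity:l:e:q} to conclude that $\SimpComplexity{\eps}{\curveA}{\curveB} = O(cn/\eps)$ and $\SimpComplexity{1}{\curveA}{\curveB} = O(cn)$. Substituting these into the expression $O\pth{\SimpComplexity{\eps}{\curveA}{\curveB} + \SimpComplexity{1}{\curveA}{\curveB} \log n}$ immediately yields the claimed time bound $O(cn/\eps + cn \log n)$.

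The only non-routine aspect is verifying that the complexity function $\SimpComplexity{\cdot}{\curveA}{\curveB}$ actually satisfies the abstract properties (P\ref{resemblance:a}) and (P\ref{resemblance:b}) demanded by \defref{simplification:complexity}, since \thmref{main} is stated under those assumptions. For (P\ref{resemblance:a}), the bound $O(cn/\eps)$ depends linearly on $1/\eps$, so replacing $\eps$ by $\eps/c'$ for a constant $c'$ only blows up the bound by the constant factor $c'$, hence $\SimpComplexity{\eps/c'}{\curveA}{\curveB} = O\pth{\SimpComplexity{\eps}{\curveA}{\curveB}}$ as required. Property (P\ref{resemblance:b}) also follows from the shape of the bound: halving $\eps$ doubles the upper bound, so dividing by two recovers it.

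Finally, I would note that the approximation guarantee itself is inherited directly from \thmref{main}, since nothing in the proof of correctness depends on the input family. Writing this out, the proof is essentially a one-line substitution followed by the routine verification of (P\ref{resemblance:a}) and (P\ref{resemblance:b}). The main technical work has already been absorbed into \lemref{6:c:packed} and \lemref{complexity:l:e:q}, so I do not expect any real obstacle at this stage; the theorem is essentially a corollary packaging previously established results.
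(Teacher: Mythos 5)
Your proposal matches the paper's proof exactly: the paper states this theorem as an immediate consequence of plugging the bound $\SimpComplexity{\eps}{\curveA}{\curveB} = O(cn/\eps)$ from \lemref{complexity:l:e:q} into the running time $O(\SimpComplexity{\eps}{\curveA}{\curveB} + \SimpComplexity{1}{\curveA}{\curveB}\log n)$ of \thmref{main}. Your extra paragraph checking that the $O(cn/\eps)$ bound is consistent with properties (P1) and (P2) is a sensible sanity check that the paper leaves implicit, but it does not change the argument.
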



\subsection{\Resemblance of Low Density Curves}
\seclab{low:density}

\begin{defn}
    A polygonal curve $\curveA$ in $\Re^d$ is
    \emphi{$\density$-low-density} if any ball $\BallX{\pntA,r}$
    intersects at most $\phi$ segments of $\curveA$ that are longer
    than $r$.
\end{defn}

First, observe that this input model is less restrictive than the input
model which describes c-packed curves.  It can be easily seen by a
simple packing argument that a polygonal $c$-packed curve is
$\density$-low-density, for $\density=2c$.  For any ball
$\BallC=\BallX{\pntA,r}$, consider the ball with the same center that
has radius $r'=2r$. Any edge intersecting $\BallC$ that is longer than
$r$ must contribute at least $r$ to the length of the intersection of
the curve with the larger ball, which is bounded by $cr'$. There can
be at most $cr'/r=2c$ edges of this type.

\parpic[r]{\includegraphics[scale=0.4]{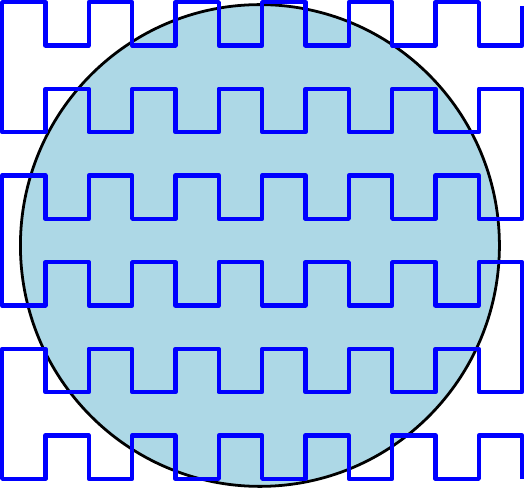}}
A curve that is low density, however, is not necessarily $c$-packed
for a small value of $c$.  Indeed, a low density curve $\curveA$ might
have an arbitrarily long intersection with a ball by having
sufficiently small segments, see the figure on the right.  However, in
this case $\curveA$ must have many vertices in the areas where its
length cannot be bounded, as we will show in the following section.

\begin{claim}
    Let $\curveA$ be a $\phi$-low density polygonal curve, and let
    $\Cube$ be a hypercube in $\Re^d$ with sidelength $\ell$. Then,
    the number of edges of length $\geq \ell$ of $\curveA$ that
    intersect $\Cube$ is bounded by $\cDim \phi$, where $\cDim =
    \ceil{\sqrt{d} / 2 }^d$.
    
    \clmlab{intersections}
\end{claim}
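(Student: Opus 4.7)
\medskip

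\textbf{Proof plan.} The natural approach is a covering argument: decompose $\Cube$ into small pieces, each enclosed in a ball of radius at most $\ell$, so that the low density property can be invoked on each ball separately. Any edge of length $\geq \ell$ that meets $\Cube$ must meet one of these pieces, and each of the covering balls witnesses at most $\phi$ such edges, so the total is bounded by (number of pieces) $\cdot \phi$.

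Concretely, I would partition $\Cube$ into $k^d$ axis-aligned congruent sub-hypercubes of side length $\ell/k$, with
$
k = \ceil{\sqrt{d}/2}.
$
Each sub-hypercube has diameter $\ell\sqrt{d}/k$, and is therefore contained in a closed ball of radius $r = \ell\sqrt{d}/(2k)$ centered at its center. The choice of $k$ guarantees $r \leq \ell$, which is the exact threshold we need: any edge $\segA$ of $\curveA$ with $\lenX{\segA} \geq \ell$ satisfies $\lenX{\segA} \geq r$, so the $\phi$-low-density hypothesis applied to any of these covering balls yields at most $\phi$ such edges crossing that ball.

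Any edge of $\curveA$ of length $\geq \ell$ meeting $\Cube$ must intersect at least one of the sub-hypercubes, hence at least one of the $k^d$ covering balls. Summing the bound $\phi$ over the $k^d = \ceil{\sqrt{d}/2}^d = \cDim$ balls yields the claimed bound $\cDim \phi$.

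The only mildly delicate point is the boundary case in which $\sqrt{d}/2$ is itself an integer (so $r = \ell$ exactly); this is handled by reading ``longer than $r$'' in the density definition as ``of length at least $r$'', or equivalently by shrinking the covering radius by an arbitrarily small amount and using that edges of length $\geq \ell$ remain of length $> r - \eps$. Beyond this, the argument is just the combination of a grid partition with the single-ball low density bound.
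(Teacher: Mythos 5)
Your proof is correct and takes essentially the same approach as the paper: partition $\Cube$ into $\ceil{\sqrt{d}/2}^d$ congruent subcubes, enclose each in a ball of radius at most $\ell$, and apply the low-density bound to each ball. Your aside about the boundary case $r=\ell$ (where the definition says ``longer than $r$'' rather than ``at least $r$'') is a fair observation that the paper silently elides, but it does not change the argument.
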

\begin{proof}
    Partition the cube $\Cube$ into a $D\times D \times \cdots \times
    D$ grid, for $D = \ceil{\sqrt{d} / 2 }$. Clearly, any edge that
    intersects $\Cube$ that has length $\geq \ell$ must intersect one
    of the hypercubes in this grid. A hypercube of this grid has
    diameter
    \begin{align*}
        \frac{\sqrt{d}\ell}{D}%
        \leq %
        \frac{\sqrt{d}\ell}{\sqrt{d} / 2 }%
        \leq 2 \ell,
    \end{align*}
    and is included in a ball of radius $\ell$.  Thus, a hypercube in
    this grid intersects at most $\phi$ such long edges. We conclude
    that there can be at most $\phi D^d$ long edges intersecting
    $\Cube$.
\end{proof}

\subsubsection{Low density curves can be long only if they pay for it}

\lemref{quadtree:decomposition} below testifies that the parts of a
low density curve, where its length cannot be bounded by a constant,
can be covered with hypercubes, such that each cube intersects at most
a constant number of edges and at most a constant number of other
cubes.  We use this construction in \lemref{number:of:vertices} to
relate the length of a low density curve to the diameter of the
covered area to the number of vertices.  One can verify
\lemref{quadtree:decomposition} using an easy modification of a lemma
from \cite{b-lsbsp-00}.  We provide a proof, for the sake of
completeness, in \apndref{helper}.

\begin{lemma}
    Let $\curveA$ be a $\phi$-low density curve, of which $n$ edges
    are intersecting a given hypercube $\Cube$ of $\Re^d$.  The
    hypercube $\Cube$ can be covered by a set of hypercubes
    $\CubeSet$, such that
    \begin{inparaenum}[(i)]
        \item $\bigcup \CubeSet = \Cube$,
        \item $\cardin{\CubeSet} \leq 2^{2d+1} n$,
        \item any point $\pntA \in \Cube$ is covered by at most $2^d$
        hypercubes, and
        \item each hypercube of $\CubeSet$ intersects at most
        $\cDim\phi$ edges of $\curveA$, where $\cDim$ is a constant
        that depends only on the dimension $d$.
    \end{inparaenum}
    \lemlab{quadtree:decomposition}
\end{lemma}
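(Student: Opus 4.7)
The plan is to build a hierarchical (quadtree-style) decomposition of $\Cube$ and then gently inflate its leaf cells to obtain $\CubeSet$.  Concretely, starting with $\Cube$ as the root, at each node corresponding to a hypercube $\Cube'$ of sidelength $\ell'$ I inspect the edges of $\curveA$ that intersect $\Cube'$: if every such edge has length at least $\ell'$, I declare $\Cube'$ a leaf; otherwise I split $\Cube'$ into $2^d$ congruent sub-hypercubes and recurse on each.  The set $\CubeSet$ is then taken to be the family of hypercubes obtained by inflating each leaf $\Cube'$ symmetrically by a factor of two (clipped back to $\Cube$ at the boundary).  This is the natural analogue, restricted to edges, of the scheme used in \cite{b-lsbsp-00}.

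Properties (i), (iii), and (iv) follow almost directly from this construction.  Property (i) is immediate because the leaves already partition $\Cube$ and each leaf is contained in its inflation.  For property (iii), each inflated leaf of sidelength $2\ell'$ overlaps only the inflations of the leaves that are grid-neighbors of $\Cube'$ at its level, and since we only double the side, any point of $\Cube$ is contained in at most $2^d$ such inflated hypercubes.  For property (iv), a leaf $\Cube'$ satisfies, by construction, that every edge meeting $\Cube'$ has length $\geq \ell'$; an edge meeting the inflated cube $\Cube''$ (of sidelength $2\ell'$) that has length $< 2\ell'$ must already be ``long'' relative to $\Cube''$, so \clmref{intersections} applied to $\Cube''$ at scale $2\ell'$ bounds the number of such edges by $\cDim \phi$.

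The main work, and the main obstacle, is proving the combinatorial bound (ii).  The idea is a charging argument: every internal node of the quadtree is split because some edge $e$ of $\curveA$ with $|e| < \ell'$ meets $\Cube'$, and therefore $e$ has at least one endpoint lying within $\Cube'$ or in a neighboring cell of the same level.  I would charge each split to such a nearby endpoint and show that a single endpoint can be charged only $O(1)$ times per level, and only $O(1)$ levels in total (those levels whose sidelength $\ell'$ is comparable to the length of an edge incident to that endpoint).  Summing over the $2n$ endpoints gives $O(n)$ internal nodes, and hence $O(2^d n)$ leaves; tracking constants carefully yields the claimed $|\CubeSet| \leq 2^{2d+1} n$.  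The subtlety to be careful about is that an endpoint can, in principle, participate in edges of very different lengths, so one must avoid double-charging and correctly localize each split to a constant number of ``responsible'' endpoints; this is exactly the step where the low-density hypothesis and \clmref{intersections} enter in a crucial way.
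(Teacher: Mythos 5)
Your instinct that (ii) is the crux is right, but the charging argument you sketch does not close the gap, and I think it cannot be closed for a plain quadtree. Consider a single very short edge near a corner of a huge hypercube $\Cube$: your split rule forces subdivision at \emph{every} level from the top down to the scale of that edge, which is $\Theta(\log(\diameterX{\Cube}/\lenX{e}))$ levels, each producing at least $2^d - 1$ new leaves, while $n=1$. So ``$O(1)$ levels in total per endpoint'' is simply false, and no amount of careful bookkeeping with \clmref{intersections} rescues a bound of the form $\cardin{\CubeSet} = O(n)$ for the uncompressed tree. The missing idea --- and the reason the paper phrases this via a helper lemma on point sets --- is compression: the paper first replaces each edge by the $2^d$ corners of its bounding box, obtaining $\leq 2^d n$ points, and then builds a \emph{reduced} quadtree in which, whenever a split would leave all points in a single child, one instead performs a ``reduced split'' that shrinks the one occupied subcube to the minimal cube containing the points and enlarges the empty ones (allowing overlap). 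Each genuine split then either separates the point set into two nonempty parts or lands a point on a splitting hyperplane, so the number of splits is $O(dn)$ and the number of leaves is $O(2^d d n)$, independent of the spread of the input. This compression step is what makes (ii) true, and it has no counterpart in your construction.

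A secondary problem is the post-hoc inflation. Doubling each leaf does not yield multiplicity $2^d$: already in $d=1$, take leaves $[0,\tfrac12]$, $[\tfrac12,\tfrac34]$, $[\tfrac34,\tfrac78]$, $[\tfrac78,1]$; the point $0.7$ lies in three of the doubled (clipped) intervals, exceeding $2^d=2$. The trouble is that the inflation of a large leaf overlaps the inflations of arbitrarily many smaller neighbors. Your argument for (iv) has the same difficulty: an edge meeting the inflated cube $\Cube''$ but not the original leaf $\Cube'$ can be shorter than $\ell'$, so \clmref{intersections} (which only counts long edges) does not bound it. The paper sidesteps both issues by never inflating: the overlap in its cover is produced only by the reduced splits themselves, which guarantee the $2^d$ multiplicity bound directly, and the contrapositive ``a leaf containing a short edge would contain a corner point and hence could not be a leaf'' gives (iv) cleanly for the un-inflated leaves.
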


\begin{lemma}
    Let $\curveA$ be a $\phi$-low density curve in $\Re^d$, and let
    $\Cube$ be a cube in $\Re^d$ with side length $r$. Let $\alpha =
    \lenX{\curveA \cap \Cube}$.  There must be at least
    $\Omega((\alpha/ r)^{1+1/(d-1)})$ vertices of $\curveA$ contained
    in $3\Cube$, where $3 \Cube$ is the scaling of $\Cube$ by a factor
    of $3$ around its center.
    
    \lemlab{number:of:vertices}
\end{lemma}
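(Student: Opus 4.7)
The plan is to partition $\Cube$ into a uniform grid of smaller sub-cubes at a resolution $s$ that I will choose, and then split every edge of $\curveA$ that contributes to the length $\alpha$ into two types: \emph{long} edges (length $\geq s$), whose total contribution I will control through the low-density property via \clmref{intersections}, and \emph{short} edges (length $< s$), whose total contribution I will control by showing that their endpoints must lie in $3\Cube$ and then invoking the vertex count $k$ of $\curveA$ in $3\Cube$.

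Concretely, let $D \geq 1$ be an integer parameter and set $s = r/D$; partition $\Cube$ into the $D^d$ axis-aligned sub-cubes of side $s$. By \clmref{intersections} applied to each sub-cube, at most $\cDim\phi$ long edges intersect any one sub-cube, and each such edge contributes at most $\sqrt{d}\,s$ of length to that sub-cube. Summing over the $D^d$ sub-cubes shows that the total length of long edges inside $\Cube$ is at most $\cDim \sqrt{d}\, \phi\, r\, D^{d-1}$. For the short edges the key observation is simple: if a short edge $e$ has length $< s \leq r$ and intersects $\Cube$, then both of its endpoints lie within distance $r/D$ of $\Cube$ and hence inside $3\Cube$. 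Thus every short edge intersecting $\Cube$ is an edge of $\curveA$ whose two consecutive vertices both lie in $3\Cube$, and there are at most $k$ such edges; each contributes less than $s = r/D$ to the length. Adding the two estimates gives
\begin{align*}
    \alpha \;\leq\; \cDim \sqrt{d}\, \phi\, r\, D^{d-1} \;+\; kr/D.
\end{align*}

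Finally, I balance the two terms by choosing $D = \ceil{(k/\phi)^{1/d}}$, which yields $\alpha = O(\phi^{1/d}\, r\, k^{(d-1)/d})$ after a short computation, and on rearranging produces $k = \Omega\bigl((\alpha/r)^{d/(d-1)}\bigr)$ with the hidden constant depending on $\phi$ and $d$. Since $d/(d-1)=1+1/(d-1)$, this is the bound in the statement. The only subtle point I will need to address carefully is the degenerate regime where $k$ is so small that the optimal $D$ rounds up to $1$; there the bound degenerates to $\alpha = O(\phi r)$, which is consistent with the asymptotic conclusion (the hidden constant can absorb this corner case), so no additional work is needed beyond verifying the rounding does not spoil the balance. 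The main obstacle is really the second step, namely the careful accounting that ensures edges are not miscounted: edges crossing $\Cube$ with both endpoints outside $3\Cube$ must be long (length $\geq r \geq s$) so they are fully absorbed in the long-edge bookkeeping, while short edges are precisely those forced to contribute a vertex to $3\Cube$.
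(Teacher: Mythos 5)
Your proof is correct, and it takes a genuinely different route from the paper's. The paper covers $\Cube$ with an \emph{adaptive} set of cubes produced by a reduced quadtree (\lemref{quadtree:decomposition}), so that each cover cube meets $O(\phi)$ edges, and then converts the bound $\alpha \leq \cDim\phi\sqrt{d}\sum_i r_i$ into a lower bound on the number of cover cubes $N$ (and hence on the edge/vertex count) via \Holder's inequality together with the bounded total volume $\sum_i r_i^d \leq 2^d r^d$. You instead partition $\Cube$ into a \emph{uniform} $D\times\cdots\times D$ grid at a resolution $s=r/D$ chosen after the fact, split edges into long ($\geq s$, controlled directly by \clmref{intersections} applied to each sub-cube) and short ($<s$, which are forced to place two endpoints in $3\Cube$), and then optimize $D$ to balance the two terms. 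This avoids both the quadtree construction and \Holder's inequality entirely, so it is more elementary and self-contained. Your bookkeeping is sound: the long-edge sum $\cDim\sqrt{d}\,\phi\, r\, D^{d-1}$ is just $\cDim\phi\cdot\sqrt{d}s\cdot D^d$; the short-edge count is at most $k$ because each such edge consumes a pair of consecutive vertices from $3\Cube$; and the balance at $D\approx(k/\phi)^{1/d}$ gives $\alpha=O(\phi^{1/d}rk^{(d-1)/d})$, i.e.\ $k=\Omega((\alpha/r)^{d/(d-1)})$ with the hidden constant depending on $\phi$ and $d$, exactly as in the paper. The one thing worth noting is that your $\phi$-dependence works out to $\phi^{-1/(d-1)}$, slightly better than the paper's $\phi^{-d/(d-1)}$, though both are of course absorbed in the $\Omega(\cdot)$. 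Your handling of the degenerate $D=1$ case (small $k$, forcing $\alpha=O(\phi r)$) is also correct and matches the paper's subtraction of the $\cDim\phi$ term at the end.
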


\begin{proof}
    We will first give a lower bound on the number $n$ of edges
    intersecting $\Cube$ (i.e., the edges that contribute to
    $\alpha$). Then we will account for the edges that have endpoints
    outside $3 \Cube$. So, take the $n$ edges of $\curveA$ that
    intersect $\Cube$ and construct the cover of $\Cube$ resulting
    from \lemref{quadtree:decomposition} with respect to these edges.
    
    Let $\Cube_1, \ldots, \Cube_N$ denote the cubes in this cover,
    where $r_1 \leq r_2 \leq \dots \leq r_{N}$ are the side lengths of
    the cubes used by the cover,
    respectively. \lemref{quadtree:decomposition} implies that $N \leq
    2^{d+1} d n$, and, therefore, a lower bound on $N$ would provide a
    lower bound on $n$.
    
    So, the sum of the diameters of those $N$ cubes bounds the length
    of the intersection $\alpha \leq \sum_{i=1}^{N} \cDim\phi \sqrt{d}
    r_i$, since every cube in this cover can intersect at most $\cDim
    \phi$ edges of $\curveA$.  Setting $p=d$ and $q = d/(d-1)$, we
    observe that $1/p + 1/q = 1/d + (d-1)/d = 1$, and by \Holder's
    inequality\footnote{\Holder's inequality states that $\sum_{i=1}^n
       \cardin{a_i b_i} \leq \pth{\sum_{i=1}^n
          \cardin{a_i}^{q}}^{1/q}\pth{\sum_{i=1}^n
          \cardin{b_i}^{p}}^{1/p}$ if $1/p+1/q = 1$.}, we have that
    \begin{align*}
        \sum_{i=1}^{N} r_i = \sum_{i=1}^{N} r_i \cdot 1%
        \leq%
        \pth[]{\sum_{i=1}^N r_i^d }^{1/d} \pth{ \sum_{i=1}^N
           1^q}^{1/q} %
        = %
        \pth[]{\sum_{i=1}^N r_i^d }^{1/d} N^{(d-1)/d}.
    \end{align*}
    
    \lemref{quadtree:decomposition} also implies that the sum of the
    volumes of the cubes is at most $2^d \VolumeX{\Cube}$, since every
    point in $\Cube$ is covered at most $2^d$ times by this
    cover. Therefore we have that $\sum_{i=1}^{N} r_i^d = \sum_{i=1}^N
    \VolumeX{\Cube_i} \leq 2^d \VolumeX{\Cube} = 2^d r^d$. Hence
    \begin{align*}
        \alpha %
        \leq%
        \sum_{i=1}^{N} \cDim \phi \sqrt{d} r_i%
        \leq%
        \cDim\phi \sqrt{d} \pth[]{\sum_{i=1}^N r_i^d }^{1/d}
        N^{(d-1)/d}%
        \leq%
        \cDim\phi \sqrt{d} \pth{2^d r^d }^{1/d} N^{(d-1)/d}.
    \end{align*}
    This implies that $\ds \constB\pth{ {\alpha}/{r}}^{d/(d-1)} \leq
    N$, where $\constB = \pth{{ 2 \cDim \phi \sqrt{d}}
    }^{-d/(d-1)}$. Since $N \leq 2^{2d+1}  n$, we have that $\ds
    \constC\pth{ {\alpha}/{r}}^{d/(d-1)} \leq n$, where $\ds \constC =
    \frac{1}{2^{2d+1} }\pth{{ 2 \cDim\phi \sqrt{d}} }^{-d/(d-1)}$.
    
    Now, some of these $n$ edges intersecting $\Cube$ can have both
    endpoints outside $3\Cube$. Such edges are longer than the
    sidelength of $\Cube$ and by \clmref{intersections} their number
    is bounded by $\cDim\phi$.
    
    Hence, the number of vertices of $\curveA$ inside $3\Cube$ is at
    least $n - \cDim\phi \geq \constC \pth{\alpha / r}^{d/(d-1)} -
    \cDim\phi$.
\end{proof}

\begin{remark}
    One can also prove \lemref{number:of:vertices} directly, by
    building a quadtree and arguing that for a low-density curve to be
    sufficiently long, many edges in it have to be (sufficiently)
    short, thus implying the same bound. However, the current proof is
    more intuitive and cleaner.
\end{remark}

\begin{observation}
    The bound in \lemref{number:of:vertices} is tight.  For any $m >
    0$ and any $d > 0$, consider the integer grid in $\Re^{d}$ with
    coordinates in the range $1,\ldots, m$, and compute a path that
    visits all these grid points using only the grid edges of unit
    length, which is clearly possible.
    
    Now, the resulting curve is $2^d$-low density and has length
    $\alpha = m^d-1$ and its diameter is $r=\sqrt{d}m$.
    \lemref{number:of:vertices} implies that it has
    $\Omega\pth{(\alpha/ r)^{d/(d-1)}} = \Omega\pth{m^d}$
    vertices. Since this grid has $m^d$ vertices, this is tight.
    
    \obslab{first}
\end{observation}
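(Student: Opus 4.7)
The statement is a tightness observation, so the proof is essentially the verification of four properties of the explicit construction. My plan is to make each property precise and then combine them arithmetically.

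First I would exhibit the curve concretely as a snake-like (boustrophedon) Hamiltonian traversal of the integer grid $\{1,\dots,m\}^d$: traverse the first row in direction $+e_1$, step by $e_2$, traverse the next row in direction $-e_1$, etc., and recurse on the $(d-1)$-dimensional cross-sections. By induction on $d$ this visits every grid point exactly once using only unit axis-parallel grid edges. This immediately gives $m^d$ vertices, $m^d-1$ unit edges, hence $\alpha = m^d - 1$, and diameter $r = \sqrt{d}(m-1) = \Theta(\sqrt{d}\,m)$ since the curve is contained in the cube $[1,m]^d$ and its endpoints are at opposite corners of a face (or close to it).

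Next I would verify that this curve is $2^d$-low-density. Since every edge has length $1$, the only interesting case is a ball $\BallX{\pntA,r}$ with $r<1$ (for $r\geq 1$ the count is zero vacuously). For such a ball, any edge of the curve that meets $\BallX{\pntA,r}$ is an axis-parallel unit segment in some coordinate direction $e_i$ whose perpendicular $(d-1)$-dimensional offset from $\pntA$ is at most $r<1$. In each of the $d-1$ perpendicular coordinates, an integer lies within $r<1$ of $\pntA$'s coordinate in at most two ways, which bounds the number of relevant lines in direction $e_i$ by $2^{d-1}$; using that the path is Hamiltonian and visits each vertex at most twice as an endpoint of an edge, this is the step where one has to be careful about the exact constant, but a routine packing argument yields a bound of the form $2^d$ (or any constant depending only on $d$, which is all the statement needs).

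Finally I would plug into \lemref{number:of:vertices}. The lower bound from that lemma, applied to the bounding cube $[1,m]^d$ (suitably rescaled/translated), gives a number of vertices of $\curveA$ inside $3\Cube$ that is
\[
\Omega\!\pth{\pth{\frac{\alpha}{r}}^{d/(d-1)}}
= \Omega\!\pth{\pth{\frac{m^d-1}{\sqrt{d}\,m}}^{d/(d-1)}}
= \Omega\!\pth{\pth{m^{d-1}}^{d/(d-1)}}
= \Omega(m^d).
\]
Since our construction has exactly $m^d$ vertices inside the cube, the bound predicted by \lemref{number:of:vertices} is realized up to constants, proving tightness. The only real work is the low-density verification in Step~2; the remaining items are direct computations.
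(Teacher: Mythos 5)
Your proof is correct and takes essentially the same route as the paper's observation, which simply states the salient facts (the boustrophedon path, the length $\alpha$, the diameter $r$, the low-density constant) without verification; you fill in the natural details. Two small remarks. First, \lemref{number:of:vertices} is stated for a cube of side length $r$, while the observation writes $r$ for the diameter; you correctly note these differ only by a $\Theta(\sqrt{d})$ factor, which is absorbed in the $\Omega$, but it would be cleaner to apply the lemma with $\Cube=[1,m]^d$ and side length $m-1$ directly. Second, your packing argument for the low-density bound — $2^{d-1}$ candidate lines per axis direction, a constant number of unit segments per line within a ball of radius $<1$, summed over $d$ directions — naturally yields something of the form $O(d\cdot 2^{d-1})$ rather than exactly $2^d$; the paper's $2^d$ claim is similarly not justified to the exact constant. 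You are right that this does not matter: both the observation and \lemref{number:of:vertices} only require the low-density constant to be bounded in terms of $d$, since the constant hidden in the $\Omega$ already depends on $\phi$ and $d$. An equally short alternative to your per-direction count is to observe that every path edge meeting $\BallX{\pntA,r}$ with $r<1$ has both endpoints among the at most $5^d$ grid points of $\BallX{\pntA,2}$, and a subgraph of a simple path on $k$ vertices is a forest with fewer than $k$ edges; this avoids the per-line bookkeeping at the cost of a worse constant, which again is immaterial.
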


\subsubsection{Accounting for many \Relevant Free Space Cells}

If many columns of the free space diagram of the two simplified low
density curves contain a linear number of \relevant cells, then the
curve must be ``long'' in the vicinity of the edges corresponding to
those columns, since the simplification ensures a minimal edge length.
A similar argument holds for the rows.  Therefore, using
\lemref{number:of:vertices}, we can charge the additional \relevant
cells to vertices of the original curves. This yields the following
result.

\begin{lemma}
    For any two low density curves $\curveA$ and $\curveB$ in $\Re^d$,
    and $0 < \eps < 1$, we have that
    $\SimpComplexity{\eps}{\curveA}{\curveB} = O \pth{
       \frac{n^{2(d-1)/d}}{\eps^2}}$.
    
    \lemlab{low:complexity:d}
\end{lemma}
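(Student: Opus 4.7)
The plan is to fix an arbitrary $\delta \geq 0$, set $\sRadius = \eps \delta$, and show that $\Nleq{\delta}(\curveAs, \curveBs) = O(n^{2(d-1)/d}/\eps^2)$ for $\curveAs = \simpX{\curveA, \sRadius}$ and $\curveBs = \simpX{\curveB, \sRadius}$. The heart of the argument is a column-by-column bound expressed in terms of a local vertex count of the original curve $\curveB$, followed by aggregation with H\"older's inequality.

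Fix an edge $\segA_i \in \curveAs$ of length $\ell_i \geq \sRadius$ and let $n_i$ be the number of non-empty cells in column $i$. These cells correspond to edges $\segB_j \in \curveBs$ that intersect the hippodrome $\segA_i \oplus B(0, \delta)$. Set $r_i = \max(\ell_i, \delta)$ and let $\BallC_i = B(m_i, c r_i)$ for a small constant $c$, with $m_i$ the midpoint of $\segA_i$. I argue that each $\segB_j$ contributing to column $i$ produces a sub-segment of length at least $\min(\ell_{\segB_j}, r_i)$ inside $\BallC_i$: either $\segB_j$ is ``short'' enough relative to $r_i$ that it lies entirely in $\BallC_i$, or $\segB_j$ has a chord of length $\geq r_i$ inside $\BallC_i$. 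Charging each cell to the shorter of $\segA_i$ and $\segB_j$ (and bounding rows symmetrically), I may assume $\ell_i \leq \ell_{\segB_j}$, so every contribution is $\geq \ell_i$, giving $n_i \cdot \ell_i \leq O(\lenX{\curveBs \cap \BallC_i})$.

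I then chain two lemmas from the paper. \lemref{hippo} transfers this length bound from $\curveBs$ to $\curveB$ at the cost of enlarging $\BallC_i$ by $\sRadius$; enclose the enlarged ball in a cube $\Cube_i$ of side $O(r_i)$. \lemref{number:of:vertices} applied to the low-density curve $\curveB$ then yields $\lenX{\curveB \cap \Cube_i} \leq O(r_i \, v_i^{(d-1)/d})$, where $v_i$ is the number of $\curveB$-vertices in the scaled cube $3 \Cube_i$. Thus
\begin{align*}
   n_i \;\leq\; O\!\pth{\frac{r_i}{\ell_i}} v_i^{(d-1)/d} \;=\; O\!\pth{\frac{1}{\eps}} v_i^{(d-1)/d},
\end{align*}
since $r_i/\ell_i = \max(1, \delta/\ell_i) \leq \delta/\sRadius = 1/\eps$.

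Summing and applying H\"older's inequality with conjugate exponents $d$ and $d/(d-1)$,
\begin{align*}
   N \;\leq\; O\!\pth{\frac{1}{\eps}} \sum_i v_i^{(d-1)/d}
   \;\leq\; O\!\pth{\frac{1}{\eps}}\, n^{1/d} \pth{\sum_i v_i}^{(d-1)/d}.
\end{align*}
It then remains to bound $\sum_i v_i$ by a depth-of-cover argument: a vertex $v \in \VertexSet{\curveB}$ is counted in $v_i$ exactly for those columns whose enlarged cube $3\Cube_i$ contains $v$, equivalently for each $\segA_i$ passing within $O(r_i)$ of $v$. Using the minimum-edge-length property of $\curveAs$ together with \lemref{number:of:vertices} applied to $\curveA$, the number of such columns per vertex is $O(n^{(d-1)/d}/\eps)$, so $\sum_i v_i = O(n^{(2d-1)/d}/\eps)$; substituting into the H\"older estimate produces $N = O(n^{2(d-1)/d}/\eps^2)$. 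The main obstacle I foresee is this depth-of-cover step: it has to handle columns with a dyadic range of scales $r_i$ uniformly, marrying the minimum-edge-length guarantee of the simplification to the low-density hypothesis for $\curveA$ without paying an extra logarithmic factor.
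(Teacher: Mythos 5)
Your column-by-column bound $n_i \leq O(1/\eps)\, v_i^{(d-1)/d}$ is a correct observation, and the chain through \lemref{hippo} and \lemref{number:of:vertices} is sound. But the aggregation has two genuine gaps, and the second is fatal to the approach as written.

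First, the final substitution does not produce the claimed bound. Even granting the asserted depth-of-cover bound $\sum_i v_i = O(n^{(2d-1)/d}/\eps)$, H\"older with $m \leq n$ columns gives
\[
N \;\leq\; O\!\left(\frac{1}{\eps}\right) n^{1/d}\!\left(\frac{n^{(2d-1)/d}}{\eps}\right)^{(d-1)/d}
\;=\; O\!\left(\frac{n^{(2d^2-2d+1)/d^2}}{\eps^{(2d-1)/d}}\right),
\]
and the exponent on $n$ is $2(d-1)/d + 1/d^2$, which strictly exceeds the target; for $d=2$ you would obtain $n^{5/4}$ instead of $n$. The per-column bound throws away structure that a flat H\"older aggregation cannot recover: you would need $\sum_i v_i^{(d-1)/d} = O(n^{2(d-1)/d}/\eps)$, but with $m=\Theta(n)$ columns and $v_i$ as large as $n$ this sum can be as large as $n^{(2d-1)/d}$, so no choice of exponents closes the gap without an additional idea. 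Second, as you yourself flag, the depth-of-cover estimate is unproven and the cubes $3\Cube_i$ live at every scale $r_i$ between $\delta$ and the longest edge of $\curveAs$; the low-density hypothesis controls long edges at each fixed scale, so a dyadic sum over scales threatens a log factor you have no mechanism to absorb.

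The paper's proof avoids both problems by not working column-by-column at all. It lays down a single grid of scale $\delta$ (one fixed scale, so there is no multi-scale issue), charges each reachable vertex--edge pair of the simplified curves to its vertex, and bins the charges by grid cell $\Cell$. It fixes a threshold $M = O(n^{1-2/d}/\eps^2)$ and splits by whether the most-charged vertex in $\Cell$ carries more than $M$ pairs. In the heavy case, \lemref{number:of:vertices} lower-bounds the number of original vertices near $\Cell$, and the quadratic quantity $\Links_\Cell^2$ is linearized by bounding a stray factor $\nVCell{\curveB}{\Cube}^{(d-2)/d}$ by $n^{(d-2)/d}$, giving $\Links_\Cell^2 \leq M \nVCell{\curveB}{\Cube}$; every original vertex is then charged $O(M)$ units overall. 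That threshold-and-linearize step is exactly the leverage your H\"older aggregation is missing.
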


\begin{proof}
    Let $\delta \geq 0$ be an arbitrary radius, and let $\curveAs=
    \simpX{\curveA,\sRadius}$ and $\curveBs =
    \simpX{\curveB,\sRadius}$ be their simplifications, where
    $\sRadius = \eps \delta$. Then, we need to prove that
    $\Nleq{\delta}(\curveAs, \curveBs) = O \pth{
       \frac{n^{2(d-1)/d}}{\eps^2}}$.
    
    To this end, it suffices to bound the number of vertex-edge pairs
    $(\pntA,\segA)$, where $\pntA$ is a vertex of $\curveAs$, $\segA$
    is an edge of $\curveBs$, and the distance between $\pntA$ and
    $\segA$ is at most $\delta$ (naturally, we need to apply the same
    argument to pairs with vertices in $\curveBs$ and edges in
    $\curveAs$). The total number of such pairs bounds the total
    complexity of $\FDleq{\delta} = \FDleq{\delta}(\curveAs,
    \curveBs)$.
    
    Set $M = O \pth{ \nfrac{n^{1-2/d}}{\eps^2}}$, and associate every
    vertex-edge pair $(\pntA,\segA)$ that appears in the free space
    diagram $\FDleq{\delta}$ with the vertex $\pntA$.
    
    \parpic[r]{\includegraphics[scale=0.75]{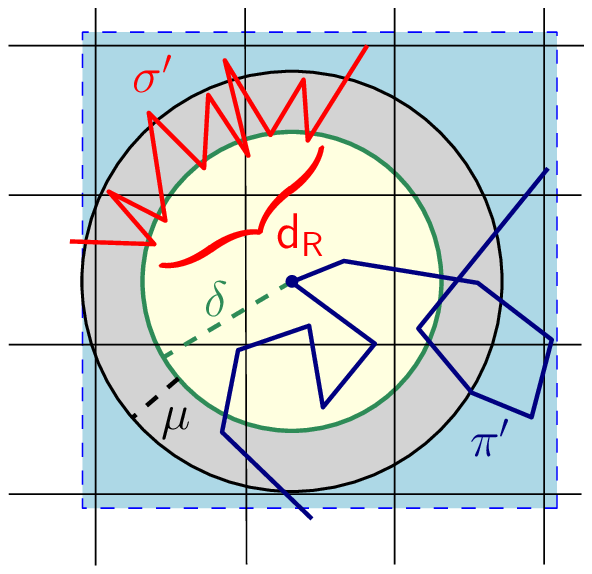}}
    
    Consider the grid $\Grid$ of side length $\delta$.  For a grid
    cell $\Cell$, consider the vertex of $\curveAs$ in $\Cell$ that is
    associated with the largest number of such vertex-edge pairs, and
    say it is being associated with $\Links_\Cell$ such vertex-edge
    pairs, and let $v_\Cell$ denote this ``popular'' vertex of
    $\curveAs$. The total number of vertex-edge pairs associated with
    vertices of $\curveAs$ inside $\Cell$ is bounded by $U_{\Cell} =
    \nVCell{\curveAs}{\Cell} \Links_\Cell$, where
    $\nVCell{\curveAs}{\Cell}$ denotes the number of vertices of
    $\curveAs$ that lie inside $\Cell$.
    
    If $\Links_\Cell \leq M$ then $U_\Cell \leq
    \nVCell{\curveAs}{\Cell} M$, and we charge $M$ units to each
    vertex of $\curveA$ inside $\Cell$.
    
    If $\Links_\Cell > M$ then the length of $\curveBs$ inside
    $\Cube/3$ is at least $\Links_\Cell \sRadius$, where $\Cube$ is a
    cube centered at $\Cell$ with side length $O(\delta)$.  Indeed,
    all the charges $\Links_\Cell$ rise from different segments of
    $\curveBs$ that are in distance at most $\delta$ from $v_\Cell$,
    and each such segment has length at least $\sRadius$.
    
    By \lemref{number:of:vertices}, we have that $\curveB$ must have
    at least $\Omega\pth{ ( \Links_\Cell \sRadius / \delta)^{d/(d-1)}
    }= \Omega\pth{ ( \Links_\Cell \eps)^{d/(d-1)} }$ vertices inside
    $\Cube$. There is some constant $c$ such that
    \begin{align*}
        &c \pth{ \eps \Links_\Cell }^{d/(d-1)}%
        \leq %
        \nVCell{\curveB}{\Cube}
        \;\implies\;%
        \Links_\Cell%
        \leq%
        \frac{1}{\eps} \pth{\frac{\nVCell{\curveB}{\Cube}}{c}
        }^{(d-1)/d} %
        \;\implies\;%
        \Links_\Cell^2 \leq \frac{1}{\eps^2}
        \pth{\frac{\nVCell{\curveB}{\Cube}}{c} }^{2-2/d}%
        \\
        &\;\implies\;%
        \Links_\Cell^2 \leq%
        \nVCell{\curveB}{\Cube} {\frac{1}{c\eps^2}
           \pth{\frac{\nVCell{\curveB}{\Cube}}{c} }^{1-2/d}}
        \leq%
        {\frac{1}{c \eps^2} \pth{\frac{n}{c} }^{1-2/d}}
        \nVCell{\curveB}{\Cube} %
        \leq%
        M \nVCell{\curveB}{\Cube},
    \end{align*}
    by picking $M$ to be sufficiently large.  In particular, if
    $\nVCell{\curveAs}{\Cell} \leq \Links_\Cell$, then $U_\Cell =
    \nVCell{\curveAs}{\Cell} \Links_\Cell\leq \Links_\Cell^2 \leq M
    \nVCell{\curveB}{\Cube}$. Hence, we charge $M$ units to each
    vertex of $\curveB$ inside the cube $\Cube$.
    
    Otherwise, $\nVCell{\curveAs}{\Cell} > \Links_\Cell > M$. But
    then, the length of $\curveAs$ inside $\Cube$ is at least
    $\nVCell{\curveAs}{\Cell} \sRadius$, and by
    \lemref{number:of:vertices}, we have that $\curveA$ must have at
    least $\Omega\pth{ ( \nVCell{\curveAs}{\Cell} \eps)^{d/(d-1)} }$
    vertices inside $\Cube$.  Arguing as above, this implies that
    $\nVCell{\curveAs}{\Cell}^2 \leq M \nVCell{\curveA}{\Cube}$. As
    such, we have that $U_\Cell = \nVCell{\curveAs}{\Cell}
    \Links_\Cell\leq \nVCell{\curveAs}{\Cell}^2 \leq M
    \nVCell{\curveA}{\Cube}$.  Again, we charge $M$ units to each
    vertex of $\curveA$ inside the cube $\Cube$.
    
    Since $\Cube$ intersects a constant number of cells of the grid,
    no vertex would get charged more than a constant number of times
    by the above scheme.  Thus, every vertex, of either curve, gets
    charged $O(M)$ units overall, and the total number of vertex-edge
    pairs present in $\FDleq{\delta}$ is $O( n M)$, as claimed.
\end{proof}

\begin{observation}
    One can extend the construction of \obsref{first} to show that
    \lemref{low:complexity:d} is close to being tight. Indeed,
    consider the grid curve of \obsref{first} in $d-1$ dimensions, for
    an integer $m$. We now lift it to $d$ dimensions by considering
    the $[1,m]^d$ cube and placing two copies of the above curve on
    two opposite faces of the cube, denoted by $f$ and $f'$. Let
    $\curveA_1$ and $\curveA_2$ denote these two copies.
    
    Next, delete the even edges from $\curveA_1$ and the odd edges
    from $\curveA_2$. Connect every vertex $v_1$ of $\curveA_1$ to its
    corresponding (copied) vertex $v_2$ in $\curveA_2$ by a path made
    out of the $m-1$ unit edges along the grid line connecting the two
    vertices. This results in a curve $\curveA$ that is similar to the
    curve constructed in \obsref{first}, but has the advantage that
    when simplified for the distance $\sRadius = m$ it results in a
    curve with $m^{d-1}$ segments of length $\geq m$ that connects
    points that lie on $f$ and on $f'$, respectively.
    
    Let $\curveB$ be a copy of $\curveA$. For a fixed $\eps >0 $, we
    can add a single segment to $\curveA$ such that the \Frechet
    distance between the resulting curves is exactly $\delta =
    m/\eps$.
    Now, these two curves have $n=2m^d+2$ vertices overall, and
    furthermore, when we simplify them for the distance $\sRadius =
    \eps \delta = m$, we end up with two curves such that every long
    edge of $\curveAs$ is going to be in distance $\leq \delta =
    m/\eps$ from a constant fraction of the edges of $\curveBs$ (this
    would be all the edges if $1/\eps > \sqrt{d}$).  Therefore the
    complexity of the \relevant free space is
    $\Omega\pth{\nVertices{\curveAs} \nVertices{\curveBs}} =
    \Omega\pth{\pth{m^{d-1}}^2} = \Omega\pth{ n^{2(d-1)/d}}$, where
    $\nVertices{\curveAs}$ denotes the number of vertices of
    $\curveAs$.  The upper bound of \lemref{low:complexity:d} is
    (only) larger by a factor of $O(1/\eps^2)$.
\end{observation}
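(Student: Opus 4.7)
The plan is to carry out and verify the construction outlined in the statement, showing that there exist two $O(1)$-low-density curves $\curveA$ and $\curveB$ with $n = \Theta(m^d)$ vertices whose \relevant free-space complexity at the appropriate parameter $\delta$ is $\Omega(m^{2(d-1)}) = \Omega(n^{2(d-1)/d})$. First I would reconstruct the $(d-1)$-dimensional grid-path $\gamma$ of \obsref{first}, which visits every integer point of $[1,m]^{d-1}$ using unit axis-parallel edges, is $O(1)$-low-density, and has $m^{d-1}$ vertices. I would then lift it to $\Re^d$ by placing one copy on the face $f = \{1\}\times [1,m]^{d-1}$ and another copy on the opposite face $f' = \{m\}\times [1,m]^{d-1}$ of the cube $[1,m]^d$.

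Next I would implement the stitching surgery. I delete the even-indexed edges of the copy on $f$ and the odd-indexed edges of the copy on $f'$, leaving $\Theta(m^{d-1})$ disjoint stubs on each face; for each kept vertex $v_1$ on $f$ I attach the axis-parallel grid path through the cube interior to its twin $v_2$ on $f'$, consisting of $m-1$ unit edges. By matching the parities so that the stubs and bridges concatenate consistently, I obtain a single polygonal curve $\curveA$ with $n = 2m^d + O(1)$ vertices that uses only unit grid edges; the standard packing argument from \obsref{first} then yields that $\curveA$ is $O(1)$-low-density.

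Then I would analyze $\curveAs = \simpX{\curveA, \sRadius}$ for $\sRadius = m$. The greedy rule of \algref{simplification:algorithm} keeps a new vertex only after travelling distance at least $m$ from the previous kept vertex, so each cross-cube bridge (straight, of total length $m-1$) is absorbed into a single long segment of $\curveAs$ of length $\geq m$, essentially aligned with the lifting axis and running from near $f$ to near $f'$. Thus $\curveAs$ contains $N = \Theta(m^{d-1})$ long parallel segments traversing the cube, together with lower-order clutter on the faces. Now set $\curveB$ to be a copy of $\curveA$ with a single appended segment whose length is tuned so that $\distFr{\curveA}{\curveB} = \delta = m/\eps$, as indicated in the statement; this perturbation does not alter the simplification structure. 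For $\eps$ small enough that $\sqrt{d}\,m \leq \delta$, the entire cube has diameter at most $\delta$, so every pair consisting of one long segment of $\curveAs$ and one long segment of $\curveBs$ is contained in a common ball of radius $\delta$, and therefore contributes a nonempty free-space cell.

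Finally I would verify reachability, which I expect to be the main obstacle: the lemma counts only cells of the \relevant free space, so I must argue that all $\Theta(N^2) = \Theta(m^{2(d-1)})$ of these cells are reached by a monotone path from $(0,0)$. The natural bijection between the long segments of $\curveAs$ and $\curveBs$, inherited from the consistent traversal order of $\gamma$, combined with the fact that the face-stubs lie entirely inside balls of radius $\delta$ around corresponding vertices of the other curve, should allow a straightforward induction along the diagonal of the free-space diagram to show that the reachable region covers every such pair; the delicate point is positioning the perturbation that forces $\distFr{\curveA}{\curveB} = m/\eps$ so as not to sever the monotone path near the endpoints. Once reachability is established, $n = \Theta(m^d)$ gives $\SimpComplexity{\eps}{\curveA}{\curveB} = \Omega(n^{2(d-1)/d})$, which matches the upper bound of \lemref{low:complexity:d} up to the factor $O(1/\eps^2)$.
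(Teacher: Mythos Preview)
Your plan follows the paper's construction exactly; indeed, the Observation is self-contained and has no separate proof in the paper, so there is nothing further to compare against. The one place where you hedge—reachability of the $\Theta(N^2)$ cells—is not the obstacle you fear. In the regime $1/\eps > \sqrt{d}$ the entire cube $[1,m]^d$ has diameter $\sqrt{d}\,m \leq \delta$, so \emph{every} point of $\curveAs$ (apart from the single appended segment) is within $\delta$ of \emph{every} point of $\curveBs$; the free space restricted to the grid portion of the diagram is therefore the full rectangle, and every cell in it is trivially reachable by a monotone path from $(0,0)$. The appended segment of length $\delta$ sits at one end and affects only a single row or column, so it cannot disconnect this full block. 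You can drop the planned induction along the diagonal entirely.
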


By plugging the above into \thmref{main}, we get the following result.

\begin{theorem}
    Given two low-density curves $\curveA$ and $\curveB$ with a total
    of $n$ vertices in $\Re^d$, and a parameter $\eps > 0$, there
    exists an algorithm which $(1+\eps)$-approximates the \Frechet
    distance between $\curveA$ and $\curveB$ in $\ds O \pth{
       \frac{n^{2(d-1)/d}}{\eps^2} + n^{2(d-1)/d} \log n }$ time.
    
    \thmlab{main:low:density}
\end{theorem}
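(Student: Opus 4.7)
The plan is to combine \lemref{low:complexity:d} with the general approximation result of \thmref{main}. The former bounds the \resemblance of low-density curves, while the latter turns any such bound into a running time guarantee for the approximation algorithm \approxFr.

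First, I would verify that the bound $\SimpComplexity{\eps}{\curveA}{\curveB} = O\pth{n^{2(d-1)/d}/\eps^2}$ from \lemref{low:complexity:d} satisfies the two assumptions (P1) and (P2) from \defref{simplification:complexity}, which are needed in order to invoke \thmref{main}. For (P1), replacing $\eps$ by $\eps/c'$ (for any constant $c' \geq 1$) only multiplies the bound by $c'^2 = O(1)$, so the hypothesis is met. For (P2), replacing $\eps$ by $\eps/2$ multiplies the bound by $4$, so in particular $\SimpComplexity{\eps/2}{\curveA}{\curveB}/2 \geq \SimpComplexity{\eps}{\curveA}{\curveB}$ up to the implicit constant (this constant can be absorbed into the definition, since the relevant use of (P2) in \lemref{improved:running:time} is only to show that a geometric sum is dominated by its largest term).

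Next, I would plug the bound directly into \thmref{main}. Using $\SimpComplexity{\eps}{\curveA}{\curveB} = O\pth{n^{2(d-1)/d}/\eps^2}$ and $\SimpComplexity{1}{\curveA}{\curveB} = O\pth{n^{2(d-1)/d}}$, the stated running time of \thmref{main} becomes
\begin{align*}
   O\pth{ \SimpComplexity{\eps}{\curveA}{\curveB}
     + \SimpComplexity{1}{\curveA}{\curveB} \log n}
   = O\pth{\frac{n^{2(d-1)/d}}{\eps^2} + n^{2(d-1)/d} \log n},
\end{align*}
which is exactly the claimed bound.

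There is essentially no new obstacle in this proof beyond what is already handled in \lemref{low:complexity:d} (the packing argument relating free-space cell count to vertex counts via the low-density assumption) and \thmref{main} (the approximation algorithm \approxFr and its running time analysis); the present theorem is a straightforward composition. The only point requiring mild care is the verification of assumptions (P1) and (P2), but, as noted above, both are immediate consequences of the $1/\eps^2$ dependence in \lemref{low:complexity:d}. Correctness of the approximation factor follows from \thmref{main} without modification.
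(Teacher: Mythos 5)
Your proposal is correct and matches the paper's approach: the paper proves \thmref{main:low:density} with the one-line remark that it follows ``by plugging the above into \thmref{main}'', exactly the composition of \lemref{low:complexity:d} with \thmref{main} that you describe. Your extra check of assumptions (P1) and (P2) is slightly more explicit than the paper, which simply asserts that these hold for all the curve families considered; otherwise there is no difference.
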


\subsection{\Resemblance of \TPDF{$\kappa$}{k}-Bounded %
   Curves}
\seclab{kappa:bounded:curves}


We revisit the definitions of Alt \etal \cite{akw-cdmpc-04} of
$\kappa$-bounded and $\kappa$-straight curves. Note that these
definitions describe an extremely restricted class of curves while
$c$-packed curves form a fairly general and natural class of
curves. However, it is not true that any $\kappa$-bounded curve is
$O(\kappa)$-packed. We therefore give a separate proof to bound the
\resemblance of $\kappa$-bounded curves in order to improve upon the
result in \cite{akw-cdmpc-04}.

\vspace{.5cm}
\parpic[r]{\includegraphics{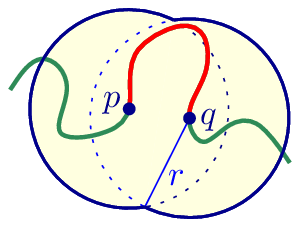}}
\vspace{-0.5cm}
\begin{defn}
    Let $\kappa\geq 1$ be a given parameter.  A curve $\curveA$ is
    \emphi{$\kappa$-straight} if for any two points $\pntA$ and
    $\pntB$ on the curve, it holds that $\lenX{\curveA[\pntA,\pntB]}
    \leq \kappa \distX{\pntA}{\pntB}$.
    
    A curve $\curveA$ is a \emphi{$\kappa$-bounded} if for all $\pntA,
    \pntB \in \curveA$ it holds that the curve $\curveA[\pntA, \pntB]$
    is contained inside $\BallX{\pntA, r } \cup \BallX{\pntB, r }$,
    where $r = (\nfrac{\kappa}{2}) \distX{\pntA}{\pntB}$, see the
    figure on the right.
\end{defn}

\begin{figure}[bt]
    \centerline{\begin{tabular}{cccccccc}
           \IncGraphPageExt{figs}{snowflake}{1}{width=2cm} &\RX{}
           &\IncGraphPageExt{figs}{snowflake}{2}{width=2cm} &\RX{}
           &\IncGraphPageExt{figs}{snowflake}{3}{width=2cm} &\RX{}
           &\IncGraphPageExt{figs}{snowflake}{5}{width=2cm}
           &\RX{\ldots}
       \end{tabular}}
    
    \vspace{-0.3cm}
    
    \caption{Koch's snowflake is an example of a $\kappa$-bounded
       curve that has infinite length but a finite diameter.}
    \figlab{snowflake}
\end{figure}

\begin{lemma}
    A $\kappa$-straight curve is $2\kappa$-packed.
    \lemlab{kappa:straight}
\end{lemma}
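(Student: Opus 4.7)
My plan is to exploit the fact that all portions of the curve inside any ball lie on a single sub-arc from the first entry point to the last exit point, and then apply $\kappa$-straightness to that sub-arc.

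Concretely, fix a ball $\BallX{\pntC, r}$. If $\curveA$ does not meet the ball there is nothing to prove, so assume the intersection $\curveA \cap \BallX{\pntC, r}$ is non-empty. Let $\pntA$ be the first point on $\curveA$ that lies in $\BallX{\pntC, r}$ (in the curve's parametrization), and let $\pntB$ be the last such point. Then every point of $\curveA \cap \BallX{\pntC, r}$ lies on the sub-arc $\curveA[\pntA, \pntB]$, so in particular
\begin{align*}
    \lenX{\curveA \cap \BallX{\pntC, r}} \;\leq\; \lenX{\curveA[\pntA, \pntB]}.
\end{align*}

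Now I invoke $\kappa$-straightness on $\curveA[\pntA, \pntB]$ to conclude $\lenX{\curveA[\pntA, \pntB]} \leq \kappa \distX{\pntA}{\pntB}$. Since both $\pntA$ and $\pntB$ lie in $\BallX{\pntC, r}$, the triangle inequality gives $\distX{\pntA}{\pntB} \leq 2r$, and chaining the three inequalities yields $\lenX{\curveA \cap \BallX{\pntC, r}} \leq 2\kappa r$, which is exactly the $2\kappa$-packedness condition.

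I do not expect any serious obstacle here. The only subtlety is a minor edge case: if the initial or final vertex of $\curveA$ itself lies inside the ball, then $\pntA$ may coincide with $\curveA(0)$ or $\pntB$ may coincide with $\curveA(1)$, but this does not affect the argument since the bound on $\distX{\pntA}{\pntB}$ only uses that both points are inside the ball. The sub-arc $\curveA[\pntA, \pntB]$ may of course exit and re-enter the ball in between, but that only makes the right-hand side larger, not smaller, so the containment step still works.
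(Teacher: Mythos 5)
Your proof is correct and is essentially identical to the paper's: both take the first and last points of the curve inside the ball, bound the intersection length by the length of the sub-arc between them, apply $\kappa$-straightness to that sub-arc, and finish with the diameter bound $\distX{\pntA}{\pntB} \leq 2r$.
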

\begin{proof}
    Let $\curveA$ be a $\kappa$-straight curve in $\Re^d$, and
    consider any ball $\BallX{\pntA,r}$ that intersects it. Let
    $\pntB$ and $\pntC$ be the first and last points, respectively,
    along $\curveA$ that are in $\BallX{\pntA,r}$. Clearly,
    $\distX{\pntB}{\pntC} \leq 2r$, and by the $\kappa$-straightness $
    \lenX{\curveA \cap \BallX{\pntA,r}} \leq
    \lenX{\curveA[\pntB,\pntC]} \leq \kappa \distX{\pntB}{\pntC} \leq
    2 \kappa r$.
\end{proof}

\begin{remark}
    It is easy to verify that a $\kappa$-straight curve is also
    $\kappa$-bounded. However, $\kappa$-bounded curves,
    counterintuitively, can have infinite length even when contained
    inside a finite domain. An example of this is \emphi{Koch's
       snowflake}, which is a fractal curve depicted in
    \figref{snowflake}.
    
    To see, intuitively, why Koch's snowflake is $\kappa$-bounded, let
    $\curveA_i$ be the $i$\th polygonal curve generated by this
    process. There is a natural mapping between any point of
    $\curveA_i$ and $\curveA_{i+1}$, for all $i$. In particular,
    consider two points $\pntA$ and $\pntB$ on the final curve
    $\curveA^*$, and consider the two sequences of points
    $\pntA_i,\pntB_i \in \curveA_i$, where $\pntA_{i+1} \in
    \curveA_{i+1}$ (resp. $\pntB_{i+1} \in \curveA_{i+1}$) is the
    natural image of $\pntA_{i}$ (resp. $\pntB_i$), %
    $\lim_{i \rightarrow \infty} \pntA_i =\pntA$, and $\lim_{i
       \rightarrow \infty} \pntB_i =\pntB$.
    
    Now, assume that $r = \distX{\pntA}{\pntB}$. Observe that, for all
    $i$, the polygonal curve $\curveA_i$ is made out of segments that
    are all of the same length. In particular, consider the first
    index $k$, such that this segment length of $\curveA_k$ is of
    length $\leq r/20$. It is easy to argue that
    $\distX{\pntA_{k}}{\pntA} \leq r/5$ and $\distX{\pntB_{k}}{\pntB}
    \leq r/5$. In fact, one can argue that no point of $\curveA_k$
    moves more than a distance larger than $r/5$ to its final location
    on $\curveA^*$.
    
    Now, a tedious argument shows that there are $O(1)$ segments of
    $\curveA_k$ separating $\pntA_k$ from $\pntB_{k}$. Therefore this
    portion of the curve $\curveA_k$ is covered by a disk of radius
    $O(r)$, and the corresponding portion of the final curve between
    $\pntA$ and $\pntB$ is also covered by a disk of radius
    $O(r)$. This implies that Koch's snowflake is $\kappa$-bounded.
    
    A formal proof of this fact is considerably more tedious and is
    omitted.
    
    \remlab{formal:is:tedious}
\end{remark}

\begin{lemma}
    Let $\curveA$ be a $\kappa$-bounded polygonal curve in $\Re^d$,
    and let $\sRadius \leq \delta$ be parameters.  Let $\curveAs =
    \simpX{\curveA, \sRadius}$.  Then the number of segments of
    $\curveAs$ intersecting $\BallX{\pntC,\delta}$ is bounded by
    $O\pth{ \kappa^d (1 + \delta/\sRadius)^d}$, for any $\pntC \in
    \Re^d$.
    
    \lemlab{kappa:l:e:q}
\end{lemma}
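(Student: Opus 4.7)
I would reduce the count to bounding the marked vertices of $\curveAs$ in a region controlled by $\kappa$-boundedness, and then pack.  First, let $\pntA$ and $\pntB$ be the first and last points of $\curveAs$ lying in $\BallC := \BallX{\pntC, \delta}$ in parametrization order.  Every segment of $\curveAs$ intersecting $\BallC$ lies in the ``block'' from $\pntA$ to $\pntB$ along $\curveAs$, so the number $N$ of intersecting segments is at most one more than the number of marked vertices along the sub-curve $\curveAs[\pntA, \pntB]$.

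Second, I would apply $\kappa$-boundedness of $\curveA$ to confine these vertices.  Using the explicit reparametrization from the proof of \lemref{simplification:distance}, there exist corresponding points $\pntA', \pntB' \in \curveA$ with $\distX{\pntA}{\pntA'}, \distX{\pntB}{\pntB'} \leq \sRadius$, so $\pntA', \pntB' \in \BallX{\pntC, \delta + \sRadius}$ and $\distX{\pntA'}{\pntB'} \leq 2(\delta + \sRadius) \leq 4\delta$.  By $\kappa$-boundedness, the sub-curve $\curveA[\pntA', \pntB']$ lies inside $\BallX{\pntA', \kappa(\delta+\sRadius)} \cup \BallX{\pntB', \kappa(\delta+\sRadius)} \subseteq \BallX{\pntC, O(\kappa\delta)}$.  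By monotonicity of the matching, every marked vertex of $\curveAs$ on $\curveAs[\pntA, \pntB]$ coincides with a vertex of $\curveA$ on $\curveA[\pntA', \pntB']$, hence lies in $\BallX{\pntC, O(\kappa\delta)}$.

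Third, I would establish a global separation.  Consecutive marked vertices are $\sRadius$-apart by the greedy simplification.  For two non-consecutive marked vertices $u_a, u_b$, the intermediate vertex $u_{a+1}$ is at distance $\geq \sRadius$ from $u_a$ yet by $\kappa$-boundedness must lie in $\BallX{u_a, \kappa D/2} \cup \BallX{u_b, \kappa D/2}$ with $D = \distX{u_a}{u_b}$; a short triangle-inequality calculation forces $D = \Omega(\sRadius/\kappa)$.  Hence all marked vertices are globally $\Omega(\sRadius/\kappa)$-separated, and volume packing in a ball of radius $O(\kappa\delta)$ bounds the number of intermediate vertices, giving the claimed bound on $N$.

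\textbf{Main obstacle.}  The naive combination of confinement radius $O(\kappa\delta)$ and separation $\Omega(\sRadius/\kappa)$ yields $O(\kappa^{2d}(1+\delta/\sRadius)^d)$, which is a factor of $\kappa^d$ weaker than the stated bound.  Closing this gap is the technical heart of the lemma and will likely require a more careful argument that exploits the ordered structure of marked vertices along $\curveA$, e.g.\ a scale-by-scale decomposition in which intermediate vertices near $\pntC$ are packed using the $\sRadius$-separation of consecutive pairs (yielding the $(\delta/\sRadius)^d$ factor), while vertices farther from $\pntC$ are controlled by limiting how many ``excursions'' a $\kappa$-bounded sub-curve can make back to $\BallC$ (yielding the $\kappa^d$ factor but only once).
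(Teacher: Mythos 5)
There is a genuine gap here, and you correctly diagnosed it in your ``Main obstacle'' paragraph but did not resolve it. Your plan bounds the number of relevant segments by counting \emph{all} marked vertices along $\curveAs[\pntA,\pntB]$. That over-counts: between two far-apart segments that both meet $\BallX{\pntC,\delta}$, the curve can make long excursions, and the only confinement available for those intermediate marked vertices is the $\kappa$-bounded region of radius $O(\kappa\delta)$ around $\pntC$. Combined with your $\Omega(\sRadius/\kappa)$ separation, this gives precisely the $\kappa^{2d}$ loss you noticed. The scale-by-scale patch you sketch does not close the gap either, because the $\sRadius$-separation you want to exploit holds only for \emph{consecutive} marked vertices; non-consecutive ones are in general only $\Omega(\sRadius/\kappa)$-apart, which remains the binding constraint in the packing.

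The paper's proof avoids the $O(\kappa\delta)$ confinement by packing a different set of points. Split the segments of $\curveAs$ into odd-indexed and even-indexed, and consider (say) the odd segments intersecting $\BallX{\pntC,\delta}$. For each such segment pick a point $\pntA_i$ of that segment lying \emph{inside} $\BallX{\pntC,\delta}$, and then a point $\pntB_i$ of the original curve $\curveA$ with $\distX{\pntA_i}{\pntB_i}\le\sRadius$ (it exists by \lemref{simplification:distance}). All the $\pntB_i$ lie in $\BallX{\pntC,\delta+\sRadius}$, a ball of radius $O(\delta)$ rather than $O(\kappa\delta)$. The separation between the $\pntB_i$ is then established essentially as in your third step: between any two chosen odd segments there is a complete even segment of $\curveAs$ of length $\ge\sRadius$ whose endpoints lie on $\curveA$, and applying $\kappa$-boundedness to the sub-curve of $\curveA$ between $\pntB_i$ and $\pntB_j$ (which must contain those two endpoints) forces $\distX{\pntB_i}{\pntB_j}=\Omega(\sRadius/\kappa)$. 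Packing $\Omega(\sRadius/\kappa)$-separated points inside a ball of radius $\delta+\sRadius\le 2\delta$ gives $O\pth{\kappa^d(1+\delta/\sRadius)^d}$, and the even segments are handled symmetrically. The idea you are missing, then, is to count one representative per intersecting segment chosen inside the small ball, and to use $\kappa$-boundedness only to lower-bound the spacing between representatives, never to upper-bound the region they occupy.
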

\begin{proof}
    For $\curveA = \segA_1 \segA_2 \ldots \segA_k$, let $ Y_O = \brc{
       \segA_1, \segA_3, \ldots }$ and $Y_E = \brc{ \segA_2, \segA_4,
       \ldots }$ be the sets of odd and even segments of $\curveAs$,
    respectively.
    
    Let $X_O \subseteq Y_O$ be the set of odd segments of $\curveAs$
    intersecting $\BallX{\pntC,\delta}$. For all $i$, pick an
    arbitrary point $\pntA_i$ on the $i$\th segment of $X_O$ that lies
    inside $\BallX{\pntA, \delta}$.  Next, pick an original point
    $\pntB_i$ of $\curveA$ in distance at most $\sRadius$ from
    $\pntA_i$, for $i=1, \ldots, M = \cardin{X_O}$.  Observe, that for
    all $i$ we have that $\distX{\pntC}{\pntB_i} \leq \delta +
    \sRadius$.  Furthermore, between any two distinct points $\pntA_i$
    and $\pntA_j$ on the simplified curve $\curveAs$ there must lie an
    even segment of $Y_E$ in between them along the curve, and the
    length of this segment is at least $\sRadius$ (because the
    simplification algorithm generates segments of length at least
    $\sRadius$). Also, the endpoints of this even segment lie on the
    original curve $\curveA$.
    
    We claim that no two points of $\PntSetA = \brc{ \pntB_1, \ldots,
       \pntB_M}$ can be too close to each other; that is, there are no
    two points $\pntB', \pntB'' \in \PntSetA$, such that $r =
    \distX{\pntB'}{\pntB''} \leq \sRadius/ (4\kappa)$.  Indeed, assume
    for the sake of contradiction, that there are two such
    points. Then, by the above, the portion of $\curveA$ connecting
    them contains two points $\pntD', \pntD''$ that are at least
    $\sRadius$ apart.  Observe that $\curveA[\pntD', \pntD'']
    \subseteq X = \BallX{\pntB',(\kappa/2)
       r}\cup\BallX{\pntB'',(\kappa/2) r}$.  However, the maximum
    distance between two points that are included inside $X$ is
    bounded by its diameter. We have that
    \begin{align*}
        \sRadius \leq \distX{\pntD}{\pntD'} \leq \diameterX{X} =
        2(\kappa/2)r + \distX{\pntB'}{\pntB''}\leq \frac{\sRadius}{4}
        + \frac{\sRadius}{4\kappa}%
        \leq%
        \frac{\sRadius}{2},
    \end{align*}
    since $\kappa>1$. A contradiction.
    
    However, all the points of $\PntSetA$ lie inside a ball of radius
    $\delta+\sRadius$ centered at $\pntC$. Now, placing a ball of
    radius $\sRadius' = \sRadius/ (8\kappa)$ around each point of
    $\PntSetA$, results in a set of interior disjoint balls.  This
    implies, by a standard packing argument, that the number of points
    of $\PntSetA$ is bounded by $
    \VolumeX{\BallX{\pntC,\delta+\sRadius}} / \VolumeX{\BallX{\cdot,
          \sRadius'}} = O\pth{(\delta+\sRadius)^d/
       (\sRadius/\kappa)^d} = \pth{ (1+\delta/\sRadius)^d \kappa^d }$.
    
    This bounds the number of odd segments of $\curveAs$ intersecting
    the ball $\BallX{\pntC,\delta}$, and a similar argument holds for
    the even segments intersecting this ball.
\end{proof}

\begin{lemma}
    For any two $\kappa$-bounded polygonal curves in $\Re^d$ $\curveA$
    and $\curveB$, $0 < \eps < 1$, we have
    $\SimpComplexity{\eps}{\curveA}{\curveB} = O\pth{ (\kappa/\eps)^d
       n}$.

    \lemlab{complexity:l:e:q:kappa}
\end{lemma}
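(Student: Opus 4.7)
The plan is to adapt the proof of \lemref{complexity:l:e:q} from the $c$-packed to the $\kappa$-bounded setting. Fix an arbitrary $\delta > 0$, let $\sRadius = \eps\delta$, $\curveAs = \simpX{\curveA,\sRadius}$, and $\curveBs = \simpX{\curveB,\sRadius}$. As in that proof, I would charge each non-empty cell of $\FullFDleq{\delta}(\curveAs,\curveBs)$ to the shorter of its two segments, and then try to bound the charges per segment by $O\pth{(\kappa/\eps)^d}$; summing over all segments would give the claimed $O\pth{(\kappa/\eps)^d n}$ after taking the supremum over $\delta$.

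The key new ingredient, and where the main technical work lies, is a strengthening of \lemref{kappa:l:e:q}, which I would state as a sub-claim: for every $L \geq \sRadius$ and every ball $\BallX{\pntC, R}$, the number of segments of $\curveBs$ of length $\geq L$ that contain a sub-segment of length $\geq L$ inside $\BallX{\pntC, R}$ is $O\pth{\kappa^d \pth{1 + R/L}^d}$. I would prove this by the same scheme as \lemref{kappa:l:e:q}, but promote $L$ into the role played there by $\sRadius$. For each such $\segB_i$ pick $\pntA_i$ to be the midpoint of its length-$L$ sub-segment inside the ball, so that $\distX{\pntA_i}{v} \geq L/2$ for both endpoints $v$ of $\segB_i$; use \lemref{hippo} to find a nearby point $\pntB_i \in \curveB$ within $\sRadius$ of $\pntA_i$. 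For any pair with $\segB_i$ preceding $\segB_j$ along $\curveBs$, the far endpoint $v_i$ of $\segB_i$ is a vertex of $\curveB$ lying on the portion of $\curveB$ between $\pntB_i$ and $\pntB_j$, and $\distX{\pntB_i}{v_i} \geq L/2 - \sRadius$. Applying $\kappa$-boundedness to $\pntB_i$ and $\pntB_j$ with $v_i$ as intermediate point forces $\distX{\pntB_i}{\pntB_j} = \Omega(L/\kappa)$, and a standard volume packing inside $\BallX{\pntC, R+\sRadius}$ yields the claimed bound. The degenerate regime $L < 6\sRadius$ can simply be absorbed by invoking \lemref{kappa:l:e:q} directly and noting that $R/\sRadius = O(R/L)$ there.

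Equipped with the sub-claim, the rest of the proof is essentially the argument of \lemref{complexity:l:e:q}. For a non-empty cell $(\segA,\segB)$ with $\lenX{\segA} \leq \lenX{\segB}$, write $L = \lenX{\segA}$ and consider the ball $\BallC$ of radius $R_C = (3/2) L + \delta$ centered at the midpoint of $\segA$. Exactly as in that proof, any charging $\segB$ has a point $q \in \segB$ within distance $L/2 + \delta$ of the center of $\BallC$, so $q$ lies at distance $\geq L$ from $\partial \BallC$, and since $\lenX{\segB} \geq L$ the intersection $\segB \cap \BallC$ contains a sub-segment of length $\geq L$. The sub-claim applied to $\BallC$ then bounds the number of such $\segB$'s per $\segA$ by $O\pth{\kappa^d(1 + R_C/L)^d} = O\pth{(\kappa/\eps)^d}$, using $\delta/L \leq 1/\eps$. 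Summing over $\segA \in \curveAs$, and symmetrically the cells charged to segments of $\curveBs$, gives $\SimpComplexity{\eps}{\curveA}{\curveB} = O\pth{(\kappa/\eps)^d n}$.

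The main obstacle is the sub-claim itself. The original proof of \lemref{kappa:l:e:q} only produces pairwise separation $\Omega(\sRadius/\kappa)$ because the intermediate simplified edge used there is only guaranteed to have length $\sRadius$; this is too coarse when the segments under consideration may be much longer than $\sRadius$. The decisive step is choosing $\pntA_i$ in the interior of a length-$L$ sub-segment so that the nearest endpoint of $\segB_i$ itself plays the role of the intermediate edge at the larger scale $L$, while the hippodrome projection provided by \lemref{hippo} is what allows the $\kappa$-bounded condition to be applied to actual points of $\curveB$ rather than points of $\curveBs$. Getting both of these ingredients to cooperate across all pairs $i, j$ is the delicate part of the argument.
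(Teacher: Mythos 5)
Your proof is correct, but it takes a genuinely different route from the paper. The paper's proof of this lemma is just a few lines: it observes that the \emph{boundary} of every \relevant cell must intersect $\FullFDleq{\delta}(\curveAs,\curveBs)$ (otherwise no monotone path from $(0,0)$ could enter the cell), and a cell boundary corresponds to a vertex-edge pair. So it suffices to count, for each vertex $\pntA$ of $\curveAs$, the edges of $\curveBs$ within distance $\delta$ of $\pntA$ --- that is, the segments of $\curveBs$ intersecting $\BallX{\pntA,\delta}$ --- and symmetrically. Since that ball has radius exactly $\delta$, \lemref{kappa:l:e:q} applies verbatim and gives $O(\kappa^d(1+\delta/\sRadius)^d)=O\pth{(\kappa/\eps)^d}$ per vertex, hence $O\pth{(\kappa/\eps)^d n}$ in total. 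No strengthening of \lemref{kappa:l:e:q} is needed, because passing from segment-segment cells to vertex-edge cell boundaries removes the segment-length dependence from the radius of the ball you need to pack.

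You instead imported the cell-charging template of \lemref{complexity:l:e:q} literally (charging each nonempty cell to its shorter segment and packing in a ball of radius $(3/2)L+\delta$), correctly diagnosed that \lemref{kappa:l:e:q} alone is then too weak because it only guarantees $\Omega(\sRadius/\kappa)$ separation, and proved a scale-aware version: segments of $\curveBs$ of length $\geq L$ with a length-$\geq L$ sub-segment inside $\BallX{\pntC,R}$ number $O\pth{\kappa^d(1+R/L)^d}$. The sub-claim and its use are sound (your invocation of ``\lemref{hippo}'' should more precisely be the hippodrome containment established inside its proof, which yields a point of $\curveB$ within $\sRadius$ of any point of $\curveBs$, and the degenerate regime $L=O(\sRadius)$ is correctly deferred to \lemref{kappa:l:e:q}). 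The trade-off: your route is more work and re-proves a packing lemma, but it yields a strictly stronger counting statement and shows exactly where the naive adaptation breaks; the paper's route is shorter because the boundary/reachability observation sidesteps the need for any new lemma. Both are valid proofs of the claim.
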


\begin{proof}
    Let $\delta \geq 0$ be an arbitrary radius, and set $\sRadius =
    \eps \delta$.  Let $\curveAs = \simpX{\curveA,\sRadius}$ and
    $\curveBs = \simpX{\curveB,\sRadius}$. We need to show that the
    complexity of the \relevant free space
    $\FDleq{\delta}(\curveAs,\curveBs)$ is $O\pth{\kappa^d(
       1+\delta/\sRadius)^d n} = O\pth{ (\kappa/\eps)^d n}$ .
    
    The boundary of a \relevant cell in the free space diagram has a
    non-empty intersection with $\FullFDleq{\delta}(\curveAs,
    \curveBs)$. Otherwise its interior could not be reached by a
    monotone path from $(0,0)$. Therefore, using an argument similar
    to the proof of \lemref{complexity:l:e:q}, \lemref{kappa:l:e:q}
    implies the desired bound.
\end{proof}

By plugging the above into \thmref{main}, we get the following result.

\begin{theorem}
    Given two $\kappa$-bounded polygonal curves $\curveA$ and
    $\curveB$ with a total of $n$ vertices in $\Re^d$, and a parameter
    $1 > \eps > 0$, there exists an algorithm which
    $(1+\eps)$-approximates the \Frechet distance between $\curveA$
    and $\curveB$ in $\ds O\pth{ (\kappa /\eps)^d n + \kappa^d n \log
       n}$ time.
    
    \thmlab{main:kappa:bounded}
\end{theorem}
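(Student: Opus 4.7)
The plan is to obtain this bound by directly combining the general approximation result \thmref{main} with the upper bound on the \resemblance of $\kappa$-bounded curves established in \lemref{complexity:l:e:q:kappa}. Since \thmref{main} expresses the running time of \approxFr{} in the abstract form $O(\SimpComplexity{\eps}{\curveA}{\curveB} + \SimpComplexity{1}{\curveA}{\curveB} \log n)$, all that remains is to substitute explicit bounds for the two terms $\SimpComplexity{\eps}{\curveA}{\curveB}$ and $\SimpComplexity{1}{\curveA}{\curveB}$.

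First, I would invoke \lemref{complexity:l:e:q:kappa} with the given $\eps$ to obtain $\SimpComplexity{\eps}{\curveA}{\curveB} = O((\kappa/\eps)^d n)$, which accounts for the $(\kappa/\eps)^d n$ term in the claimed running time. Then I would invoke the same lemma with $\eps = 1$ to obtain $\SimpComplexity{1}{\curveA}{\curveB} = O(\kappa^d n)$; multiplied by the $\log n$ factor from \thmref{main}, this yields the $\kappa^d n \log n$ term. Adding the two contributions gives exactly the stated bound $O((\kappa/\eps)^d n + \kappa^d n \log n)$.

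Before concluding, I would briefly verify that the two regularity properties (P\ref{resemblance:a}) and (P\ref{resemblance:b}) of \defref{simplification:complexity}, which are assumed by \thmref{main} (and in particular used in \lemref{improved:running:time}), are satisfied for the family of $\kappa$-bounded curves under the bound from \lemref{complexity:l:e:q:kappa}. Property (P\ref{resemblance:a}) follows because $(\kappa/(\eps/c'))^d n = (c')^d (\kappa/\eps)^d n = O((\kappa/\eps)^d n)$ for any constant $c' \geq 1$, and property (P\ref{resemblance:b}) is an immediate consequence of the $\eps^{-d}$ dependence (indeed $\SimpComplexity{\eps/2}{\cdot}{\cdot}/2 \geq 2^{d-1}\SimpComplexity{\eps}{\cdot}{\cdot} \geq \SimpComplexity{\eps}{\cdot}{\cdot}$). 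With these verifications in place, \thmref{main} applies and the bound follows.

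There is essentially no technical obstacle here, since all the nontrivial work has already been done: \lemref{kappa:l:e:q} bounds the number of simplified segments meeting a ball via a packing argument based on $\kappa$-boundedness, and \lemref{complexity:l:e:q:kappa} turns this into a bound on the relevant free space complexity. The one thing that requires a touch of care is making sure the assumption $\SimpComplexity{\eps}{\curveA}{\curveB} = \Omega(n)$ from \remref{linear:f:s:compl} is consistent with the $\kappa$-bounded bound; this is immediate since $(\kappa/\eps)^d n \geq n$.
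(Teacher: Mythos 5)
Your overall strategy is identical to the paper's, which proves this theorem with the single line ``By plugging the above into \thmref{main}'': you take $\SimpComplexity{\eps}{\curveA}{\curveB} = O((\kappa/\eps)^d n)$ from \lemref{complexity:l:e:q:kappa}, substitute $\eps$ and $\eps=1$ into the two terms of \thmref{main}, and read off the bound. Your observation that $\SimpComplexity{\eps}{\curveA}{\curveB} = \Omega(n)$ is consistent with \remref{linear:f:s:compl} is a nice extra check, and the substitution $\eps = 1$ is legitimate (the proof of \lemref{kappa:l:e:q} only requires $\sRadius \leq \delta$, which still holds at equality).

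However, the verification of property (P\ref{resemblance:b}) is not correct as written. You claim $\SimpComplexity{\eps/2}{\curveA}{\curveB}/2 \geq 2^{d-1}\SimpComplexity{\eps}{\curveA}{\curveB}$, which is a \emph{lower} bound on the growth of $\SimpComplexity{\cdot}{\curveA}{\curveB}$ as $\eps$ decreases. But \lemref{complexity:l:e:q:kappa} only provides an \emph{upper} bound $O((\kappa/\eps)^d n)$; it says nothing about how fast the true \resemblance actually grows, and in principle it could be (nearly) constant in $\eps$, in which case (P\ref{resemblance:b}) as a statement about $\SimpComplexity{\cdot}{\curveA}{\curveB}$ would fail. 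The paper sidesteps this by treating (P\ref{resemblance:a}) and (P\ref{resemblance:b}) as standing assumptions in \defref{simplification:complexity} that are simply asserted to ``hold for all the families of curves we consider.'' The cleaner way to make your argument rigorous is to note that (P\ref{resemblance:b}) is used only in \lemref{improved:running:time} to sum $\sum_i \SimpComplexity{\eps_i}{\curveA}{\curveB}$ into a geometric series, and for that purpose it suffices to replace each term by the explicit upper bound $f(\eps_i) = C(\kappa/\eps_i)^d n$, which \emph{does} satisfy the doubling inequality $f(\eps) \leq f(\eps/2)/2$. In other words, the properties should be verified for the bounding function, not claimed for $\SimpComplexity{\cdot}{\curveA}{\curveB}$ itself. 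With that adjustment, the derivation goes through and matches the stated running time.
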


\section{Extension to Closed \TPDF{$c$}{c}-packed Curves}
\seclab{closed:curves}

The \Frechet distance for closed curves is defined as in
\secref{prelims} with the altered condition that the
reparameterizations $f$ and $g$ are orientation-preserving
homeomorphisms on the one-dimensional sphere.  Computing the \Frechet
distance for closed curves is more difficult, as the constraint that
the endpoints of the curves have to be matched to each other is
dropped in this case and therefore the set of reparameterizations one
has to consider is larger.

\begin{observation}
    The decision problem for closed curves can be reduced to the
    previously considered case of open curves. Given two closed
    $c$-packed curves $\curveA$ and $\curveB$ and a parameter
    $\delta$.  Pick a vertex $\pntA$ of the curve $\curveA$, and
    assume that we know a point $\pntB$ on $\curveB$ that is being
    matched to $\pntA$ by a pair of reparameterizations of $\curveA$
    and $\curveB$ of width at most $\delta$. Clearly, if we break
    $\curveA$ open at $\pntA$, and $\curveB$ at $\pntB$, we retrieve
    two open curves $\subcurveA$ and $\subcurveB$, and we can use the
    previous method to decide if $\distFr{\subcurveA}{\subcurveB} \leq
    \delta$.  Hence we only need to generate a suitable set of
    candidates for $\pntB$ to determine if the \Frechet distance
    between $\curveA$ and $\curveB$ is at most $\delta$ within a
    certain approximation error.
\end{observation}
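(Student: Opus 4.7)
The plan is to verify two straightforward directions of an equivalence and then indicate where the real work lies. First I would show: if $\distFr{\curveA}{\curveB} \leq \delta$ as closed curves, then for any vertex $\pntA \in \VertexSet{\curveA}$ there exists a matching point $\pntB \in \curveB$ such that the open curves $\subcurveA$ (obtained by traversing $\curveA$ from $\pntA$ back to $\pntA$) and $\subcurveB$ (obtained analogously from $\pntB$) satisfy $\distFr{\subcurveA}{\subcurveB} \leq \delta$. This follows by taking optimal orientation-preserving homeomorphisms $f,g$ of $S^1$, setting $\pntB = \curveB(g(f^{-1}(s_0)))$ where $\curveA(s_0)=\pntA$, and then cutting $S^1$ at $s_0$ and $g(f^{-1}(s_0))$; the shifted restrictions of $f$ and $g$ to $[0,1]$ are orientation-preserving reparameterizations of $\subcurveA$ and $\subcurveB$ of width at most $\delta$.

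Conversely, I would argue: if some $\pntB \in \curveB$ makes the broken-open curves $\subcurveA,\subcurveB$ satisfy $\distFr{\subcurveA}{\subcurveB} \leq \delta$, then $\distFr{\curveA}{\curveB} \leq \delta$ in the closed sense. Indeed, the open-curve reparameterizations map $0 \mapsto 0$ and $1 \mapsto 1$, so identifying $0 \sim 1$ in the domains of $\curveA$ and $\curveB$ (shifted so that $\pntA$ and $\pntB$ correspond to this identified basepoint) descends them to orientation-preserving homeomorphisms of $S^1$ of the same width. Together these two directions reduce the closed-curve decision problem to the open-curve decision problem, conditional on knowing $\pntB$.

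Since $\pntB$ varies continuously over $\curveB$, the next step is to generate a finite candidate set $\PntSetA \subseteq \curveB$ and to run the open-curve decider on $\subcurveA$ together with $\subcurveB$ for each candidate. For a $(1+\eps)$-approximation, I would place candidates along $\curveB$ with spacing $\Theta(\eps \delta)$, restricted to the portion of $\curveB$ within Euclidean distance $O(\delta)$ of $\pntA$ (outside this region, no matching point with width $\leq \delta$ can possibly lie). Using that $\curveB$ is $c$-packed, the total length of this relevant portion is $O(c \delta)$, so the candidate set has size $O(c/\eps)$. A perturbation argument in the spirit of \lemref{simplification:distance} and the triangle inequality then shows that snapping the true matching point to the nearest candidate perturbs the induced width of the corresponding pair of open-curve reparameterizations by at most $O(\eps \delta)$, yielding the desired approximation guarantee.

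The main obstacle I expect in turning this observation into a formal algorithm is that each candidate triggers a fresh invocation of the open-curve decider, so one must simultaneously ensure that (i) the candidate set is small enough that iterating over it does not destroy the near-linear running time budget, and (ii) at least one candidate gives the broken-open curves a Fréchet distance within the tolerated additive slack of the true optimum. Both requirements lean on the $c$-packedness of $\curveB$: the first via the length bound above, and the second via the fact that the broken curves retain the closedness-under-simplification properties established in \lemref{6:c:packed}, so the decider of \lemref{decider:2} continues to run in $O(\SimpComplexity{\eps}{\subcurveA}{\subcurveB})$ time on each candidate instance.
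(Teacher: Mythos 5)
Your argument for the core reduction is correct and matches the paper's intent: the observation in the paper is stated without a formal proof, and your two-directional check (cut $S^1$ at the matched pair to get open reparameterizations of the same width, and conversely glue endpoint-to-endpoint reparameterizations back into a homeomorphism of $S^1$) is exactly the justification the paper leaves implicit in the word ``Clearly.'' Your third paragraph then anticipates the paper's \lemref{closed:k:set}, with one caveat worth flagging: you place candidates with spacing $\Theta(\eps\delta)$ on the \emph{original} curve $\curveB$ restricted to an $O(\delta)$-neighborhood of $\pntA$, and use $c$-packedness to bound the restricted arc length by $O(c\delta)$ and hence the candidate count by $O(c/\eps)$. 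This length bound is fine, but an unsimplified $c$-packed curve may cross $\SphereX{\pntA,\delta}$ arbitrarily many times while its arcs inside the ball are each much shorter than $\eps\delta$; uniform spacing along the restricted portion can then miss short arcs, and inserting one candidate per arc entry (the fix the paper uses as its event type (a) in \lemref{closed:k:set}) would blow up the count. The paper avoids this by placing the candidates on the \emph{simplified} curve $\curveBs$, whose minimum edge length $\sRadius=\eps\delta$ bounds the number of sphere crossings by $O(c/\eps)$ via \lemref{closed:l:e:q}. Your final paragraph implicitly acknowledges that the simplified curves are the right objects, so this is a small imprecision in the middle rather than a structural error; the reduction itself is sound.
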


\begin{lemma}
    Let $\curveA$ be a closed $c$-packed polygonal curve in $\Re^d$,
    and let $\sRadius \leq \delta$ be parameters.  Let $\curveAs =
    \simpX{\curveA, \sRadius}$.  Then the number of edges of
    $\curveAs$ intersecting $\BallX{\pntA,\delta}$ is bounded by $O(c
    \delta/\sRadius)$, for any $\pntA \in \Re^d$.
    
    \lemlab{closed:l:e:q}
\end{lemma}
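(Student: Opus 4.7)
The plan is to reduce this to the already-established fact that simplification preserves packedness (\lemref{6:c:packed}) and then use a length-counting argument against a slightly enlarged ball. Closedness plays no real role here: the simplification algorithm and \lemref{6:c:packed} apply to closed curves as well (starting the simplification at an arbitrary vertex and traversing the curve around), so $\curveAs = \simpX{\curveA,\sRadius}$ is a $6c$-packed closed polygonal curve whose edges all have length at least $\sRadius$ (using the convention of the remark after \algref{simplification:algorithm}).

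The first step is to observe, from $6c$-packedness applied to $\BallX{\pntA, 2\delta}$, that
\begin{align*}
    \lenX{\curveAs \cap \BallX{\pntA, 2\delta}} \leq 6c \cdot 2\delta = 12c\delta.
\end{align*}
The second step is the crucial contribution bound: every edge $\segB$ of $\curveAs$ that intersects $\BallX{\pntA,\delta}$ contributes at least $\sRadius$ to the length of $\curveAs \cap \BallX{\pntA, 2\delta}$. To see this, pick a witness point $p \in \segB \cap \BallX{\pntA,\delta}$ and let $q, r$ be the endpoints of $\segB$. For any point on $\segB$ within distance $\delta$ from $p$ along the segment, the triangle inequality places that point in $\BallX{\pntA, 2\delta}$. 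Hence
\begin{align*}
    \lenX{\segB \cap \BallX{\pntA, 2\delta}} \;\geq\; \min(\delta,\distX{p}{q}) + \min(\delta,\distX{p}{r}).
\end{align*}
A short case analysis on whether $\distX{p}{q}$ and $\distX{p}{r}$ are above or below $\delta$, together with the facts $\distX{p}{q}+\distX{p}{r} = \lenX{\segB} \geq \sRadius$ and $\delta \geq \sRadius$, shows that this lower bound is always at least $\sRadius$.

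Putting the two pieces together, if $k$ denotes the number of edges of $\curveAs$ intersecting $\BallX{\pntA,\delta}$, summing the per-edge contribution bound yields $k\sRadius \leq \lenX{\curveAs \cap \BallX{\pntA, 2\delta}} \leq 12c\delta$, which gives $k = O(c\delta/\sRadius)$ as claimed. The only mildly delicate step is the contribution lemma for a single edge; the rest is bookkeeping. I do not expect any genuine obstacle—this closely mirrors the $\kappa$-bounded analogue in \lemref{kappa:l:e:q} and the packing argument inside the proof of \lemref{complexity:l:e:q}, adapted to count edges rather than close pairs.
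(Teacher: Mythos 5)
Your proof is correct and takes essentially the same approach as the paper: apply $6c$-packedness of $\curveAs$ (\lemref{6:c:packed}) to a slightly enlarged ball and use the fact that each edge of the simplification has length at least $\sRadius$ and hence contributes at least $\sRadius$ to the length inside that ball. The paper uses radius $\sRadius+\delta$ rather than $2\delta$, a cosmetic difference; your explicit per-edge contribution argument via $\min(\delta,\distX{p}{q})+\min(\delta,\distX{p}{r})$ just fills in a step the paper states without elaboration.
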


\begin{proof}
    Consider the ball $\BallC = \BallX{\pntA, r }$ of radius $r=
    \sRadius +\delta$.  Any edge $\segA$ of $\curveAs$ that intersects
    $\BallX{\pntA, \delta}$ has to contribute at least $\sRadius$ to
    the length of the intersection with $\BallC$, as the
    simplification guarantees that every edge of $\curveAs$ is of
    length at least $\sRadius$. Since $\curveAs$ is $6c$-packed, by
    \lemref{6:c:packed}, we have that $\lenX{\BallC \cap \curveAs}
    \leq 6c r$, and the number of intersections of $\curveAs$ with
    $\BallX{\pntA,\delta}$ is $N \leq \lenX{\BallC \cap
       \curveAs}/\sRadius \leq 6cr / \sRadius = 6
    c(\sRadius+\delta)/\sRadius = O(c +c \delta/\sRadius) $, which
    implies the claim.
\end{proof}

\begin{lemma}
    Given two closed $c$-packed polygonal curves $\curveA$ and
    $\curveB$ with a total number of $n$ vertices and parameters
    $\delta$ and $1 > \eps > 0$.  Let $\curveAs = \simpX{\curveA,
       \sRadius}$ and $\curveBs = \simpX{\curveB, \sRadius}$ denote
    the curves simplified with $\sRadius \leq \eps\delta$ and let
    $\pntA$ be a vertex of $\curveAs$.  We can compute a set of points
    $\KSet \subseteq \curveBs$ of size $O(c/\eps)$, in $O(n + c/\eps)$
    time, such that if $\distFr{\curveAs}{\curveBs} \leq \delta$ then
    there exists a pair of reparameterizations of width at most
    $(1+\eps)\delta$ that matches $\pntA$ to an element of $\KSet$.
    
    \lemlab{closed:k:set}
\end{lemma}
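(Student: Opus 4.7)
The plan is to construct $\KSet$ by sampling the portion of $\curveBs$ that lies within the ball $\BallX{\pntA, \delta}$, at arc-length spacing $\eps\delta$ along the curve. By \lemref{6:c:packed}, the simplified curve $\curveBs$ is $6c$-packed, so the total length of $\curveBs \cap \BallX{\pntA, \delta}$ is at most $6c\delta$, and placing samples at arc-length intervals of $\eps\delta$ yields a set $\KSet$ of size $O(c/\eps)$. Algorithmically, first compute $\curveBs$ in $O(n)$ time via \algref{simplification:algorithm}; then scan the edges of $\curveBs$, and for each edge intersecting $\BallX{\pntA, \delta}$, compute the intersection subsegment and distribute sample points along it. By \lemref{closed:l:e:q} there are only $O(c/\eps)$ such edges, so the overall running time is $O(n + c/\eps)$.

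For correctness, suppose $\distFr{\curveAs}{\curveBs} \leq \delta$, witnessed by orientation-preserving reparameterizations $f,g$ of $S^1$ (mapping to $\curveAs$ and $\curveBs$ respectively) such that $\distX{f(t)}{g(t)} \leq \delta$ for all $t \in S^1$. Let $t_0$ be such that $f(t_0) = \pntA$, and set $\pntB^* = g(t_0)$. Then $\distX{\pntA}{\pntB^*} \leq \delta$, so $\pntB^* \in \curveBs \cap \BallX{\pntA,\delta}$, and by construction there exists $\pntB \in \KSet$ at arc-length distance $\Delta \leq \eps\delta$ from $\pntB^*$ along $\curveBs$.

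The key step is to modify $g$ into a reparameterization $g'$ that matches $\pntA$ to $\pntB$. Let $\sigma: \curveBs \to \curveBs$ denote the arc-length shift (along the direction in which $\pntB^*$ and $\pntB$ are closest) that carries $\pntB^*$ to $\pntB$. Since $\curveBs$ is a closed curve, $\sigma$ is an orientation-preserving homeomorphism of $S^1$, so $g' = \sigma \circ g$ is a valid reparameterization. Because arc length bounds Euclidean distance, every point $p \in \curveBs$ satisfies $\distX{p}{\sigma(p)} \leq \Delta \leq \eps\delta$. Hence by the triangle inequality,
\begin{align*}
    \distX{f(t)}{g'(t)} \leq \distX{f(t)}{g(t)} + \distX{g(t)}{g'(t)} \leq \delta + \eps\delta = (1+\eps)\delta,
\end{align*}
for every $t \in S^1$, and $(f,g')$ matches $\pntA = f(t_0)$ to $\pntB = g'(t_0)$.

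The main subtlety to handle carefully is that $\curveBs \cap \BallX{\pntA, \delta}$ can consist of several subsegments as the curve enters and exits the ball; each subsegment must be sampled independently so that every point in the intersection lies within arc length $\eps\delta$ of a sample. The $6c$-packedness bound from \lemref{6:c:packed} still caps the total sample count at $O(c/\eps)$, so this does not affect the final size bound.
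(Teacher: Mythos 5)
Your construction of $\KSet$ is essentially the same as the paper's (place samples along $\curveBs \cap \BallX{\pntA,\delta}$ at arc-length spacing $\eps\delta$, with extra samples at re-entry points, using \lemref{6:c:packed} and \lemref{closed:l:e:q} for the $O(c/\eps)$ size bound), but your correctness argument is genuinely different and, in my view, cleaner. The paper performs local surgery on the witnessing matching: it collapses the arc of $\curveBs$ from $\pntB^*$ to the chosen candidate onto the single point $\pntA$, and collapses the corresponding arc of $\curveAs$ onto the candidate, then verifies the $(1+\eps)\delta$ bound separately on each collapsed arc. You instead leave $f$ and $g$ intact and post-compose $g$ with a global arc-length translation $\sigma$ of the parameter circle by $\Delta \leq \eps\delta$, chosen so that the candidate lands where $\pntB^*$ was. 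Since a rotation of $S^1$ is automatically an orientation-preserving homeomorphism, monotonicity of the new matching is immediate (the paper has to argue this), and since arc length dominates Euclidean distance, every point of $\curveBs$ moves by at most $\Delta \leq \eps\delta$, giving the width bound $(1+\eps)\delta$ by one application of the triangle inequality uniformly over all $t$. The only thing to tidy is notational: $\sigma$ and $g'$ should act on the parameter domain $S^1$ rather than on the image $\curveBs \subseteq \Re^d$ (which may self-intersect), and $\distX{f(t)}{g'(t)}$ should read $\distX{\curveAs(f(t))}{\curveBs(g'(t))}$; the underlying argument is sound.
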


\begin{proof}
    We walk along the curve $\curveBs$ starting from an arbitrary
    point.  If the starting point is in distance $\delta$ from
    $\pntA$, then we add it to the candidate set $\KSet$. As we follow
    along the curve we create a candidate if we %
    \smallskip%
    \begin{compactenum}[~~~(a)]
        \item (re-)enter the ball $\BallX{\pntA, \delta}$, or
        \item have traveled a distance $\eps\delta$ along $\curveBs$
        since the last creation of a candidate, unless we have exited
        the ball $\BallX{\pntA,\delta}$ in the meantime.
    \end{compactenum}
    \smallskip %
    Clearly this takes $O\pth{n + \cardin{ \KSet } }$ time.

    The number of events of type (a) is bounded (up to a factor of 2)
    by the number of intersections of $\curveBs$ with the sphere
    $\SphereX{\pntA,\delta}$, and by \lemref{closed:l:e:q}, this
    number is bounded by $O(c\delta/\sRadius) = O(c/\eps)$.  By
    \lemref{6:c:packed} the simplified curve $\curveBs$ is $6c$-packed
    and therefore the length of its intersection with
    $\BallX{\pntA,\delta}$ is at most $6c\delta$. This implies that we
    can have at most $O(6c\delta/\sRadius) = O(6c/\eps)$ candidates
    that were created at events of type (b).

    \parpic[r]{\includegraphics{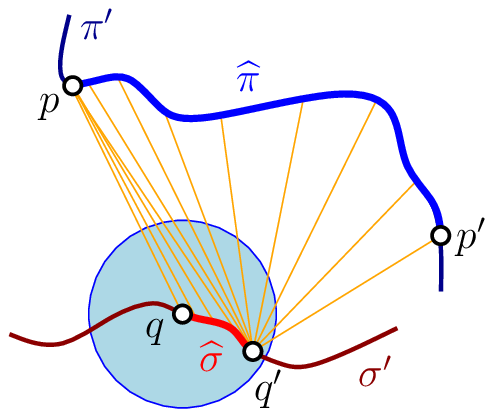}}
    
    Consider reparameterizations of $\curveAs$ and $\curveBs$ of width
    at most $\delta$. Next, consider a point $\pntB \in \curveBs$ that
    is matched to $\pntA \in \curveAs$ by these
    reparameterizations. Observe that $\pntB \in \BallX{\pntA,\delta}$
    and there exists, by construction, a point $\pntB' \in \KSet$ such
    that $\distX{\pntB}{\pntB'} \leq \eps \delta$.  Let $\pntA'$ be a
    point on $\curveAs$ that is matched to $\pntB'$ by the given
    reparameterizations.

    
    
    We match the curve segment $\subcurveB$ between $\pntB$ and
    $\pntB'$ to $\pntA$ and the curve segment $\subcurveA$ between
    $\pntA$ and $\pntA'$ to $\pntB$, see the figure to the right.
    Clearly this preserves the monotonicity of the matching. By the
    triangle inequality, any point on $\subcurveB$ has distance at
    most $(1+\eps)\delta$ to $\pntA$.  Similarly, for any point on
    $\subcurveA$ there is a point on $\subcurveB$ that is in distance
    $\delta$, therefore $\pntB'$ is in distance $(1+\eps)\delta$ of
    $\subcurveA$.
    
    We conclude that the \Frechet distance between $\curveAs$ and
    $\curveBs$ is at most $(1+\eps)\delta$ when restricted to
    reparameterizations matching $\pntA$ to $\pntB'$.
\end{proof}

One can adapt \lemref{decider} to the closed curves case, by
considering the $O(cn/\eps)$ open curves that result from breaking
$\curveBs$ at any point of $\KSet$. The details of the adaption are
straightforward, and we only state the result.

\begin{lemma}
    Given two closed polygonal $c$-packed curves $\curveA$ and
    $\curveB$ with a total of $n$ vertices in $\Re^d$, and parameters
    $\delta$ and $1 > \eps > 0$.  Then, there exists an algorithm
    which, in $\ds O\pth{ (c/\eps)^2n}$ time, correctly outputs one of
    the following:
    \begin{compactenum}[(A)]
        \item If $\distFr{\curveA}{\curveB} \leq \delta$ then the
        algorithm outputs ``$\leq (1+\eps)\delta$''.
        
        \item If $\distFr{\curveA}{\curveB} > (1+\eps)\delta$ then the
        algorithm outputs ``$\distFr{\curveA}{\curveB} > \delta$''.
        
        \item If $\distFr{\curveA}{\curveB} \in \pbrc{\delta,
           (1+\eps)\delta}$ then the algorithm outputs either of the
        above outcomes.
    \end{compactenum}
    
    \lemlab{decider:closed}
\end{lemma}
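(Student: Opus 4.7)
The plan is to reduce the closed-curve decision problem to $O(c/\eps)$ invocations of the open-curve decision procedure. Set $\sRadius = \constA \eps\delta$ for a small constant $\constA>0$, and compute $\curveAs = \simpX{\curveA, \sRadius}$ and $\curveBs = \simpX{\curveB, \sRadius}$ in linear time via \algref{simplification:algorithm}. By \lemref{6:c:packed}, both simplified curves are $6c$-packed. Pick an arbitrary vertex $\pntA$ of $\curveAs$ and run \lemref{closed:k:set} with inputs $(\curveAs, \curveBs)$, distance threshold $\delta+2\sRadius$, and parameter $\constA\eps$, obtaining a candidate set $\KSet \subseteq \curveBs$ of size $O(c/\eps)$ in $O(n + c/\eps)$ time.

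For each $\pntB' \in \KSet$, cut $\curveAs$ at $\pntA$ and $\curveBs$ at $\pntB'$ to obtain a pair of open polygonal curves that start and end at the cut points, and invoke \lemref{graph:fast} on this pair with threshold $\delta'' = (1+\constA\eps)(\delta+2\sRadius)$. Since cutting preserves the image and hence the $6c$-packedness of the curves, \lemref{complexity:l:e:q} bounds the relevant free space complexity of each call by $O(cn/\eps)$, for a total running time of $O((c/\eps)^2 n)$. The algorithm outputs ``$\leq (1+\eps)\delta$'' if any of the calls reports that the open pair has \Frechet distance at most $\delta''$, and ``$\distFr{\curveA}{\curveB} > \delta$'' otherwise.

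Correctness follows from two applications of the triangle inequality together with \lemref{simplification:distance}. If some call succeeds, then the reparameterizations it extracts for the cut open curves can be glued at the common seam $\pntA \leftrightarrow \pntB'$ to yield orientation-preserving homeomorphisms on $S^1$ witnessing $\distFr{\curveAs}{\curveBs} \leq \delta''$, whence $\distFr{\curveA}{\curveB} \leq 2\sRadius + \delta''$; this is at most $(1+\eps)\delta$ provided $\constA$ is chosen sufficiently small (e.g.\ $\constA = 1/10$ works for all $\eps \leq 1$). Conversely, if $\distFr{\curveA}{\curveB} \leq \delta$, then $\distFr{\curveAs}{\curveBs} \leq \delta + 2\sRadius$, so by the guarantee of \lemref{closed:k:set} there exists $\pntB' \in \KSet$ admitting a width-$\delta''$ matching of $\curveAs$ and $\curveBs$ (as closed curves) sending $\pntA$ to $\pntB'$; restricting this matching to the open curves obtained by cutting at $\pntA$ and $\pntB'$ yields orientation-preserving reparameterizations of the same width, so the corresponding call to \lemref{graph:fast} returns a valid monotone path. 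The one bookkeeping point worth checking is that the cut operation interacts correctly with the matching constraint $f(0)=\pntA,\, g(0)=\pntB'$, but this is immediate because $\pntA$ and $\pntB'$ are forced to be the common endpoints of the resulting open parameterizations.
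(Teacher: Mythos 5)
Your proposal is correct and matches the approach the paper sketches (the paper omits the proof, saying only that one adapts \lemref{decider} by considering the open curves that result from breaking $\curveBs$ at the points of $\KSet$). You simply fill in the details the paper declares ``straightforward'': simplify both curves, build $\KSet$ via \lemref{closed:k:set}, and for each $\pntB' \in \KSet$ cut at $\pntA$ and $\pntB'$ and run the open-curve decider, with constants tuned so the accumulated slack stays within $(1+\eps)$; the per-call complexity bound comes from the same charging argument as \lemref{complexity:l:e:q} applied to the $6c$-packed simplified curves.
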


Plugging \lemref{decider:closed} into the algorithm of \thmref{main},
we get the following result.
\begin{theorem}
    Given two closed polygonal $c$-packed curves $\curveA$ and
    $\curveB$ with a total of $n$ vertices in $\Re^d$, and a parameter
    $1 > \eps > 0$, one can $(1+\eps)$-approximate the \Frechet
    distance between $\curveA$ and $\curveB$ in $\ds O\pth{ c^2 n
       \pth{ \eps^{-2} + \log n}}$ time.
    
    \thmlab{main:closed}
\end{theorem}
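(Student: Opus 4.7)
The plan is to reuse the entire algorithmic framework of \thmref{main}, and in particular its improved-running-time version \lemref{improved:running:time}, substituting the open-curves fuzzy decider \deciderFr{} of \lemref{decider:2} by the closed-curves fuzzy decider of \lemref{decider:closed}. All correctness arguments go through verbatim, since they depend only on the fuzzy decider returning one of its three valid outcomes (a $(1+\eps)$-approximation, $\distFr{\curveA}{\curveB}<\delta$, or $\distFr{\curveA}{\curveB}>\delta$), which the closed decider does provide.

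Before invoking the framework I would check that the supporting simplification lemmas survive in the closed setting. \lemref{6:c:packed} is purely local (it bounds the length of the simplified curve inside a ball in terms of the length of the original inside a slightly larger ball), so it still yields that the $\sRadius$-simplification of a closed $c$-packed curve is closed and $6c$-packed. Consequently \lemref{closed:l:e:q} and \lemref{closed:k:set}, and therefore \lemref{decider:closed} itself, apply at every resolution $\sRadius$ that the algorithm considers.

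The running time then splits into two phases mirroring \lemref{improved:running:time}. A coarse phase at precision $\eps_0 = 1/2$ produces a $2$-approximation localizing $\distFr{\curveA}{\curveB}$ to an interval $[\zeta,2\zeta]$; following \approxFr{}, this phase performs $O(\log n)$ decider calls driven by \approxDistances{} and the subsequent binary search, each at cost $O((c/\eps_0)^2 n) = O(c^2 n)$, for a total of $O(c^2 n \log n)$. A refinement phase of $M = \lceil \log_2(1/\eps)\rceil$ rounds then halves both the working interval and the working precision per round, using $O(1)$ closed-decider calls per round. Round $i$ uses $\eps_i = 1/2^i$ and costs $O((c/\eps_i)^2 n) = O(c^2 n\cdot 4^i)$; summing,
\[
    \sum_{i=1}^{M} O\!\left(c^2 n\cdot 4^i\right) = O\!\left(c^2 n \cdot 4^M \right) = O\!\left(\frac{c^2 n}{\eps^2}\right),
\]
since $4^M = \Theta(1/\eps^2)$. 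Combining both phases gives the claimed bound $O\!\left(c^2 n\,(\eps^{-2} + \log n)\right)$.

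The one technical nuisance, and the main obstacle I anticipate, is that the framework also invokes the direct subroutine \approxFrDirect{} (\lemref{direct}), which in the open case constructs the \relevant free space graph explicitly rather than calling a decider. For closed curves I would handle this in one of two interchangeable ways: either enumerate the $O(c/\eps)$ candidate starting points from $\KSet$ of \lemref{closed:k:set} and apply the open-curves \approxFrDirect{} to each, absorbing the extra factor of $c/\eps$ into the $c^2/\eps^2$ term; or simply replace the \approxFrDirect{} call by $O(\log(1/\eps))$ closed-decider invocations at the finest precision. Either variant preserves the asymptotic running time stated above.
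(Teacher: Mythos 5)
Your plan --- swap the closed fuzzy decider of \lemref{decider:closed} into the framework of \thmref{main} through the iterative-halving machinery of \lemref{improved:running:time}, observe that \lemref{6:c:packed} is local and hence applies to closed curves, and sum the geometric series $\sum_i c^2 n\cdot 4^i$ to obtain the $O(c^2 n/\eps^2)$ term --- is exactly the paper's one-sentence proof (``Plugging \lemref{decider:closed} into the algorithm of \thmref{main}'') unpacked, and the running-time bookkeeping checks out. You also correctly put your finger on the one non-trivial step, namely how to adapt \approxFrDirect{}, which is the part the paper itself leaves implicit.

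However, neither of your two fixes for that step works as stated. Option~(b) fails outright: \approxFrDirect{} is called on the middle interval $[\alpha',\beta']$, which may have unbounded spread (since $\beta/\alpha$ is an atomic-interval ratio of the approximate-distance set $\sEvents$), so $O(\log(1/\eps))$ decider calls cannot localize $\distFr{\curveAs}{\curveBs}$ inside it. Option~(a) has a scale-mismatch problem: the candidate set $\KSet$ of \lemref{closed:k:set} is built at one fixed resolution $\delta$, with spacing $\eps\delta$ along $\curveBs$ inside $\BallX{\pntA,\delta}$. Building it once at $\delta=\beta'$ gives spacing $\Theta(\eps\beta')$, so when the true closed distance $\delta^*$ of the simplified curves lies far below $\beta'$, the best candidate cut only guarantees a value of $\delta^*+O(\eps\beta')$ rather than $(1+\eps)\delta^*$; since $\beta'/\alpha'$, and hence $\beta'/\delta^*$, may be arbitrarily large, the approximation guarantee is lost. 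A working repair is to first binary-search over the vertex-edge event radii --- which depend only on the vertices and edges of $\curveAs,\curveBs$ and are therefore independent of where $\curveBs$ is cut --- using the closed decider, bringing the interval down to constant spread; only then can a single candidate set be built at a resolution that is simultaneously fine enough across the entire remaining range. Since this machinery is needed only in the coarse phase at $\eps_0=1/2$ of \lemref{improved:running:time}, its extra logarithmic cost folds into the $c^2 n\log n$ term, and the claimed $O(c^2 n(\eps^{-2}+\log n))$ bound survives.
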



\section{Conclusions}
\seclab{conclusions}

We presented a new approximation algorithm for the \Frechet distance
for polygonal curves in any fixed dimension.  The new algorithm is
surprisingly simple and should be practical. Furthermore it works for
any kind of polygonal curves. Since the algorithm simplifies the
curves to the ``right'' resolution during the execution, we expect the
algorithm to be fast in practice.  The algorithm's analysis relies on
the concept of the \resemblance of curves, which tries to capture the
complexity of the free space diagram when simplification is being
used.

Next, we introduced the $c$-packed family of curves.  While not all
curves are $c$-packed, it seems that most real life curves are
$c$-packed.  The family of $c$-packed curves is closed under
simplification, and the property of a curve being $c$-packed is
independent of the ambient dimension of the space containing the
curve.  We expect this concept to be used to analyze other algorithms
in the future.

In particular, the \resemblance of $c$-packed curves is linear.  We
gave bounds for the \resemblance for several other types of curves,
from low density curves to $\kappa$-bounded curves.  Finally, we also
showed that the algorithm can be modified to handle closed curves
efficiently.


\paragraph{Lower bound.}
Our solution to the decision problem ``beats'' the lower bound of
$\Omega(n \log n)$ \cite{bbkrw-wtd-07}, by a factor of $\log n$ (see
\lemref{decider}).  Since our decision procedure is approximated this
is not too surprising. However, it is enlightening to consider where
this proof breaks for our settings.  Indeed, Buchin \etal
\cite{bbkrw-wtd-07} generate two curves such that the \Frechet
distance might be realized by one vertex of one curve matching the
whole other curve.  On the other hand, in our case, the input model
coupled with simplification, guarantees that the number of segments
matching a single vertex is only a constant.

%


\subsubsection*{Acknowledgments.}

The authors thank Mark d{}e Berg and Marc van Kreveld for insightful
discussions on the problems studied in this paper and related
problems.  The authors would also like to thank the anonymous referees
for their insightful comments.

This research was initiated during a workshop supported by the
Netherlands Organization for Scientific Research (NWO) under
BRICKS/FOCUS grant number 642.065.503.


\bibliographystyle{alpha}
%
\bibliography{frechet}

\appendix

\section{Fatness implies \TPDF{$c$}{c}-packedness}
\apndlab{a:b:covered:c:pack}

We show that the boundary of an $(\alpha,\beta)$-covered shape is
$c$-packed even if the shape does not have a finite descriptive
complexity. A somewhat similar result (which however is too weak to
prove this result) is the packing lemma of d{}e Berg
\cite{d-ibucfo-08} that shows that the boundary of the union of
$\gamma$-fat shapes has low density.  This implies that a connected
component of this boundary has low density.

As mentioned in the introduction, since Koch's snowflake is
$\gamma$-fat, if the finiteness requirement is removed, it follows
that the boundary of $\gamma$-fat shapes with unbounded descriptive
complexity are not $c$-packed, for any finite $c$.

\begin{defn}
    A bounded simply connected region $P$ in the plane is
    \emphi{$(\alpha,\beta)$-covered} if for each point $\pntA
    \in \partial P$, there exists a triangle $T_\pntA$, called a
    \emphi{good triangle} of $\pntA$, such that:
    \begin{inparaenum}[(i)]
        \item $\pntA$ is a vertex of $T_\pntA$,
        \item $T_\pntA \subseteq P$,
        \item all the angles of $T_\pntA$ are at least $\alpha$, and
        \item the length of all the edges of $T_\pntA$ is at least
        $\beta \diameterX{ P}$.
    \end{inparaenum}
\end{defn}

Note, that our definition is different from the standard definition of
$(\alpha,\beta)$-covered shapes, since we do not require that the
region $P$ has a finite descriptive complexity.

\begin{lemma}
    Let $S$ be a set of segments contained inside a disk with radius
    $r$, such that for any point $\pntA$ lying on a segment of $S$,
    there is an infinite cone $\Cone$ of angle at least $\alpha \leq
    \pi $ with an apex at $\pntA$, such that the intersection of the
    interior of $\Cone$ with $S$ is empty. Then, $\lenX{S} \leq 10 \pi
    r /(\alpha \sin(\alpha/4))$.
    
    \lemlab{trivial}
\end{lemma}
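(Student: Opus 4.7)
My plan is to discretize the set of directions, classify points of $S$ by which discretized direction lies strictly inside their empty cone, and bound each class via a one-dimensional projection argument. The crucial ingredient is that the cones are \emph{infinite}, which promotes the emptiness condition into a strong injectivity of projection.

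I would fix $N = \ceil{4\pi/\alpha}$ equally spaced unit directions $\theta_1,\dots,\theta_N$, so the angular spacing is $2\pi/N \le \alpha/2$. For any point $\pntA$ on a segment of $S$, the associated empty cone $\Cone_\pntA$ has aperture $\ge \alpha$ and therefore contains some $\theta_i$ whose angular distance from the boundary of $\Cone_\pntA$ is at least $\alpha/4$: among directions spaced $\alpha/2$ apart, one always falls within $\alpha/4$ of the midpoint of any arc of length $\alpha$. Assigning $\pntA$ to one such $i$ yields a decomposition $S = \bigcup_{i=1}^{N} S_i$.

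Now the key observation: for $\pntA \in S_i$, the open forward ray $\{\pntA + t\theta_i : t>0\}$ lies in the interior of $\Cone_\pntA$ and is hence disjoint from $S$. Consequently, if $\pi_i$ denotes orthogonal projection onto the line perpendicular to $\theta_i$, the restriction $\pi_i|_{S_i}$ is injective --- if two distinct points of $S_i$ shared a $\pi_i$-image, the forward $\theta_i$-ray from the rear one would hit the other, contradicting what we just established. Since $S_i$ sits inside a disk of radius $r$, the image $\pi_i(S_i)$ has one-dimensional measure at most $2r$. Moreover, at any $\pntA \in S_i$ lying on a segment with unit tangent $\pm u$, neither $u$ nor $-u$ can lie in $\Cone_\pntA$ (else the segment would enter the empty interior), so both are at angular distance $\ge \alpha/4$ from $\theta_i$; this forces $\angle(u,\theta_i) \in [\alpha/4,\pi-\alpha/4]$ and hence $|\sin\angle(u,\theta_i)| \ge \sin(\alpha/4)$. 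The Jacobian of $\pi_i$ along $S_i$ is therefore pointwise at least $\sin(\alpha/4)$, so the area formula for an injective map gives $\lenX{S_i}\sin(\alpha/4) \le \lenX{\pi_i(S_i)} \le 2r$.

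Summing over the $N \le 4\pi/\alpha + 1 \le 5\pi/\alpha$ classes (the second inequality using $\alpha \le \pi$), I conclude
\[
\lenX{S} \;\le\; \sum_{i=1}^{N}\lenX{S_i} \;\le\; \frac{2Nr}{\sin(\alpha/4)} \;\le\; \frac{10\pi r}{\alpha \sin(\alpha/4)},
\]
as claimed. The main delicacy is in the very first step: $\theta_i$ must be chosen \emph{strictly} inside $\Cone_\pntA$ with an $\alpha/4$ safety margin from the boundary, since this single choice is what simultaneously underwrites the injectivity (no ray grazes $S$) and the $\sin(\alpha/4)$ lower bound on the projection Jacobian. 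A coarser discretization would collapse that margin and both halves of the argument would degenerate together.
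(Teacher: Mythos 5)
Your proof is correct and follows essentially the same route as the paper: discretize directions into $O(1/\alpha)$ classes, assign each point a discretized direction that sits deep inside its empty cone, and bound the total length per class via an injective projection onto a transversal line, with the $\sin(\alpha/4)$ factor coming from the forced angular separation between the segment tangent and the chosen direction. The paper packages the injectivity step as ``the exposed points lie on the lower envelope'' after rotating the exposed cone to point downward, whereas you argue it directly via the emptiness of the forward ray; these are the same geometric fact stated in two dialects.
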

\begin{proof}
    Let $\Family$ be a family of $\ceil{ 2\pi/ (\alpha/2)}$ cones,
    centered at the origin, such that they cover all directions, and
    each cone has angle $\alpha/2$. Clearly, for any point $\pntA$
    lying on a segment of $S$, there must be a cone $\Cone \in
    \Family$, such that the interior of $\pntA + \Cone$ does not
    intersect $S$. We will say that $\pntA$ is \emphi{exposed} by
    $\Cone$.
    
    \parpic[r]{\includegraphics[scale=0.75]{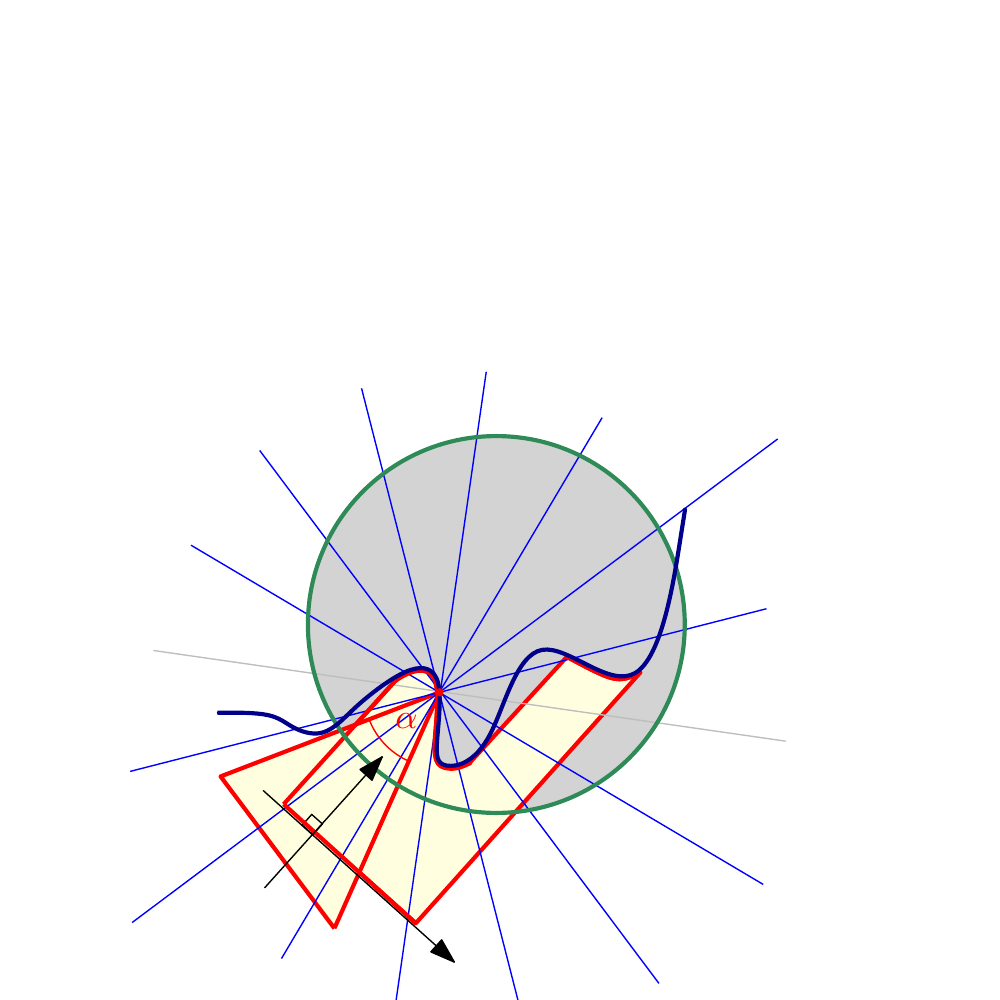}}
    So, fix such a cone $\Cone \in \Family$ and consider the direction
    $\vec{v}$ that splits the angle of $\Cone$ into two. Rotate the
    plane such that $\vec{v}$ is the direction of the negative $y$
    axis, and observe that any point of $S$ that is exposed by (the
    rotated) $\Cone$ lies on the lower envelope of the segments of
    $S$. Furthermore, the segment $\segA \in S$ that contains this
    point must have an angle in the range $(-\pi/2 + \alpha/4, \pi/2 -
    \alpha/4)$ with the positive direction of the $x$-axis (we assume
    $\segA$ is oriented from left to right).
    
    Now, since the projection of $S$ on the $x$-axis has length at
    most $2r$, it follows that the total length of the segments
    exposed by $\Cone$ is at most $2r /\sin(\alpha/4)$.
    
    Hence, the total length of segments of $S$ is bounded by
    \begin{align*}
        \cardin{\Family} \left(\frac{2r}{\sin(\alpha/4)} \right) =
        \left(\frac{4\pi}{\alpha} +
            1\right)\left(\frac{2r}{\sin(\alpha/4)}\right) \leq
        \frac{10 \pi r }{\alpha \sin(\alpha/4)}.
    \end{align*}
    \aftermathA
\end{proof}

\begin{lemma}
    If $P$ is an $(\alpha,\beta)$-covered polygon in the plane then it
    is $c$-packed, for
    \begin{align*}
        \displaystyle c = O\pth{ \frac{1}{\alpha \beta \sin(\alpha/4)
              \tan(\alpha)} }.
    \end{align*}
    \lemlab{fatness}
\end{lemma}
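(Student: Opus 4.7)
The plan is to combine \lemref{trivial} with a scale-dependent covering argument. Set $D = \diameterX{P}$ and $\rho = \beta D \sin(\alpha)/2$, fix an arbitrary disk $\BallX{\pntC, r}$, and let $S = \partial P \cap \BallX{\pntC, r}$. I would split into two cases according to whether $r \leq \rho$ (the small-ball regime, where a single application of \lemref{trivial} suffices) or $r > \rho$ (where I cover and sum).

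For $r \leq \rho$, I would verify the hypotheses of \lemref{trivial} directly. Pick $\pntA \in S$ and consider its good triangle $T_\pntA$. Its angle at $\pntA$ is at least $\alpha$ and its two incident edges have length at least $\beta D \geq 2\rho \geq 2r$, so both of these edges exit $\BallX{\pntC, r}$ through its bounding circle (every point of the ball lies within $2r$ of $\pntA$). The altitude from $\pntA$ to the opposite side of $T_\pntA$ is at least $\beta D \sin(\alpha) = 2\rho \geq 2r$ by the standard altitude formula combined with the angle lower bounds on $T_\pntA$, so that opposite side lies entirely outside the ball. Thus within $\BallX{\pntC, r}$, the triangle $T_\pntA$ coincides with the clipping of an infinite cone of angle $\geq \alpha$ at $\pntA$, and since $T_\pntA \subseteq P$ this cone's interior avoids $\partial P$ inside the ball. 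Because $S$ lies entirely in the ball, extending the cone to infinity introduces no new intersections with $S$, so \lemref{trivial} applies and yields $\lenX{S} \leq 10\pi r / (\alpha \sin(\alpha/4))$.

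For $r > \rho$, I would cover $\BallX{\pntC, \min(r,D)}$ with $O\pth{(\min(r,D)/\rho)^2}$ disks of radius $\rho$ placed on a grid of spacing $\rho$, and apply the small-ball bound to each (when $r > D$, the whole of $\partial P$ sits inside some $\BallX{\pntA_0, D}$ with $\pntA_0 \in \partial P$, so only that sub-ball matters). Summing,
\begin{align*}
    \lenX{S}
    = O\pth{\frac{\min(r,D)^2}{\rho\, \alpha \sin(\alpha/4)}}
    = O\pth{\frac{r}{\alpha\, \beta \sin(\alpha) \sin(\alpha/4)}},
\end{align*}
where the second step uses $\min(r,D) \leq r$, $\min(r,D) \leq D$, and $\rho = \beta D \sin(\alpha)/2$. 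Since a triangle with every angle at least $\alpha$ forces $\alpha \leq \pi/3$ and hence $\cos(\alpha) \geq 1/2$, we have $1/\sin(\alpha) = O(\cos(\alpha)/\sin(\alpha)) = O(1/\tan(\alpha))$, matching the stated bound on $c$.

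The main obstacle is the small-ball case: one must certify that the good triangle truly behaves as an unbounded cone relative to $\BallX{\pntC, r}$. The threshold $\rho = \beta D \sin(\alpha)/2$ is forced by the tighter of the two geometric constraints, namely the altitude lower bound $\beta D \sin(\alpha)$ rather than the edge-length lower bound $\beta D$, and once this scale is fixed, the covering step is essentially bookkeeping.
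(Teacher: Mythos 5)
Your proof is correct and follows essentially the same two-case strategy as the paper's: apply \lemref{trivial} directly for small balls after verifying that each good triangle behaves as a cone inside the ball, and cover larger balls by sub-balls of the threshold radius and sum. The only differences are cosmetic: you choose the threshold $\rho=\beta\,\diameterX{P}\sin(\alpha)/2$ from the altitude bound $\ge\beta\,\diameterX{P}\sin\alpha$ rather than the paper's $(\beta\,\diameterX{P}/2)\tan(\alpha)$ (equivalent up to constants since $\alpha\le\pi/3$), and you spell out the check that both incident edges of the good triangle exit the ball while the opposite edge lies entirely beyond it, which the paper states without detail.
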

\begin{proof}
    Let $S = \partial{P}$, and consider any disk $D$ of radius $r$ in
    the plane. Observe that the height of a good triangle is at least
    $\rho = (s/2) \tan(\alpha)$, for $s=\beta \diameterX{P}$, and this
    also bounds the distance of any vertex of a good triangle to its
    facing edge.  If $r \leq \rho/2$, then any good triangle for a
    point of $S$ behaves like a cone as far as $S \cap D$ is
    concerned, and \lemref{trivial} implies that $\lenX{S \cap D} \leq
    10 \pi r /(\alpha \sin(\alpha/4))$ as desired.
    
    If $r \leq \diameterX{P}$, then cover $D$ by $m = (2\sqrt{2}
    r/\rho +1)^2$ disks of radius $\rho/2$. Clearly, for each such
    disk, the total length of segments of $S$ inside it, by
    \lemref{trivial}, is at most $5 \pi \rho /(\alpha
    \sin(\alpha/4))$. Therefore the total length of $S$ inside $D$ is
    \begin{align*}
        \frac{5 \pi \rho}{(\alpha \sin(\alpha/4))}
        \pth{\frac{2\sqrt{2} r}{\rho} +1}^2 \leq \frac{ 160
           \pi}{\alpha \sin(\alpha/4)} \cdot \frac{r}{\rho} \cdot r
        \leq \frac{320 \pi}{\alpha \beta \sin(\alpha/4) \tan(\alpha)}
        r.
    \end{align*}
    
    Observe that the total length of $\partial P$ is bounded by the
    above bound, by taking $D$ to be a disk of radius $r =
    \diameterX{P}$ centered at some point of $P$. Therefore the claim
    trivially holds in the case $r \geq \diameterX{P}$.
\end{proof}

\section{A proof of \lemref{quadtree:decomposition}}
\apndlab{helper}

\begin{lemma}
    Let $\PntSet$ be a set of $n$ points in $\Re^d$, contained inside
    a hypercube $\Cube$. Then one can cover $\Cube$ by a set of cubes
    $\CubeSet$, such that the following properties hold.
    \begin{compactenum}[(A)]
        \item $\bigcup \CubeSet = \Cube$.
        \item $\cardin{\CubeSet} \leq 2^{d+1} dn$.
        \item Each $\pntA \in \Cube$ is covered by at most $2^d$
        cubes.
        \item Each cube contains at most one point from $\PntSet$.
    \end{compactenum}
    \lemlab{quadtree:points}
\end{lemma}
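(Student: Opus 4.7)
I would prove this by induction on $n$ via a compressed-quadtree-style recursive construction. The base cases $n = 0$ and $n = 1$ are trivial: set $\CubeSet = \brc{\Cube}$, which satisfies all four properties since $\Cube$ contains at most one point of $\PntSet$ and is covered exactly once by itself.

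For the inductive step $n \geq 2$, I would identify the smallest cube $\Cube'$ in the standard $2^d$-ary hierarchical subdivision of $\Cube$ that still contains all of $\PntSet$. Since $n \geq 2$, splitting $\Cube'$ once more into its $2^d$ quadrant children distributes $\PntSet$ into at least two non-empty pieces. The recursive step then (i) covers the annular region $\Cube \setminus \Cube'$ by a bounded collection of cubes contained in $\Cube$ but avoiding the interior of $\Cube'$, so these cubes contain no points of $\PntSet$, and (ii) recurses on each non-empty quadrant child of $\Cube'$, adding any empty quadrant children to $\CubeSet$ directly as singleton cubes.

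For the counting bound, the recursion tree has at most $n - 1$ branching nodes, since each such node produces at least two non-trivial subproblems by the choice of $\Cube'$, and hence the tree behaves like a binary tree with $n$ labeled leaves. At each branching node I add at most $2^d$ cubes for the quadrant children plus the cubes used to cover the annulus. Property (C) follows because at any single recursive level the cubes added for the annulus have pairwise-disjoint interiors (they lie in the nested ``shell'' between $\Cube$ and $\Cube'$), and the sub-cube recursions take place inside pairwise-disjoint regions; composing over the depth of the recursion yields the factor $2^d$.

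The main technical obstacle is step (i): the annulus $\Cube \setminus \Cube'$ can be arbitrarily deep in the $2^d$-ary hierarchy when the points of $\PntSet$ are extremely close relative to the scale of $\Cube$, so naively using the $2^d - 1$ sibling cubes at each intermediate level produces an unbounded number of fillers. The resolution is to exploit that all these siblings are empty of $\PntSet$ and to consolidate them, using cubes of geometrically increasing sizes along each of the $d$ coordinate directions, into a total of $O(d \cdot 2^d)$ cubes per recursive invocation independent of the nesting depth. Summing this per-node cost over the at most $n$ branching nodes of the recursion tree then yields $\cardin{\CubeSet} \leq 2^{d+1} dn$, as claimed.
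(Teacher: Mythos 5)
Your proposal takes a genuinely different route from the paper. The paper builds a \emph{reduced} quadtree in which, when all points fall in one child, one splitting hyperplane at a time is slid inward until a point lies on it; this is a ``reduced split,'' and since a point can lie on a splitting hyperplane in at most $d$ distinct coordinate directions, at most $d$ reduced splits separate consecutive proper splits, giving the $dn$ count directly. You instead use a \emph{compressed} quadtree: jump straight to the smallest dyadic cube $\Cube'$ containing all the points, fill in the annulus $\Cube\setminus\Cube'$ with extra cubes, and recurse on the children of $\Cube'$. This is a legitimate alternative decomposition, and the branching-node count ($\leq n-1$) is correct.

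However, there is a genuine gap, and you have in fact identified it yourself: the annulus-covering step. You assert that $\Cube\setminus\Cube'$ can be covered by $O(d\cdot 2^d)$ cubes \emph{independent of the nesting depth}, but you give only a one-sentence hint, and that hint (``cubes of geometrically increasing sizes along each coordinate direction'') points in the wrong direction. A cover by cubes whose sides grow geometrically from $\eps$ (the side of $\Cube'$) up to $\Theta(1)$ uses $\Theta(\log(1/\eps))$ cubes, not $O(1)$; that would give a bound depending on the spread of $\PntSet$, which is strictly weaker than what the lemma claims. A depth-independent cover of the annulus does exist, but it has a different structure: the only region forced to use $\eps$-scale cubes is the tiny ``column'' through $\Cube'$ in each slab direction (of cross-section $\eps^{d-1}$), while the remaining bulk of each slab can be covered by $\Theta(1)$-scale cubes that sidestep $\Cube'$ by lying entirely on one side of one of its bounding hyperplanes in a different coordinate. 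Establishing this carefully, including the multiplicity bound, is exactly the nontrivial content you would need to supply. In addition, your multiplicity argument asserts that the annulus cubes ``have pairwise-disjoint interiors''; for any bounded-count cover of the annulus that is false in general (a cube with a small hole does not partition into $O_d(1)$ cubes), so the factor-$2^d$ coverage bound is not actually justified by the argument given. In contrast, the paper sidesteps both issues: the reduced split replaces the arbitrary-depth annulus by a single overlapping layer of $2^d$ subcubes per step, and the claim that only point-free subcubes overlap one another gives the $2^d$ multiplicity bound immediately.
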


\begin{proof}
    We can use the following algorithm to construct a $d$-dimensional
    reduced quadtree, of which the set of cubes corresponding to the
    leaf nodes satisfies the requirements for $\CubeSet$.
    
    \parpic[r]{\includegraphics[scale=0.75]{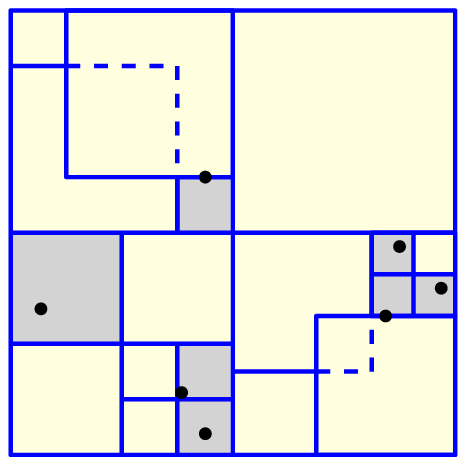}}
    Take $\Cube$ as the root node.  Split the current node recursively
    into $2^d$ subcubes, until there is only one point left in the
    current node, while abiding to the following rule.
    
    In each step, either (A) do a proper quadtree split if at least
    two of the immediately resulting subcubes contain a point of
    $\PntSet$, or (B) perform a \emphi{reduced split} otherwise, such
    that all points are contained in exactly one minimal subcube.  A
    reduced split is formed by allowing the cubes the cube to overlap,
    by shrinking one of the $2^d$ subcubes containing the points, and
    enlarging all the others.  Such a reduced split is depicted in the
    figure to the right. However, we can assure that only those
    subcubes will overlap that do not contain any point of $\PntSet$
    and will therefore not be split further. Clearly each point in the
    covered area is covered by at most $2^d$ leaf nodes.
    
    A split of type (A) separates the set of points into at least two
    non-empty subsets. A split of type (B) results in a point on the
    splitting plane. Both events can happen at most $dn$ times and
    produce each at most $2^d$ extra nodes. Therefore the size of
    $\CubeSet$ is bounded by $2^{d+1} dn$.
\end{proof}

\begin{fake_proof}
    \emph{Proof of \lemref{quadtree:decomposition}:} This follows
    directly from \lemref{quadtree:points}.  Indeed, for every edge of
    $\curveA$ add the corners of the axis parallel cube containing it
    to a set of points $\PntSet$.  Next, consider the respective
    quadtree construction of \lemref{quadtree:points} for
    $\PntSet\subseteq \Cube$.  The cover uses at most $ 2^{d+1} m $
    boxes, where $m \leq 2^d n = \cardin{\PntSet}$.
    
    Consider a cube $\Cube'$ in the resulting decomposition of
    $\Cube$, and an edge $\segA$ of $\curveA$ that intersects it. If
    the length of $\segA$ is shorter than the sidelength of $\Cube'$,
    then one of the corners of the bounding cube of $\segA$ must be in
    $\Cube'$, and $\Cube'$ cannot be a leaf of the quadtree. This
    implies that $\Cube'$ can be intersected only by edges that are at
    least as long as its sidelength.
    
    By the low density property of $\curveA$ and by
    \clmref{intersections}, $\Cube'$ can intersect at most $\cDim
    \phi$ edges of $\curveA$, which implies the lemma.
\end{fake_proof}

\end{document}